\documentclass[reqno]{amsart}
\usepackage[margin=1.2in]{geometry}
\usepackage{enumerate}
\usepackage{enumitem}
\usepackage{amsthm,amsmath,amsfonts,amssymb,amsxtra,appendix,bookmark,euscript,delarray,dsfont,mathrsfs}
\usepackage{enumerate}
\usepackage{amssymb}
\usepackage{xcolor}
\colorlet{RED}{red}
\usepackage{graphicx}
\usepackage{upgreek}
\usepackage[normalem]{ulem}
\usepackage{ stmaryrd }
\usepackage{array}
\usepackage{longtable}
\newcolumntype{L}[1]{>{\raggedright\arraybackslash}p{#1}}
\usepackage{url}
\usepackage{needspace}
\usepackage{verbatim}

\allowdisplaybreaks[1]

\usepackage{tikz}
\usetikzlibrary{decorations.pathreplacing, calc}

\theoremstyle{plain}
\newtheorem{theorem}{Theorem}[section]
\newtheorem{lemma}[theorem]{Lemma}
\newtheorem{remark}[theorem]{Remark}
\newtheorem{definition}[theorem]{Definition}
\newtheorem{proposition}[theorem]{Proposition}

\newtheorem{corollary}[theorem]{Corollary}
\newtheorem{assumption}{Assumption}[section]

\numberwithin{equation}{section}

\renewcommand{\d}{\mathrm{d}}
\newcommand{\dd}{\,\mathrm{d}}
\newcommand{\ii}{\mathrm{i}}

\DeclareMathOperator{\tr}{Tr}
\DeclareMathOperator{\Law}{Law}
\DeclareMathOperator{\Ran}{Ran}

\renewcommand{\leq}{\leqslant}
\renewcommand{\geq}{\geqslant}

\def\eps{\varepsilon}
\def\nn{\nonumber}
\renewcommand{\to}{\rightarrow}
\renewcommand{\phi}{\varphi}

\newcommand{\norm}[1]{\left\lVert #1 \right\rVert}
\newcommand{\abs}[1]{\left\lvert#1\right\rvert}
\renewcommand{\phi}{\varphi}
\newcommand{\ket}[1]{|#1 \rangle}
\newcommand{\bra}[1]{\langle #1 |}
\newcommand		{\lt}			{\left}
\newcommand		{\rt}			{\right}
\newcommand		{\bangle}[1]	{\lt\langle #1\rt\rangle}
\newcommand		{\inprod}[2]	{\bangle{#1, #2}}

\newcommand\R{{\ensuremath {\mathbb R} }}
\newcommand{\CC}{{\ensuremath {\mathbb C} }}
\newcommand\Z{{\ensuremath {\mathbb Z} }}
\newcommand\1{{\ensuremath {\mathds 1} }}
\newcommand{\gH}{\mathfrak{H}}
\newcommand{\dG}{\d\Gamma}
\newcommand{\gX}{\mathfrak{X}}
\newcommand{\gF}{\mathfrak{F}}
\newcommand{\bH}{\mathbb{H}}
\newcommand{\bW}{\mathbb{W}}
\newcommand{\gS}{\mathfrak{S}}
\newcommand{\cN}{\mathcal{N}}

\newcommand{\cH}{\mathcal{H}}
\newcommand{\cM}{\mathcal{M}}

\newcommand{\cW}{\mathcal{W}}
\newcommand{\bT}{\mathbb{T}}

\newcommand{\cl}{\mathrm{cl}}
\newcommand{\ren}{\mathrm{ren}}

\colorlet{red}{black}

\title{An upper-symbol approach to $\Phi^4_d$ limits of Bose gases}
\author[H. Liang]{Hao Liang}
\address{School of Mathematical Sciences, Peking University, Beijing, 100871, China}
\email{leunghao@stu.pku.edu.cn}
\date{}

\begin{document}
\begin{abstract}
We study the high-temperature derivation of $\Phi^4_d$ measures from grand-canonical Bose gases on two- and three-dimensional tori. Building on the upper-symbol variational approach of the author's earlier work \cite{Cub26}, we use relative entropy to give a simple, unified proof of relative free-energy convergence and Hilbert--Schmidt convergence of all fixed-order rescaled reduced density matrices to their Hartree counterparts. For interaction ranges $\varepsilon=\lambda^\eta$, where $\lambda$ is the inverse temperature, these results hold for every fixed $0<\eta<1/2$ in two dimensions and $0<\eta<1/26$ in three. In two dimensions, we improve the result of Jougla and Rougerie \cite{JR26} and the local $\Phi^4_2$ derivation of Fr\"ohlich, Knowles, Schlein and Sohinger \cite{FKSS23}. Combined with classical approximation, the two-dimensional result gives the first local $\Phi^4_2$ derivation valid for every fixed exponent $0<\eta<1/2$, allowing polynomial exponents arbitrarily close to the diluteness threshold from below. \textcolor{red}{In three dimensions, we use the identification of the local measure and the stationary SPDE estimates of Nam, R.~Zhu and X.~Zhu \cite{NZZ25}. From these estimates we strengthen their classical correlation convergence to the Hilbert--Schmidt norm. This yields the local limit at every fixed order in the same polynomial range.}
\end{abstract}
\subjclass[2020]{81V70 (Primary) 35Q55, 35Q40, 81P16 (Secondary)}
\keywords{Quantum Gibbs states, $\Phi^4_2$ theory, $\Phi^4_3$ theory, relative entropy, mean-field limit.}
\maketitle

\tableofcontents
\enlargethispage{4pt}

\section{Introduction}

We study the $\Phi^4_d$ theory of a complex scalar field on a domain $\Lambda\subset\R^d$. Formally, the theory is described by the probability measure
\begin{equation}\label{eq:formal-Gibbs}
 \dd\nu(u)=\frac1c e^{-S(u)}\,\mathrm D u
\end{equation}
on fields $u:\Lambda\to\CC$, where $\mathrm D u=\prod_{x\in\Lambda}\dd u(x)$ denotes the formal Lebesgue measure and
\begin{equation}\label{eq:formal-action}
 S(u)
 =
 \int_\Lambda
 \left(|\nabla u(x)|^2+m_0|u(x)|^2\right)\dd x
 +
 \frac12\int_\Lambda |u(x)|^4\dd x .
\end{equation}
Here $m_0\in\R$ is the mass parameter, and $|\cdot|$ denotes the Euclidean norm. The expression \eqref{eq:formal-Gibbs} is formal, and a rigorous construction requires renormalization. Throughout this paper, $d\in\{2,3\}$ and $\Lambda=\bT^d=(\R/2\pi\Z)^d$.

Nonlinear Gibbs measures arise naturally in constructive quantum field theory, see, for example \cite{nelson1973construction,nelson2007probability,guerra1975p,simon2015p}. In suitable settings, nonlinear Gibbs measures are invariant under the flows of nonlinear Schrödinger equations \cite{bourgain1994periodic,bourgain1996invariant,bourgain1997invariant,bourgain2000invariant}.  They have also been used to construct global solutions with rough random initial data; see, for example, \cite{burq2008randomA,burq2008randomB,bourgain2014almostA,bourgain2014almostB,bringmann2024invariant,deng2012two,deng2021invariant,deng2024invariant}.

Our aim is to derive these field measures from grand-canonical quantum Gibbs states of an interacting Bose gas on $\bT^d$. This program began with \cite{LNR15} and has since developed across different dimensions, interactions and confining geometries \cite{frohlich2017gibbs,lewin2018gibbs,LNR21,frohlich2022mean,sohinger2022microscopic}. Subsequent advances include local and singular field limits \cite{FKSS23,NZZ25,JR26,caraci2026euclidean,nam2026derivation,NYZ26}, focusing nonlinearities \cite{lu2026derivation,farhatPerturbativeMicroscopicDerivation2026}, and nonlocal three-body interactions \cite{Cub26}.

Related developments include the derivation of fixed-mass Gibbs measures from one-dimensional canonical ensembles \cite{dinh2024bosonic} and of time-dependent correlation functions for the one-dimensional NLS \cite{frohlich2019microscopic} and the renormalized Hartree NLS in two and three dimensions \cite{nam2026dynamical}. Variational, perturbative, functional-integral and stochastic methods provide complementary approaches to these limits. Let $ \gF=\bigoplus_{n\geq0}L^2_s((\bT^d)^n) $ be the bosonic Fock space, where $L^2_s$ denotes the subspace of symmetric functions. We consider Gibbs states on the bosonic Fock space proportional to $e^{-\bH_\lambda}$, with the Hamiltonian
\begin{align}
 \bH_\lambda
 &=
 0\oplus\lambda(-\Delta-\vartheta)
 \oplus
 \bigoplus_{n\geq2}
 \Big(
 \lambda\sum_{j=1}^n(-\Delta_{x_j}-\vartheta)
 +
 \lambda^2\sum_{1\leq i<j\leq n}
 w_\varepsilon(x_i-x_j)
 \Big)\nn\\
 &=
 \lambda\int_{\bT^d}
 a_x^*(-\Delta_x-\vartheta)a_x\dd x
 +
 \frac{\lambda^2}{2}
 \iint_{(\bT^d)^2}
 w_\varepsilon(x-y)a_x^*a_y^*a_xa_y\dd x\,\dd y .
 \label{eq:quantum-Hamiltonian-chemical-potential}
\end{align}
Here the parameter $\vartheta=\vartheta_{\lambda,\varepsilon}\in\R$ is the chemical potential. The real, even potential $w_\varepsilon$ is a regularized approximation of the Dirac delta function, with
$$
 \int_{\bT^d}w_\varepsilon(x)\dd x=1,
 \qquad
 w_\varepsilon(0)<\infty.
$$
We let $\lambda\to0^+$ and choose $\varepsilon=\varepsilon(\lambda)\to0^+$ so that $w_\varepsilon\to\delta_0$ in the sense of distributions.

The scaled operators $\sqrt\lambda\,a_x$ and $\sqrt\lambda\,a_x^*$ are the quantum counterparts of $u(x)$ and $\overline{u(x)}$. Their canonical commutation relations (CCR) are
\begin{equation}
 [\sqrt\lambda\,a_x,\sqrt\lambda\,a_y]
 =
 [\sqrt\lambda\,a_x^*,\sqrt\lambda\,a_y^*]=0,\quad
 [\sqrt\lambda\,a_x,\sqrt\lambda\,a_y^*]
 =
 \lambda\delta(x-y)\longrightarrow0.
 \label{eq:scaled-CCR}
\end{equation}
The relations \eqref{eq:scaled-CCR} suggest a classical field limit in which the Hamiltonian is represented by the action $S$ in \eqref{eq:formal-action}, and the quantum Gibbs states converge to the corresponding field measure, with the chemical potential and scalar counterterms chosen consistently with the required renormalization.

Before formulating the derivation problem, we explain the construction of the classical measure. To illustrate the renormalization, take $m_0=1$. The kinetic and mass terms are absorbed into the centered complex Gaussian measure $\mu_0$ with covariance $(-\Delta+\1)^{-1}$, formally written as
$$
 \dd\mu_0(u)
 =
 \frac1{\widetilde c}
 \exp\left[
 -\int_{\bT^d}
 \left(|\nabla u(x)|^2+|u(x)|^2\right)\dd x
 \right]\mathrm D u .
$$
The remaining task is to give a meaning to the nonlinear part. In two dimensions, Wick renormalization yields a Gibbs density with respect to $\mu_0$. In three dimensions, the local $\Phi^4_3$ measure is singular with respect to $\mu_0$, and Wick renormalization alone is insufficient.

At each positive interaction range $\varepsilon$, however, the corresponding Hartree Gibbs measure can be constructed as a density with respect to $\mu_0$ in both dimensions. This provides a natural classical counterpart to the quantum gas with interaction $w_\varepsilon$. We consider the renormalized Hartree interaction
$$
 \cW_{0,\varepsilon}[u]
 =
 \frac12\iint_{(\bT^d)^2}
 \big(:|u(x)|^2:\,w_\varepsilon(x-y)\,:|u(y)|^2:\big)\d x\d y
 -
 \alpha_{0,\varepsilon}
 \int_{\bT^d}:|u(x)|^2:\d x
 -
 \beta_{0,\varepsilon},
$$
where the mass and scalar counterterms $\alpha_{0,\varepsilon},\beta_{0,\varepsilon}$ are chosen according to the dimension. The colons denote Wick ordering with respect to $\mu_0$ and apply separately to the two density factors. Formally,
$$
 :|u(x)|^2:
 =
 |u(x)|^2
 -
 \left\langle|u(x)|^2\right\rangle_{\mu_0}.
$$
Under suitable assumptions specified later,
$$
 z_{0,\varepsilon}
 =
 \int e^{-\cW_{0,\varepsilon}[u]}\dd\mu_0(u),
 \qquad
 \dd\nu_\varepsilon(u)
 =
 \frac{1}{z_{0,\varepsilon}}
 e^{-\cW_{0,\varepsilon}[u]}\dd\mu_0(u)
$$
defines a probability measure, and $z_{0,\eps}$ is called the classical partition function. With the appropriate counterterms, these measures approximate the local $\Phi^4_d$ measure as $\varepsilon\to0$.

\textcolor{red}{ In three dimensions, Nam, R.~Zhu and X.~Zhu \cite{NZZ25} study the nonlocal stochastic-quantization equation associated with the Hartree measures. Their paracontrolled analysis determines the mass renormalization beyond Wick ordering, including the logarithmically divergent counterterm and the finite correction to the limiting mass. They prove weak convergence of the Hartree measures and distributional convergence of their correlations to the local $\Phi^4_3$ measure \cite[Theorem~2.6]{NZZ25}. We use their construction and identification of this measure. Their stationary estimates also allow us to obtain correlation bounds, from which we prove Hilbert--Schmidt convergence of the Hartree correlations in Section~\ref{sec:derivation-Phi4d}.}

Guided by the quantum--classical correspondence, we center the quantum density with respect to the free gas and define
$$
 \bW_{\lambda,\varepsilon}^{\ren}
 =
 \frac{\lambda^2}{2}
 \iint_{(\bT^d)^2}
 (a_x^*a_x-\rho_0)
 w_\varepsilon(x-y)
 (a_y^*a_y-\rho_0)
 \dd x\d y
 -
 \alpha_{\lambda,\varepsilon}\lambda
 \int_{\bT^d}(a_x^*a_x-\rho_0)\dd x
 -
 \beta_{\lambda,\varepsilon},
$$
where $\rho_0=\rho_0(\lambda)$ is the free quantum particle density,
\begin{equation}\label{eq:free-particle-density}
 \rho_0
 =
 \frac1{(2\pi)^d}
 \sum_{n\in\Z^d}
 \frac1{e^{\lambda(1+|n|^2)}-1}.
\end{equation}
The quantum counterterms $\alpha_{\lambda,\varepsilon},\beta_{\lambda,\varepsilon}$ are chosen in accordance with their classical counterparts.

Using the CCR \eqref{eq:scaled-CCR}, we can write the Hamiltonian \eqref{eq:quantum-Hamiltonian-chemical-potential} as $ \bH_\lambda=\lambda\dG(h)+\bW^{{\rm ren}}_\lambda $, up to a constant shift, by choosing the chemical potential
$$
 \vartheta_{\lambda,\varepsilon}
 =
 \alpha_{\lambda,\varepsilon}-1
 +\lambda\rho_0
 -\frac{\lambda}{2}w_\varepsilon(0)
$$
Here $h=-\Delta+\1$ and $\dG(h)=\int_{\bT^d}\d x\, a_x^* h_x a_x$. We define the Gibbs state and its partition function by
$$
 Z_{\lambda,\varepsilon}
 :=
 \tr_{\gF}\big(
 e^{-\lambda\dG(h)-\bW_{\lambda,\varepsilon}^{\ren}}\big),\qquad
 \Gamma_{\lambda,\varepsilon}
 :=
 \frac{1}{Z_{\lambda,\varepsilon}}
 e^{-\lambda\dG(h)-\bW_{\lambda,\varepsilon}^{\ren}}.
$$

We compare $\Gamma_{\lambda,\varepsilon}$ with the Hartree measure $\nu_\varepsilon$ at the same interaction range. The comparison concerns both the relative free energy and the reduced density matrices. We write
$$
 Z_{\rm free}=\tr_{\gF}\big(e^{-\lambda\dG(h)}\big),
$$
for the free partition function and use the convention
$$
 \Gamma_{\lambda,\varepsilon}^{(k)}
 =
 \sum_{n\geq k}\binom nk
 \tr_{k+1,\ldots,n}(\Gamma_{\lambda,\varepsilon,n}),\qquad
 \gamma_{\nu_\varepsilon}^{(k)}
 :=
 \int
 |u^{\otimes k}\rangle\langle u^{\otimes k}|
 \dd\nu_\varepsilon(u),
$$
where $\Gamma_{\lambda,\varepsilon,n}$ is the restriction to the $n$-particle sector. 

\textcolor{red}{We first compare the quantum Gibbs states with the Hartree measures. Under the dimension-dependent assumptions and counterterms specified below, we prove, for $\varepsilon=\varepsilon(\lambda)\to0$, convergence of the relative free energy}
$$
 \boxed{
 \Big|
 \log\frac{Z_{\lambda,\varepsilon}}{Z_{\rm free}}
 -
 \log z_{0,\varepsilon}
 \Big|
 \longrightarrow0,}
$$
and of the reduced density matrices in the Hilbert--Schmidt norm
$$
 \boxed{
 \Big\|k!\lambda^k\Gamma_{\lambda,\varepsilon}^{(k)}
 -\gamma_{\nu_\varepsilon}^{(k)}\Big\|_{\gS^2}
 \longrightarrow0
 \qquad\text{for every fixed }k\geq1.}
$$
\textcolor{red}{The local limits are then obtained in Corollaries~\ref{cor:Phi42-local} and~\ref{cor:Phi43-local-HS}: the rescaled reduced density matrices converge in the Hilbert--Schmidt norm to the correlation operators of the local $\Phi^4_d$ measures. In dimension two, we also obtain the corresponding local relative free-energy limit.}

A natural question is how fast $\varepsilon$ can shrink as $\lambda\to0$ while preserving these limits. A natural goal is to approach the dilute regime, where the interaction range is much smaller than the typical interparticle distance \cite[(1.11)]{JR26}. In the present high-temperature scaling, the free density $\rho_0$ in \eqref{eq:free-particle-density} satisfies
$$
 \rho_0^{-1/d}\asymp
 \begin{cases}
 \lambda^{1/2}(\log(1/\lambda))^{-1/2},&d=2,\\
 \lambda^{1/2},&d=3,
 \end{cases}
$$
by \cite[Lemma~B.1]{LNR21}. For fixed power laws $\varepsilon=\lambda^\eta$, diluteness at this density scale, $\varepsilon\ll\rho_0^{-1/d}$, corresponds to $\eta>1/2$. The threshold exponent is thus the same in both dimensions, with a logarithmic correction to the reference length in two dimensions.

The first derivation of the local $\Phi^4_2$ measure from an interacting Bose gas \cite{FKSS23} established convergence of the relative partition function and correlation functions for
$$
 \varepsilon\geq
 \exp\big[-(\log(1/\lambda))^{1/2-c}\big],
 \qquad 0<c<1/2.
$$
This requires $\varepsilon$ to shrink more slowly than every fixed positive power of $\lambda$.

\textcolor{red}{The derivation of $\Phi^4_3$ in \cite{NZZ25} allows polynomial ranges $\varepsilon\geq\lambda^\eta$ for sufficiently small fixed $\eta>0$. In this regime, their quantum-to-Hartree comparison gives relative free-energy convergence and convergence of the first six reduced density matrices against finite Fourier tests \cite[Theorem~2.8]{NZZ25}. Under a stronger logarithmic restriction on the interaction range, their correlation convergence holds at every fixed order. Their two-dimensional argument also covers every fixed order. For the two-dimensional relative free energy, \cite{JR26} establishes the explicit range $\varepsilon=\lambda^\eta$, $0<\eta<1/24$, by refining the variational method of \cite{LNR21}.}

The trapped two-dimensional problem was treated in \cite{caraci2026euclidean} under a stretched-logarithmic range condition. More recently, \cite{NYZ26} derived general defocusing $\mathcal P(\Phi)_2$ measures, including the quartic case, for $\varepsilon\geq\lambda^\eta$ with sufficiently small $\eta>0$, proving relative free-energy convergence and Hilbert--Schmidt convergence of all fixed-order reduced density matrices.

\textcolor{red}{We extend the upper-symbol variational approach of the author's earlier work \cite{Cub26} to shrinking interaction ranges. Together with relative entropy and kernel estimates, this gives a simple, unified proof of the quantum-to-Hartree limits in Theorems~\ref{thm:Phi42} and~\ref{thm:Phi43}, for every fixed $0<\eta<1/2$ in two dimensions and $0<\eta<1/26$ in three. The same ranges apply to the local correlation limits in Corollaries~\ref{cor:Phi42-local} and~\ref{cor:Phi43-local-HS}. The two-dimensional result allows exponents arbitrarily close to the diluteness threshold from below. In three dimensions, we obtain Hilbert--Schmidt convergence at every fixed order in both the quantum-to-Hartree comparison and the subsequent Hartree-to-$\Phi^4_3$ limit. The latter uses the classical results and stationary estimates of \cite{NZZ25}. We retain the exponent $1/26$ for a unified treatment of both dimensions, without optimizing the three-dimensional estimates.}

\textcolor{red}{ For reduced density matrices, our kernel comparison and weighted correlation estimates reduce the quantum a priori bounds to uniform bounds on $\|\gamma_{\nu_\varepsilon}^{(k)}\|_{\gS^2}$ for each fixed $k$. In two dimensions, the classical approximation gives uniform $L^2(\mu_0)$ bounds on the Radon--Nikodym derivatives $\dd\nu_\varepsilon/\dd\mu_0$, and the Gaussian correlation estimate gives the required operator bounds. In three dimensions, the local measure is singular with respect to $\mu_0$, so $\|\dd\nu_\varepsilon/\dd\mu_0\|_{L^p(\mu_0)}\to\infty$ for every fixed $p>1$. Using only our elementary density bounds, with $p$ approaching $1$, gives losses in $\varepsilon$ that increase with the correlation order; for a fixed polynomial scaling, this estimate therefore yields convergence only up to a finite order. Instead, we use the stationary coupling and remainder estimates of \cite{NZZ25} to prove the correlation bounds in Proposition~\ref{prop:classical-Hartree-correlations}. These estimates serve two purposes: the case $\delta=0$ provides the bounds needed for Theorem~\ref{thm:Phi43}, while $\delta>0$ gives uniform Fourier tails and strengthens the classical correlation convergence to the Hilbert--Schmidt norm.}

\medskip

\subsection{Organization of the Paper.}

The remainder of this paper is organized as follows. In Section~\ref{sec:setup-results}, we introduce the classical and quantum models and state our main results, Theorems~\ref{thm:Phi42} and~\ref{thm:Phi43}, together with the local limits in Corollaries~\ref{cor:Phi42-local} and~\ref{cor:Phi43-local-HS}. The two theorems are derived from the quantitative comparison in Theorem~\ref{thm:abstract-comparison}, which is also presented in that section.

Section~\ref{sec:proof-strategy} introduces the common probabilistic framework and outlines the main steps in the proof of Theorem~\ref{thm:abstract-comparison}. These steps are carried out in the subsequent sections. We first establish the entropy identities in Section~\ref{sec:entropy-identities}, and then estimate the fluctuations and Gibbs weights in Section~\ref{sec:free-energy-estimates} to obtain relative free-energy convergence. For the convergence of reduced density matrices, Section~\ref{sec:correlation-preliminaries} develops the weighted correlation estimates and a priori bounds used in the argument. The proof is completed in Section~\ref{sec:kernel-domination}, where we establish the quantum kernel bounds and combine them with the preceding estimates. Finally, Section~\ref{sec:derivation-Phi4d} combines the quantum-to-Hartree limits with classical approximation to prove Corollaries~\ref{cor:Phi42-local} and~\ref{cor:Phi43-local-HS}.

\subsection*{Acknowledgments} \textcolor{red}{The author thanks Zhenfu Wang for insightful discussions on relative entropy, Zhilin Yang for helpful discussions, and Rongchan Zhu and Xiangchan Zhu for their valuable comments on SPDEs.} This work was partially supported by the National Key R\&D Program of China (Project No.~2024YFA1015500), and the author was partially supported by the NSFC (Grant Nos.~12595282 and 12171009). 

\medskip

\paragraph{AI disclosure.}
The author used ChatGPT-6 Astra to help develop the proof of the uniform-in-$\varepsilon$ Hilbert--Schmidt bounds for the Hartree correlation operators. All other mathematical ideas originated with the author or arose from discussions with colleagues. AI tools were also used for language editing, checking calculations, literature searches and bibliographic checks. All mathematical statements, proofs and calculations were independently checked and verified by the author, who takes full responsibility for its content.

\bigskip

\section{Setup and results}\label{sec:setup-results}

\subsection{Classical field theory}\label{subsec:classical-field-theory}

We first introduce Gaussian fields and their correlation operators. Let $\gH=L^2(\bT^d;\CC)$, $h=\1-\Delta$, and $e_n(x)=(2\pi)^{-d/2}e^{\ii n\cdot x}$. Inner products are conjugate-linear in the first argument. For $s\in\R$, we define the Sobolev space
$$
 \gH^s=\Big\{
 u=\sum_{k\in\mathbb Z^d}u_ke_k:
 \sum_{k\in\mathbb Z^d}
 (1+|k|^2)^s|u_k|^2<\infty
 \Big\}.
$$
Pairings between $\gH^s$ and $\gH^{-s}$ are understood by duality.

Let $\mathsf C$ be a bounded nonnegative self-adjoint operator on $\gH$. Choose $s>0$ such that
$$
 \tr_{\gH}(h^{-s/2}\mathsf C h^{-s/2})<\infty.
$$
Such a choice is always possible, since, for $s>d/2$,
$$
 \tr_{\gH}(h^{-s/2}\mathsf C h^{-s/2})
 \leq
 \norm{\mathsf C}\sum_{n\in\Z^d}(1+|n|^2)^{-s}
 <\infty.
$$
Let $(\gX,\mathcal A,\mathbb P)$ be a probability space carrying independent standard circular complex Gaussian variables $g=(g_n)_{n\in\Z^d}$, each with density $\pi^{-1}e^{-|z|^2}\dd\Re z\,\dd\Im z$. In this subsection, $\mathbb E$ denotes expectation with respect to $\mathbb P$. We define the Gaussian field with covariance $\mathsf C$ by
$$
 u_{\mathsf C}
 =
 \sum_{n\in\Z^d}g_n\mathsf C^{1/2}e_n.
$$
The series converges in $L^2(\mathbb P;\gH^{-s})$. Its law, $\mu_{\mathsf C}:=\Law(u_{\mathsf C})$, is the Gaussian measure with covariance $\mathsf C$. For $\phi,\psi\in\gH^s$,
\begin{equation}\label{eq:classical-Gaussian-covariances}
 \mathbb E\left[
 \inprod{\phi}{u_{\mathsf C}}
 \inprod{u_{\mathsf C}}{\psi}
 \right]
 =
 \inprod{\phi}{\mathsf C\psi},\qquad
 \mathbb E\left[
 \inprod{\phi}{u_{\mathsf C}}
 \inprod{\psi}{u_{\mathsf C}}
 \right]
 =0.
\end{equation}
The identities \eqref{eq:classical-Gaussian-covariances} uniquely determine $\mu_{\mathsf C}$ among centered Gaussian probability measures on $\gH^{-s}$. Moreover,
\begin{equation}\label{eq:Gaussian-field-correlations}
 \int_{\gH^{-s}}\norm u_{\gH^{-s}}^2\dd\mu_{\mathsf C}(u)
 =
 \tr_{\gH}(h^{-s/2}\mathsf C h^{-s/2}),\quad
 \int_{\mathfrak H^{-s}}
 |u^{\otimes k}\rangle\langle u^{\otimes k}|
 \,\dd\mu_{\mathsf C}(u)
 =
 k!\,\mathsf C^{\otimes k},
 \qquad k\geq1.
\end{equation}
The second identity is understood weakly on $\bigotimes_{\mathrm{sym}}^k\gH$.

Let $\mu_0$ be the Gaussian measure with covariance $h^{-1}$. From now on, we take $d\in\{2,3\}$. Since $h^{-1}$ is Hilbert--Schmidt, $\mu_0(\gH^{-1})=1$.

To define the interacting theory, we first introduce mass renormalization. For a bounded normal operator $A$ on $\gH$, set
\begin{equation}\label{eq:classical-mass-cutoff}
 \cM_{0,K}^A[u]
 =
 \inprod{P_Ku}{AP_Ku}
 -
 \int\inprod{P_Ku}{AP_Ku}\dd\mu_0(u),
 \qquad
 P_K=\mathds1_{\{h\leq K\}}.
\end{equation}
The sequence in \eqref{eq:classical-mass-cutoff} converges in $L^p(\mu_0)$ for every $1\leq p<\infty$ (see Lemma~\ref{lem:Gaussian-centered-quadratic-moments}). We denote its limit by $\mathcal M_0^A$. Throughout the paper, the two-body interaction $w=w(x-y)$ satisfies the following conditions:
$$
 w(x)=\frac1{(2\pi)^d}
 \sum_{k\in\Z^d}\widehat w(k)e^{\ii k\cdot x},
 \qquad
 \widehat w(k)=\widehat w(-k)\geq0,\qquad
 w(0)<\infty,
 \qquad
 \widehat w(0)>0.
$$

We define the renormalized interaction by
\begin{equation}
 \cW_0[u]
 =
 \frac12\sum_{k\in\mathbb Z^d}
 \widehat w(k)|\cM_0^{e_k}[u]|^2
 -
 \alpha_0\cM_0^\1[u]
 -
 \beta_0,
 \label{eq:abstract-classical-W0}
\end{equation}
where $\alpha_0,\beta_0\in\R$, and $e_k$ is viewed as a multiplication operator. By Minkowski's inequality and Lemma~\ref{lem:Gaussian-centered-quadratic-moments}, the series in \eqref{eq:abstract-classical-W0} converges in $L^1(\mu_0)$. Thus $\mathcal W_0\in L^1(\mu_0)$ is well defined. The momentum-space expression \eqref{eq:abstract-classical-W0} has the equivalent position-space representation
$$
 \cW_0[u]
 =
 \frac12\iint_{(\bT^d)^2}
 \big(:|u(x)|^2:\,w(x-y)\,:|u(y)|^2:\big)
 \d x\,\d y-
 \alpha_0\int_{\bT^d}:|u(x)|^2:\d x
 -
 \beta_0.
$$

The interacting field theory is described by the Gibbs measure
$$
 \dd\nu_0(u)
 :=
 \frac1{z_0}e^{-\cW_0[u]}\dd\mu_0(u),
 \qquad
 z_0
 :=
 \int e^{-\cW_0[u]}\dd\mu_0(u).
$$
We have $e^{-\cW_0}\in L^1(\mu_0)$ and $z_0\in(0,\infty)$; see the proof of Lemma~\ref{lem:Lp-estimate-three-weights}. Hence $\nu_0$ is a well-defined probability measure.

The correlation operators describe the interacting field theory. For $k\in\mathbb N$, define
\begin{equation}\label{eq:abstract-nu-0-corroration-operator}
 \gamma_{\nu_0}^{(k)}
 :=
 \int
 \ket{u^{\otimes k}}\bra{u^{\otimes k}}
 \dd\nu_0(u),
\end{equation}
where the integral is understood in terms of matrix elements, as in the Gaussian correlation identity\kern0pt{} \eqref{eq:Gaussian-field-correlations}. This form has a Hilbert--Schmidt realization on the physical $k$-particle space: $ \gamma_{\nu_0}^{(k)} \in\gS^2\!\big(\bigotimes_{\mathrm{sym}}^k\gH\big). $ See Section~\ref{sec:correlation-preliminaries}.

\medskip

\subsection{Many-body framework}

We now introduce quantum states and their reduced density matrices. For the grand-canonical ensemble, we use the bosonic Fock space $\gF$, defined by
$$
 \gF:=\bigoplus_{n\geq0}\bigotimes_{\mathrm{sym}}^n\gH,
 \qquad
 \bigotimes_{\mathrm{sym}}^0\gH:=\CC.
$$
For $f\in \gH=L^2(\bT^d)$, the annihilation and creation operators $a(f)$ and $a^*(f)$ are defined on the finite-particle subspace by
\begin{equation}\label{eq:annihilation-creation-operators}
 \begin{aligned}
  (a(f)\psi)(x_1,\ldots,x_{n-1})
  &=
  \sqrt n
  \int_{\bT^d}
  \overline{f(x)}
  \psi(x_1,\ldots,x_{n-1},x)\dd x,
  \\
  (a^*(f)\psi)(x_1,\ldots,x_{n+1})
  &=
  \frac1{\sqrt{n+1}}
  \sum_{j=1}^{n+1}
  f(x_j)
  \psi(x_1,\ldots,x_{j-1},x_{j+1},\ldots,x_{n+1}).
 \end{aligned}
\end{equation}
Equivalently, in distributional notation,
$$
 a(f)=\int_{\bT^d}\overline{f(x)}a_x\dd x,
 \qquad
 a^*(f)=\int_{\bT^d}f(x)a_x^*\dd x .
$$
The operators in \eqref{eq:annihilation-creation-operators} satisfy the canonical commutation relations (CCR)
$$
 [a(f),a(g)]=[a^*(f),a^*(g)]=0,
 \qquad
 [a(f),a^*(g)]=\inprod{f}{g}\1_{\gF}.
$$
For a one-body operator $A$ on $\gH$, its second quantization $\dG(A)$ is defined on each particle sector by
$$
 \dG(A)
 =
 0\oplus\bigoplus_{n\geq1}\sum_{j=1}^n A_j,
$$
where $A_j$ acts on the $j$-th variable. In particular, we write $\cN=\dG(\1)$ for the particle-number operator. With $a_k=a(e_k)$, we have
$$
 \dG(h)=\sum_{k\in\Z^d}h(k)a_k^*a_k,
 \qquad h(k)=1+|k|^2.
$$

A state on $\gF$ is a positive trace-class operator $\Gamma$ with $\tr_{\gF}\Gamma=1$. For $k\geq1$, we define its $k$-particle reduced density matrix $\Gamma^{(k)}$ by
\begin{equation}\label{eq:rdm-convention}
 \Gamma^{(k)}
 =
 \sum_{n\geq k}\binom nk
 \tr_{k+1,\ldots,n}\Gamma_n,
\end{equation}
where $\Gamma_n=\Pi_n\Gamma\Pi_n$, and $\Pi_n$ is the projection onto the $n$-particle sector. Equivalently,
$$
 \inprod{g_1\otimes_s\ldots\otimes_s g_k}{\Gamma^{(k)}f_1\otimes_s\ldots\otimes_s f_k}=\tr\Big[a^*(f_1)\ldots a^*(f_k)a(g_1)\ldots a(g_k)\Gamma\Big],
$$
where $ \phi_1\otimes_s\cdots\otimes_s\phi_k := \frac1{\sqrt{k!}}\sum_{\sigma\in S_k} \phi_{\sigma(1)}\otimes\cdots\otimes\phi_{\sigma(k)}. $

For $\lambda>0$, the free Gibbs state $\Gamma_{{\rm free}}$ and its partition function $Z_{{\rm free}}$ are defined by
\begin{equation}\label{eq:def-free-gibbs-state}
 \Gamma_{\rm free}
 =
 \frac1{Z_{\rm free}}e^{-\lambda\dG(h)},
 \qquad
 Z_{\rm free}
 =
 \tr_{\gF}\big(e^{-\lambda\dG(h)}\big).
\end{equation}
The free Gibbs state $\Gamma_{{\rm free}}$ is the quantum counterpart of the Gaussian measure $\mu_0$. Its $k$-particle reduced density matrix, with the convention \eqref{eq:rdm-convention}, is
$$
 \Gamma_{\rm free}^{(k)}
 =
 \big(\Gamma_{\rm free}^{(1)}\big)^{\otimes k}
 \big|_{\bigotimes_{\mathrm{sym}}^k\gH}
 =
 \Big(\frac1{e^{\lambda h}-1}\Big)^{\otimes k}\big|_{\bigotimes_{\mathrm{sym}}^k\gH}.
$$
The expected particle number of the free Gibbs state is
\begin{equation}\label{eq:free-number}
 N_{\rm free}
 :=
 \tr_{\gF}(\cN\Gamma_{\rm free})
 =
 \tr_{\gH}\Gamma_{\rm free}^{(1)}
 =
 \sum_{k\in\Z^d}\frac1{e^{\lambda h(k)}-1}
 =
 (2\pi)^d\rho_0,
\end{equation}
where $\rho_0$ is defined in \eqref{eq:free-particle-density}.

To describe the interacting Bose gas in parallel with the classical field theory, we first introduce quantum mass renormalization. Following the formal quantum--classical correspondence, we define 
$$
 \mathbb M_\lambda^A=\lambda\dG(A)-\tr_{\gF} \left( \lambda\dG(A)\Gamma_{{\rm free}} \right)=\lambda\dG(A)-\tr(A\lambda\Gamma_{{\rm free}}^{(1)})
$$
for every bounded normal operator $A$ on $\gH$. We define the renormalized quantum interaction by 
\begin{equation}
 \bW_\lambda^{\ren}
 =
 \frac12\sum_{k\in\mathbb Z^d}\widehat w(k)|\mathbb M_\lambda^{e_k}|^2-\alpha_\lambda\mathbb M_\lambda^\1-\beta_\lambda,
 \label{eq:abstract-quantum-W}
\end{equation}
where $\alpha_\lambda,\beta_\lambda\in\mathbb R$ may depend on $\lambda$. Using \eqref{eq:interaction-assumption}, we can write this interaction as
$$
 \bW_{\lambda}^{\ren}
 =
 \frac{\lambda^2}{2}
 \iint_{(\bT^d)^2}
 (a_x^*a_x-\rho_0)
 w(x-y)
 (a_y^*a_y-\rho_0)
 \dd x\d y
 -
 \alpha_{\lambda}\lambda
 \int_{\bT^d}(a_x^*a_x-\rho_0)\dd x
 -
 \beta_{\lambda}.
$$

The interacting Bose gas is described by the Gibbs state $\Gamma_\lambda$ with partition function $Z_\lambda$, defined by
$$
 \Gamma_\lambda:=\frac{1}{Z_\lambda}e^{-\lambda\dG(h)-\mathbb W_\lambda^{{\rm ren}}},\qquad Z_\lambda:=\tr_{\gF}\big(e^{-\lambda\dG(h)-\mathbb W_\lambda^{{\rm ren}}}\big).
$$

\medskip

\subsection{Two-dimensional model}

We first consider the two-dimensional model.

\begin{assumption}\label{ass:Phi42} We consider a real, even, nonnegative profile $v\in C_c^\infty(\R^2)$ and assume that its Fourier transform $\widehat v(\xi)=\int_{\R^2}v(x)e^{-\ii\xi\cdot x}\dd x$ satisfies $\widehat v\geq0$ and $\widehat v(0)=1$.
\end{assumption}

For $0<\varepsilon\leq1$, we define the classical interaction by
\begin{equation}\label{eq:Phi42-classical-interaction}
 \cW_{0,\varepsilon}[u]
 =
 \frac12\sum_{k\in\Z^2}
 \widehat v(\varepsilon k)|\cM_0^{e_k}[u]|^2
 -
 \tau_\varepsilon\cM_0^\1[u]
 -
 \frac{(2\pi)^2}{2}E_\varepsilon.
\end{equation}
Following the quantum--classical correspondence, we use the same counterterms in the quantum interaction and define
\begin{equation}\label{eq:Phi42-quantum-interaction}
 \bW_{\lambda,\varepsilon}^{\ren}
 =
 \frac12\sum_{k\in\Z^2}
 \widehat v(\varepsilon k)|\mathbb M_\lambda^{e_k}|^2
 -
 \tau_\varepsilon\mathbb M_\lambda^\1
 -
 \frac{(2\pi)^2}{2}E_\varepsilon.
\end{equation}
Thus, in \eqref{eq:abstract-classical-W0} and \eqref{eq:abstract-quantum-W}, we take $\widehat w(k)=\widehat v(\varepsilon k)$ and
\begin{equation}\label{eq:Phi42-counterterms}
 \alpha_{0,\varepsilon}
 =
 \alpha_{\lambda,\varepsilon}
 =
 \tau_\varepsilon,
 \qquad
 \beta_{0,\varepsilon}
 =
 \beta_{\lambda,\varepsilon}
 =
 \frac{(2\pi)^2}{2}E_\varepsilon.
\end{equation}
To obtain the two-body potential on $\bT^2$, we periodize $\varepsilon^{-2}v(\cdot/\varepsilon)$ and write
$$
 w_\varepsilon(x)
 =
 \sum_{\ell\in\Z^2}\varepsilon^{-2}
 v\Big(\frac{x+2\pi\ell}{\varepsilon}\Big)
 =
 \frac1{(2\pi)^2}\sum_{k\in\Z^2}
 \widehat v(\varepsilon k)e^{\ii k\cdot x},
 \qquad 0<\varepsilon\leq1.
$$
Thus $w_\varepsilon\geq0$, and Assumption~\ref{ass:interaction} holds.

To specify the counterterms, we use the covariance kernel of the Gaussian measure $\mu_0$, whose covariance is $h^{-1}=(-\Delta+\1)^{-1}$:
$$
 G(x-y)
 =
 \left\langle u(x)\overline{u(y)}\right\rangle_{\mu_0}
 =
 \frac1{(2\pi)^2}\sum_{k\in\Z^2}
 \frac{e^{\ii k\cdot(x-y)}}{h(k)}.
$$
We choose the mass counterterm
\begin{equation}\label{eq:Phi42-mass-counterterm}
 \tau_\varepsilon
 =
 \int_{\bT^2}w_\varepsilon(x)G(x)\dd x
 =
 \frac1{(2\pi)^2}
 \sum_{k\in\Z^2}\frac{\widehat v(\varepsilon k)}{h(k)}.
\end{equation}
For the scalar counterterm in \eqref{eq:Phi42-classical-interaction} and \eqref{eq:Phi42-quantum-interaction}, we use
\begin{equation}\label{eq:Phi42-data}
 E_\varepsilon
 =
 \int_{\bT^2}w_\varepsilon(x)G(x)^2\dd x
 =
 \frac1{(2\pi)^4}
 \sum_{k,j\in\Z^2}
 \frac{\widehat v(\varepsilon k)}
 {h(j)h(j+k)}.
\end{equation}
These choices agree with the renormalization in \cite[(2.17)--(2.18), (3.1)]{FKSS23}.

\Needspace{8\baselineskip}
\begin{theorem}[Quantum-to-Hartree limit in two dimensions] \label{thm:Phi42} Let $d=2$, and let $v$ satisfy Assumption~\ref{ass:Phi42}. Consider the Gibbs state $\Gamma_{\lambda,\varepsilon} =Z_{\lambda,\varepsilon}^{-1} e^{-\lambda\dG(h)-\bW_{\lambda,\varepsilon}^{\ren}}$ on the bosonic Fock space over $\gH=L^2(\bT^2;\CC)$, associated with the interaction $\bW_{\lambda,\varepsilon}^{\ren}$ in \eqref{eq:Phi42-quantum-interaction}, with $h=-\Delta+\1$. We write $Z_{\lambda,\varepsilon} =\tr_{\gF}e^{-\lambda\dG(h)-\bW_{\lambda,\varepsilon}^{\ren}}$.

Let $\Gamma_{\rm free} =Z_{\rm free}^{-1}e^{-\lambda\dG(h)}$ be the free Gibbs state defined in\kern0pt{} \eqref{eq:def-free-gibbs-state}, with $Z_{\rm free}=\tr_{\gF}e^{-\lambda\dG(h)}$. Let $\mu_0$ be the Gaussian measure with covariance $(-\Delta+\1)^{-1}$, and let $\dd\nu_\varepsilon(u) =z_{0,\varepsilon}^{-1}e^{-\cW_{0,\varepsilon}[u]}\dd\mu_0(u)$ be the Hartree Gibbs measure associated with\kern0pt{} \eqref{eq:Phi42-classical-interaction}, with $z_{0,\varepsilon} =\int e^{-\cW_{0,\varepsilon}[u]}\dd\mu_0(u)$.

Then, for $\varepsilon=\lambda^\eta$ with $0<\eta<1/2$,
$$
 \Big|
 \log\frac{Z_{\lambda,\varepsilon}}{Z_{\rm free}}
 -
 \log z_{0,\varepsilon}
 \Big|
 \longrightarrow0
 \qquad(\lambda\to0^+).
$$
Moreover, for every fixed $k\geq1$,
$$
 \big\|k!\lambda^k\Gamma_{\lambda,\varepsilon}^{(k)}
 -\gamma_{\nu_\varepsilon}^{(k)}\big\|_{\gS^2}
 \longrightarrow0
 \qquad(\lambda\to0^+).
$$
Here we write $\gamma_{\nu_\varepsilon}^{(k)} =\int |u^{\otimes k}\rangle\langle u^{\otimes k}| \dd\nu_\varepsilon(u)$.
\end{theorem}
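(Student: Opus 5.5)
The plan is to deduce Theorem~\ref{thm:Phi42} from the abstract quantitative comparison (Theorem~\ref{thm:abstract-comparison}) announced in the organization paragraph. That comparison controls $|\log(Z_\lambda/Z_{\rm free})-\log z_0|$ and $\|k!\lambda^k\Gamma_\lambda^{(k)}-\gamma_{\nu_0}^{(k)}\|_{\gS^2}$ in terms of a few model quantities:
\begin{itemize}
\item $w(0)$ and $\widehat w(0)$;
\item the size of the counterterms $\alpha_0,\beta_0$;
\item the mismatch between the quantum and classical one-body covariances, $\lambda(e^{\lambda h}-1)^{-1}$ versus $h^{-1}$;
\item $L^p(\mu_0)$ bounds on the Gibbs weights $e^{-\cW_0}$.
\end{itemize}
I would specialize it to $\widehat w(k)=\widehat v(\varepsilon k)$, $\alpha=\tau_\varepsilon$ and $\beta=(2\pi)^2E_\varepsilon/2$, and then check that every error term vanishes when $\varepsilon=\lambda^\eta$ with $\eta<1/2$.

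\textbf{Step 1: size of the model data.} Since $v\in C_c^\infty$, one has $w_\varepsilon(0)\lesssim\varepsilon^{-2}$. From \eqref{eq:Phi42-mass-counterterm} and \eqref{eq:Phi42-data}, $\tau_\varepsilon\lesssim\log(1/\varepsilon)$ and $E_\varepsilon\lesssim\log(1/\varepsilon)^2$; both grow only logarithmically. The covariance mismatch satisfies
$$
\big|\lambda(e^{\lambda h(k)}-1)^{-1}-h(k)^{-1}\big|\lesssim \lambda\,\min\{1,\lambda h(k)\}.
$$
It enters the error terms through sums weighted by $\widehat v(\varepsilon k)$, which effectively cut off at $|k|\lesssim\varepsilon^{-1}$. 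The dominant quantum error should be of the form $\lambda\, w_\varepsilon(0)\times(\text{polylog})\sim \lambda^{1-2\eta}\,\text{polylog}(1/\lambda)$. This explains the threshold $\eta<1/2$: it is exactly the condition that the $\lambda w_\varepsilon(0)$ term in the chemical potential becomes negligible.

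\textbf{Step 2: uniform a priori bounds.} Here I would use the two-dimensional classical theory. With Wick ordering and the counterterms \eqref{eq:Phi42-counterterms}, $\cW_{0,\varepsilon}$ converges in $L^p(\mu_0)$ to the local Wick-ordered quartic interaction. The standard Nelson argument then gives $\sup_\varepsilon\|e^{-\cW_{0,\varepsilon}}\|_{L^p(\mu_0)}<\infty$ for some $p>1$ (in fact every $p$). Nelson's argument splits the interaction at a frequency cutoff $K$: the low-frequency part is bounded below by $-C\log^2K$, and the high-frequency remainder has Gaussian tails. Together with $z_{0,\varepsilon}\geq e^{-\int\cW_{0,\varepsilon}d\mu_0}\gtrsim1$ from Jensen, this gives $\sup_\varepsilon\|d\nu_\varepsilon/d\mu_0\|_{L^2(\mu_0)}<\infty$. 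By Hölder's inequality and the Gaussian identity \eqref{eq:Gaussian-field-correlations}, $\|\gamma^{(k)}_{\nu_\varepsilon}\|_{\gS^2}$ is then bounded uniformly in $\varepsilon$ for every fixed $k$. These are exactly the inputs the abstract theorem requires for the reduced density matrices.

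\textbf{Step 3: conclusion and main obstacle.} Inserting Steps 1 and 2 into Theorem~\ref{thm:abstract-comparison} gives both limits, provided the error bound is polynomially small in $\lambda$ times powers of $w_\varepsilon(0)$ and of $\log$ factors. The main obstacle is whether the abstract bound really depends on $\varepsilon$ only through $\lambda w_\varepsilon(0)$, up to logarithms. If instead the upper-symbol/relative-entropy error also involved quantities like $\sum_k\widehat v(\varepsilon k)$ multiplied by $\lambda$ to a lower power, the range would shrink. I would therefore first verify that the fluctuation and kernel-domination terms scale as $\lambda\varepsilon^{-2}$ times moments of $\mathcal N$ controlled via $\lambda\mathcal N\sim\log(1/\lambda)$. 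Any residual $\varepsilon$-dependence of the Gibbs-weight $L^p$ bounds would be absorbed by choosing $p$ close to $1$ and using the logarithmic growth of $\tau_\varepsilon$ and $E_\varepsilon$.
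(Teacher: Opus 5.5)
Your plan follows the paper's proof. You specialize Theorem~\ref{thm:abstract-comparison} to $w=w_\varepsilon\geq0$ with identical quantum and classical counterterms, so that $\mathfrak e_\lambda=\lambda^2(N_{\rm free}+1)w_\varepsilon(0)\lesssim\lambda\varepsilon^{-2}\log(2/\lambda)$, $\mathfrak a_\lambda\lesssim\log^2(2/\lambda)$ and $b_\lambda\to0$. You then get $\sup_\varepsilon\|\gamma_{\nu_\varepsilon}^{(2k)}\|_{\gS^2}<\infty$ from a uniform $L^2(\mu_0)$ bound on $e^{-\cW_{0,\varepsilon}}/z_{0,\varepsilon}$ and the Gaussian H\"older estimate. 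The paper takes the Nelson-type bound from \cite[Lemma~4.6]{FKSS23} rather than re-proving it. Your Jensen bound $z_{0,\varepsilon}\geq1$, from $\int\cW_{0,\varepsilon}\,\d\mu_0=0$, is a valid substitute for $z_{0,\varepsilon}\to z_{\Phi^4_2}>0$.

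Three small corrections:
\begin{itemize}
\item The ``obstacle'' you flag is settled by reading off \eqref{eq:comparison-e}--\eqref{eq:comparison-b}. These give $\mathfrak a_\lambda^2\mathfrak r_\lambda\lesssim\lambda^{1/2-\eta}\log^{11/2}(2/\lambda)$, driven by $E^{1/2}\mathfrak e_\lambda^{1/2}$.
\item That rate is the square root of your guessed $\lambda^{1-2\eta}$, but the threshold $\eta<1/2$ is the same.
\item Your covariance-mismatch bound fails at low frequencies, where $h(k)^{-1}-\lambda(e^{\lambda h(k)}-1)^{-1}\approx\lambda/2$. The correct bound is $\min\{\lambda,h(k)^{-1}\}$. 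This is harmless, since the abstract theorem already absorbs the mismatch into $\lambda^2N_{\rm free}$.
\end{itemize}
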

\textcolor{red}{Theorem~\ref{thm:Phi42} follows from Theorem~\ref{thm:abstract-comparison}, as shown at the end of this section.}

\textcolor{red}{The local $\Phi^4_2$ measure $\nu_{\Phi^4_2}$ and its partition function $z_{\Phi^4_2}$ are defined in Section~\ref{sec:derivation-Phi4d}. Combining the classical approximation with Theorem~\ref{thm:Phi42} gives the following corollary.}

\begin{corollary}[Derivation of $\Phi^4_2$]\label{cor:Phi42-local}
Under the assumptions of Theorem~\ref{thm:Phi42}, we have
$$
 -\log\frac{Z_{\lambda,\lambda^\eta}}{Z_{\rm free}}
 \longrightarrow-\log z_{\Phi^4_2},
 \qquad 0<\eta<1/2.
$$
For every fixed $k\geq1$,
$$
 \big\|k!\lambda^k\Gamma_{\lambda,\lambda^\eta}^{(k)}
 -\gamma_{\nu_{\Phi^4_2}}^{(k)}\big\|_{\gS^2}
 \longrightarrow0,
 \qquad 0<\eta<1/2.
$$
\end{corollary}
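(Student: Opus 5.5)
The plan is to combine Theorem~\ref{thm:Phi42} with a classical approximation result, $\nu_\varepsilon\to\nu_{\Phi^4_2}$ as $\varepsilon\to0$, proved at the level of densities with respect to $\mu_0$. By the triangle inequality,
$$
\Big|\log\frac{Z_{\lambda,\varepsilon}}{Z_{\rm free}}-\log z_{\Phi^4_2}\Big|\leq\Big|\log\frac{Z_{\lambda,\varepsilon}}{Z_{\rm free}}-\log z_{0,\varepsilon}\Big|+|\log z_{0,\varepsilon}-\log z_{\Phi^4_2}|,
$$
and the analogous split holds for $\|k!\lambda^k\Gamma^{(k)}_{\lambda,\varepsilon}-\gamma^{(k)}_{\nu_{\Phi^4_2}}\|_{\gS^2}$. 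The first term vanishes for $\varepsilon=\lambda^\eta$, $0<\eta<1/2$, by Theorem~\ref{thm:Phi42}. Since $\varepsilon=\lambda^\eta\to0$, it then suffices to prove the purely classical statements $z_{0,\varepsilon}\to z_{\Phi^4_2}$ and $\|\gamma^{(k)}_{\nu_\varepsilon}-\gamma^{(k)}_{\nu_{\Phi^4_2}}\|_{\gS^2}\to0$ for every fixed $k$.

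\textbf{Step 1: convergence of the interaction.} Let $\cW_{\Phi^4_2}=\frac12\int:|u|^4:$ (Wick ordered, with the $-\tau$ and $-E$ counterterms accounted for). I would show $\cW_{0,\varepsilon}\to\cW_{\Phi^4_2}$ in $L^2(\mu_0)$. The choices \eqref{eq:Phi42-mass-counterterm} and \eqref{eq:Phi42-data} are exactly those for which $\cW_{0,\varepsilon}$ equals the Wick-ordered quartic $\frac12\iint w_\varepsilon(x-y):|u(x)|^2|u(y)|^2:\,\dd x\,\dd y$. Indeed, expanding the product of the two Wick squares produces cross contractions $w_\varepsilon G$, removed by $\tau_\varepsilon$, and a full contraction $w_\varepsilon G^2$, removed by $E_\varepsilon$. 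The $L^2(\mu_0)$ difference is then a fourth-chaos computation, bounded by sums of the form $\sum|\widehat v(\varepsilon k)-1|^2\prod h^{-1}$. These converge to $0$ by dominated convergence, because $\widehat v$ is continuous with $\widehat v(0)=1$ and $|\widehat v|\leq1$, and because the diagrams are finite in $d=2$.

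\textbf{Step 2: uniform exponential integrability.} I would show $\sup_{\varepsilon}\|e^{-\cW_{0,\varepsilon}}\|_{L^p(\mu_0)}<\infty$ for some $p>1$. This is presumably available from the proof of Lemma~\ref{lem:Lp-estimate-three-weights}, or from the uniform $L^2(\mu_0)$ density bounds mentioned in the introduction. If it is not, I would use Nelson's argument: since $\widehat v\geq0$, the first term of $\cW_{0,\varepsilon}$ is nonnegative once written via the truncations $\cM^{e_k}_{0,K}$. The truncated interaction is then bounded below by $-C(\log K)^2$, and the tail is controlled by hypercontractivity of the Wick polynomials. This is the main obstacle, because the lower bound must be uniform in $\varepsilon$ and the $\tau_\varepsilon\cM_0^\1$ term diverges logarithmically. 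The positivity of $\widehat v$, combined with $\tau_{\varepsilon}-\tau_{K^{-1/2}}=O(1)$-type comparisons, is what should make it work. Together with Step 1, this gives $e^{-\cW_{0,\varepsilon}}\to e^{-\cW_{\Phi^4_2}}$ in $L^q(\mu_0)$ for $q<p$, via convergence in probability plus uniform integrability. In particular $z_{0,\varepsilon}\to z_{\Phi^4_2}>0$, which proves the free-energy claim.

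\textbf{Step 3: Hilbert--Schmidt convergence of correlations.} Write $f_\varepsilon=\d\nu_\varepsilon/\d\mu_0$, so that $f_\varepsilon\to f_0$ in $L^q(\mu_0)$ with $q>1$. For a test operator $B$ with $\|B\|_{\gS^2}\leq1$,
$$
\tr B(\gamma^{(k)}_{\nu_\varepsilon}-\gamma^{(k)}_{\nu_{\Phi^4_2}})=\int\langle u^{\otimes k},Bu^{\otimes k}\rangle(f_\varepsilon-f_0)\,\d\mu_0 .
$$
By H\"older, this is at most $\|f_\varepsilon-f_0\|_{L^q}$ times $\sup_B\|\langle u^{\otimes k},Bu^{\otimes k}\rangle\|_{L^{q'}(\mu_0)}$. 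Using Gaussian hypercontractivity, the $L^{q'}$ norm is controlled by the $L^2$ norm, whose square is a Gaussian $2k$-point function. By Wick's theorem it is bounded by $C_k\|B\|_{\gS^2}^2$ times products of $\|h^{-1}\|_{\gS^2}$, which is finite in $d=2$. This is the ``Gaussian correlation estimate'' referred to in the introduction, and it should be available from Section~\ref{sec:correlation-preliminaries}. The supremum over $B$ then yields the Hilbert--Schmidt bound, which tends to $0$ and completes the proof of the corollary.
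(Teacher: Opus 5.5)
Your argument follows the paper's proof. Both split by the triangle inequality, use Theorem~\ref{thm:Phi42} for the quantum-to-Hartree part, show $L^q(\mu_0)$-convergence of the normalized Hartree densities, and then pass to Hilbert--Schmidt norms through a Gaussian correlation bound. Your Step~3 is Lemma~\ref{lem:Gaussian-weighted-correlations} applied to $f=e^{-\cW_{0,\varepsilon}}/z_{0,\varepsilon}-e^{-\mathcal V}/z_{\Phi^4_2}$. The paper takes $q=2$ and proves that bound by factoring through $(\mathsf C^{1/2})^{\otimes k}$ rather than by duality and hypercontractivity, but the content is the same. The paper does not reprove your Steps~1--2. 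It imports $e^{-\cW_{0,\varepsilon}}\to e^{-\mathcal V}$ in every $L^q(\mu_0)$, $q<\infty$, from \cite[Lemma~4.6]{FKSS23} after a rescaling. Your pointer to the uniform $L^2(\mu_0)$ density bounds is exactly this input, so with that citation your proof is complete.

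If you want Steps~1--2 to be self-contained, three sub-claims need repair.

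(i) $\cW_{0,\varepsilon}$ is not exactly the fully Wick-ordered quartic $Q_\varepsilon$; only the constant $E_\varepsilon$ cancels exactly. The single cross-contractions give the nonlocal form $\iint\kappa_\varepsilon(x-y)\operatorname{Re}:u(x)\overline{u(y)}:\dd x\dd y$ with $\kappa_\varepsilon=w_\varepsilon G\geq0$, whereas $\tau_\varepsilon\cM_0^\1$ is local. Hence $\cW_{0,\varepsilon}=Q_\varepsilon-\sum_n(\widehat\kappa_\varepsilon(0)-\widehat\kappa_\varepsilon(n)):|u_n|^2:$. This second-chaos term does vanish in $L^2(\mu_0)$, but you must estimate it. Use $0\leq\widehat\kappa_\varepsilon(0)-\widehat\kappa_\varepsilon(n)=\int w_\varepsilon G\,(1-\cos(n\cdot x))\dd x$, which tends to $0$ for fixed $n$ and is $\lesssim\log(2+|n|)$ uniformly in $\varepsilon$.

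(ii) The proof of Lemma~\ref{lem:Lp-estimate-three-weights} gives no uniform bound. It yields only $e^{-\cW_{0,\varepsilon}}\leq e^{\beta_0+\alpha_0^2/(4c_0)}$, which here is of order $\exp(C\log^2(1/\varepsilon))$.

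(iii) In Nelson's argument you cannot keep $\tau_\varepsilon,E_\varepsilon$ in the level-$K$ truncation. Also, $\tau_\varepsilon-\tau_{K^{-1/2}}$ is of order $\log(1/(\varepsilon\sqrt K))$, not $O(1)$, when $\varepsilon\ll K^{-1/2}$. Instead, use the cutoff-consistent counterterms $\int w_\varepsilon G_K$ and $\int w_\varepsilon G_K^2$, where $G_K$ is the covariance kernel of $P_Ku$. The zero-mode bound then gives a lower bound $-C\log^2K$ uniformly in $\varepsilon$, since $\|G_K\|_\infty\lesssim\log K$ and $\int w_\varepsilon=1$. The truncation error is $O(K^{-c})$ in every $L^p(\mu_0)$, uniformly in $\varepsilon$.
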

\textcolor{red}{The proof of Corollary~\ref{cor:Phi42-local} is given in Section~\ref{sec:derivation-Phi4d}.}

\begin{remark}
To the author's knowledge, the preceding corollary provides the first derivation of the local $\Phi^4_2$ measure from an interacting Bose gas covering every fixed exponent $0<\eta<1/2$ in the scaling $\varepsilon=\lambda^\eta$. It gives relative free-energy convergence and Hilbert--Schmidt convergence at every fixed order, with polynomial exponents arbitrarily close to the diluteness threshold from below.
\end{remark}

\medskip

\subsection{Three-dimensional model}

We next consider the three-dimensional Hartree model studied in \cite[Section 2.1, Section 7.1]{NZZ25}.

\begin{assumption}\label{ass:Phi43} We consider a nonnegative profile $v\in L^1(\R^3)$, with Fourier transform $\widehat v(\xi)=\int_{\R^3}v(x)e^{-\ii\xi\cdot x}\dd x$. We assume that, for some $\delta_0>0$,
\begin{equation}\label{eq:Phi43-profile}
 \begin{gathered}
  \widehat v(0)=1,\qquad
  0\leq\widehat v(\xi)
  \leq\frac{C_v}{1+|\xi|^{3+\delta_0}},\qquad
  |D^j\widehat v(\xi)|
  \leq\frac{C_v}{1+|\xi|^j},
  \qquad j=1,2.
 \end{gathered}
\end{equation}
\end{assumption}

For $m\in\R$ and $0<\varepsilon\leq1$, we consider the classical interaction
\begin{equation}\label{eq:Phi43-classical-interaction}
 \cW_{0,\varepsilon}[u]
 =
 \frac12\sum_{k\in\Z^3}
 \widehat v(\varepsilon k)|\cM_0^{e_k}[u]|^2
 -
 (a_\varepsilon-6b_\varepsilon-m+1)\cM_0^\1[u].
\end{equation}
On the quantum side, we include the correction $e_{\lambda,\varepsilon}$ in the mass counterterm and define
\begin{equation}\label{eq:Phi43-quantum-interaction}
 \bW_{\lambda,\varepsilon}^{\ren}
 =
 \frac12\sum_{k\in\Z^3}
 \widehat v(\varepsilon k)|\mathbb M_\lambda^{e_k}|^2
 -
 (a_\varepsilon-6b_\varepsilon-m+1-e_{\lambda,\varepsilon})
 \mathbb M_\lambda^\1.
\end{equation}
Thus, in \eqref{eq:abstract-classical-W0} and \eqref{eq:abstract-quantum-W}, we take $\widehat w(k)=\widehat v(\varepsilon k)$, $\alpha_{0,\varepsilon}=a_\varepsilon-6b_\varepsilon-m+1$ and $\alpha_{\lambda,\varepsilon} =\alpha_{0,\varepsilon}-e_{\lambda,\varepsilon}$, with $\beta_{0,\varepsilon}=\beta_{\lambda,\varepsilon}=0$. We obtain the two-body potential on $\bT^3$ by periodization:
$$
 w_\varepsilon(x)
 =
 \frac1{(2\pi)^3}\sum_{k\in\Z^3}
 \widehat v(\varepsilon k)e^{\ii k\cdot x},
 \qquad 0<\varepsilon\leq1.
$$
This periodization of $\varepsilon^{-3}v(\cdot/\varepsilon)$ yields $w_\varepsilon\geq0$, and Assumption~\ref{ass:interaction} holds.

We now specify the mass counterterms in \eqref{eq:Phi43-classical-interaction}. We first take
$$
 a_\varepsilon
 =\int_{\bT^3}w_\eps(x)G(x)\dd x=
 \frac1{(2\pi)^3}
 \sum_{n\in\Z^3}\frac{\widehat v(\varepsilon n)}{h(n)}.
$$
In three dimensions, we also include $b_\varepsilon$, defined by
$$
 6b_\varepsilon
 =
 \frac1{(2\pi)^6}
 \sum_{n,j\in\Z^3}
 \frac{\widehat v(\varepsilon n)^2+
 \widehat v(\varepsilon n)\widehat v(\varepsilon j)}
 {h(n)h(j)h(n+j)}.
$$
Assumption~\ref{ass:Phi43} and these counterterms agree with \cite[(2.10), (2.12)]{NZZ25}. We write their Hartree interaction \cite[(2.17), Section~2.3]{NZZ25} in the form \eqref{eq:Phi43-classical-interaction}, relative to the Gaussian measure with covariance $(-\Delta+\1)^{-1}$.

To define $e_{\lambda,\varepsilon}$, we use the free particle density from \eqref{eq:free-particle-density},
$$
 \rho_0
 =
 \frac1{(2\pi)^3}
 \sum_{n\in\Z^3}\frac1{e^{\lambda(1+|n|^2)}-1}.
$$
For the constant term in the quantum correction, we take
$$
 C_0
 =
 -\frac1{4\pi}
 +
 \frac1{4\pi}
 \sum_{\ell\in2\pi\Z^3\setminus\{0\}}
 \frac{e^{-|\ell|}}{|\ell|}.
$$
With $\zeta$ denoting the Riemann zeta function, we then choose the quantum correction
\begin{equation}\label{eq:Phi43-quantum-counterterm}
 e_{\lambda,\varepsilon}
 =
 \lambda\rho_0
 -
 \frac{\zeta(3/2)}{(4\pi)^{3/2}}\lambda^{-1/2}
 -
 C_0-\frac{\lambda}{2}w_\varepsilon(0).
\end{equation}
This choice follows \cite[(2.7)--(2.8), (7.3)]{NZZ25}.

\Needspace{8\baselineskip}
\begin{theorem}[Quantum-to-Hartree limit in three dimensions] \label{thm:Phi43} Let $d=3$, and let $v$ satisfy Assumption~\ref{ass:Phi43}. Fix $m\in\R$. Consider the Gibbs state $\Gamma_{\lambda,\varepsilon} =Z_{\lambda,\varepsilon}^{-1} e^{-\lambda\dG(h)-\bW_{\lambda,\varepsilon}^{\ren}}$ on the bosonic Fock space over $\gH=L^2(\bT^3;\CC)$, associated with the interaction $\bW_{\lambda,\varepsilon}^{\ren}$ in \eqref{eq:Phi43-quantum-interaction}, with $h=-\Delta+\1$. We write $Z_{\lambda,\varepsilon} =\tr_{\gF}e^{-\lambda\dG(h)-\bW_{\lambda,\varepsilon}^{\ren}}$.

Let $\Gamma_{\rm free} =Z_{\rm free}^{-1}e^{-\lambda\dG(h)}$ be the free Gibbs state defined in\kern0pt{} \eqref{eq:def-free-gibbs-state}, with $Z_{\rm free}=\tr_{\gF}e^{-\lambda\dG(h)}$. Let $\mu_0$ be the Gaussian measure with covariance $(-\Delta+\1)^{-1}$, and let $\dd\nu_\varepsilon(u) =z_{0,\varepsilon}^{-1}e^{-\cW_{0,\varepsilon}[u]}\dd\mu_0(u)$ be the Hartree Gibbs measure associated with\kern0pt{} \eqref{eq:Phi43-classical-interaction}, with $z_{0,\varepsilon} =\int e^{-\cW_{0,\varepsilon}[u]}\dd\mu_0(u)$.

Then, for $\varepsilon=\lambda^\eta$ with $0<\eta<1/26$,
$$
 \Big|
 \log\frac{Z_{\lambda,\varepsilon}}{Z_{\rm free}}
 -
 \log z_{0,\varepsilon}
 \Big|
 \longrightarrow0
 \qquad(\lambda\to0^+).
$$
Moreover, for every fixed $k\geq1$,
$$
 \big\|k!\lambda^k\Gamma_{\lambda,\varepsilon}^{(k)}
 -\gamma_{\nu_\varepsilon}^{(k)}\big\|_{\gS^2}
 \longrightarrow0
 \qquad(\lambda\to0^+).
$$
Here we write $\gamma_{\nu_\varepsilon}^{(k)} =\int |u^{\otimes k}\rangle\langle u^{\otimes k}| \dd\nu_\varepsilon(u)$.
\end{theorem}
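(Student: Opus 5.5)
The plan is to deduce Theorem~\ref{thm:Phi43} from the abstract quantitative comparison, Theorem~\ref{thm:abstract-comparison}, in the same way as Theorem~\ref{thm:Phi42}. The abstract theorem bounds the relative free energy difference and $\|k!\lambda^k\Gamma^{(k)}_\lambda-\gamma^{(k)}_{\nu_0}\|_{\gS^2}$ in terms of a few model parameters:
\begin{itemize}
\item $\lambda$, together with norms of the interaction such as $w(0)=\sum_k\widehat w(k)/(2\pi)^3$ and weighted sums of $\widehat w$;
\item the mismatch $|\alpha_\lambda-\alpha_0|$ between the quantum and classical mass counterterms;
\item a priori classical bounds on $\|\gamma^{(k)}_{\nu_0}\|_{\gS^2}$ for fixed $k$.
\end{itemize}
The task is therefore to check these inputs for $\widehat w(k)=\widehat v(\varepsilon k)$, $\alpha_{0,\varepsilon}=a_\varepsilon-6b_\varepsilon-m+1$ and $\alpha_{\lambda,\varepsilon}=\alpha_{0,\varepsilon}-e_{\lambda,\varepsilon}$, and to show that the resulting error vanishes when $\varepsilon=\lambda^\eta$ with $\eta<1/26$.

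\textbf{Step 1: interaction norms.} By Assumption~\ref{ass:Phi43}, $w_\varepsilon(0)\lesssim\sum_k(1+|\varepsilon k|)^{-3-\delta_0}\lesssim\varepsilon^{-3}$. Similarly, sums of the type $\sum_k\widehat v(\varepsilon k)h(k)^{-s}$ are bounded by explicit powers of $\varepsilon^{-1}$, possibly with logarithms. The counterterms are also controlled polynomially: $a_\varepsilon\sim\varepsilon^{-1}$ and $b_\varepsilon\sim\log\varepsilon^{-1}$.

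\textbf{Step 2: quantum mass mismatch.} The quantity $e_{\lambda,\varepsilon}$ in \eqref{eq:Phi43-quantum-counterterm} does not tend to zero by itself, because of the term $\frac{\lambda}{2}w_\varepsilon(0)$. Its role is different. The abstract comparison relates the quantum Gibbs state to the Hartree measure only after the quantum Wick ordering is matched to the classical one. That matching consumes the differences between $\lambda\Gamma^{(1)}_{\rm free}=\lambda(e^{\lambda h}-1)^{-1}$ and $h^{-1}$, together with the normal-ordering shift $\frac{\lambda}{2}w_\varepsilon(0)$.

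To make this precise, I would compute $\lambda\rho_0-\frac{1}{(2\pi)^3}\sum_n h(n)^{-1}$. This is expanded with the Poisson summation and zeta-function asymptotics of \cite[Lemma~B.1]{LNR21} and \cite{NZZ25}, giving
\[
\frac{\zeta(3/2)}{(4\pi)^{3/2}}\lambda^{-1/2}+C_0+o(1).
\]
Subtracting this expansion shows that the residual mismatch entering the abstract theorem is $o(1)$, plus the convolution error
\[
\sum_n\widehat v(\varepsilon n)\bigl(\lambda(e^{\lambda h(n)}-1)^{-1}-h(n)^{-1}\bigr),
\]
which is $O(\lambda\varepsilon^{-3})$ up to logarithms.

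\textbf{Step 3: classical a priori bounds.} I would need $\sup_\varepsilon\|\gamma^{(k)}_{\nu_\varepsilon}\|_{\gS^2}<\infty$ for each fixed $k$. The $L^p(\mu_0)$ density route fails here, because the measures become singular. Instead I invoke Proposition~\ref{prop:classical-Hartree-correlations} with $\delta=0$, which rests on the stationary estimates of \cite{NZZ25}.

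\textbf{Step 4: the exponent.} Finally, I would insert the bounds of Steps~1--3 into the error term of Theorem~\ref{thm:abstract-comparison}. The output should be a bound of the form $C_k\lambda^{a}\varepsilon^{-b}$ with $a/b=1/26$. The bookkeeping of exponents in Step~4, together with the counterterm expansion in Step~2, is what produces the threshold $1/26$.

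\textbf{Main obstacle.} I expect the main difficulty to be Step~2. One must verify that the abstract theorem tolerates a divergent mass counterterm $\alpha_{0,\varepsilon}\sim\varepsilon^{-1}$, entering through the factors $e^{-\cW}$, while only the difference $\alpha_\lambda-\alpha_0$ after Wick matching has to be small. One must also check that the fluctuation and Gibbs weight estimates lose at most polynomial factors $\varepsilon^{-b}$. If these losses are worse, for instance exponential in $a_\varepsilon$, I would first try to absorb the linear mass term into the positive quartic term by completing the square. This is legitimate because $\widehat v\geq0$ and $\widehat v(0)=1$, which keeps the $L^p$ bounds on the weights polynomial in $\varepsilon^{-1}$.
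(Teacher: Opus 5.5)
Your route matches the paper's. You apply Theorem~\ref{thm:abstract-comparison} with $w=w_\varepsilon$, $\alpha_0=a_\varepsilon-6b_\varepsilon-m+1$, $\alpha_\lambda=\alpha_0-e_{\lambda,\varepsilon}$ and $\beta_0=\beta_\lambda=0$, and you take $\sup_\varepsilon\|\gamma_{\nu_\varepsilon}^{(2k)}\|_{\gS^2}<\infty$ from Proposition~\ref{prop:classical-Hartree-correlations} with $\delta=0$. However, Step~2 rests on a misreading. Step~4, which is the actual content of the proof once the abstract theorem is available, is not carried out.

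\textbf{Step~2.} No further Wick matching is needed. The theorem is stated for $\bW_{\lambda,\varepsilon}^{\ren}$, in which $\lambda\rho_0$ and $\frac\lambda2 w_\varepsilon(0)$ are already absorbed into the chemical potential. The passage from $\mathsf C_\lambda$- to $\mathsf C_0$-Wick ordering and the quantum fluctuations are handled inside Theorem~\ref{thm:abstract-comparison}, where they produce the term $\lambda^2(N_{\rm free}+1)w(0)$ in $\mathfrak e_\lambda$. The only counterterm input is $|\alpha_\lambda-\alpha_0|=|e_{\lambda,\varepsilon}|$, and contrary to your claim it does tend to zero. The expansion \cite[(C.3)]{NZZ25} of $\lambda\rho_0$ gives $|e_{\lambda,\varepsilon}|\lesssim\lambda^{1/2}+\lambda w_\varepsilon(0)\lesssim\lambda^{1/2}+\lambda\varepsilon^{-3}$. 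The quantity $\lambda\rho_0-(2\pi)^{-3}\sum_n h(n)^{-1}$ that you propose to expand diverges in three dimensions; the expansion you quote is that of $\lambda\rho_0$ itself. A polynomial rate, not merely $o(1)$, is essential. The quantity $\mathfrak a_\lambda^2\mathfrak r_\lambda$ contains $\mathfrak a_\lambda^2E^{1/2}\mathfrak e_\lambda^{1/2}$ with $\mathfrak a_\lambda^2E^{1/2}\gtrsim\varepsilon^{-5}$, so you need $\mathfrak e_\lambda\ll\varepsilon^{10}$.

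\textbf{Step~4.} The inputs are $\lambda^2N_{\rm free}\asymp\lambda^{1/2}$, $w_\varepsilon(0)\lesssim\varepsilon^{-3}$, $E\lesssim\sum_n\widehat v(\varepsilon n)/(1+|n|)\lesssim\varepsilon^{-2}$ and $|\alpha_0|+|\alpha_\lambda|\lesssim_{v,m}\varepsilon^{-1}$. For $\lambda\leq\varepsilon^2$ these give $\mathfrak e_\lambda\lesssim\lambda^{1/2}\varepsilon^{-3}$, $\mathfrak a_\lambda\lesssim\varepsilon^{-2}$ and $\mathfrak a_\lambda^2\mathfrak r_\lambda\lesssim\lambda^{1/4}\varepsilon^{-13/2}+\lambda^{1/2}\varepsilon^{-7}$. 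So the threshold $1/26$ comes from the cross term $E^{1/2}\mathfrak e_\lambda^{1/2}$, with $\mathfrak e_\lambda$ dominated by $\lambda^2N_{\rm free}w_\varepsilon(0)$, not from the counterterm expansion.

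You also need to choose $L$. The a priori term $\mathfrak a_\lambda^{2k}(\mathfrak a_\lambda^2\mathfrak r_\lambda)^L$ in \eqref{eq:main-HS} carries a factor $\varepsilon^{-4k}$, so with $L=1$ the admissible range of $\eta$ would shrink as $k$ grows. The paper takes $L$ with $L(1/4-13\eta/2)>4k\eta$, so that this term vanishes for every $\eta<1/26$ at each fixed $k$. Your claimed single bound $C_k\lambda^a\varepsilon^{-b}$ with $a/b=1/26$ does not come out otherwise.

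\textbf{Your ``main obstacle''.} This is not an obstacle. Theorem~\ref{thm:abstract-comparison} holds for arbitrary real counterterms, with polynomial dependence through $\mathfrak a_\lambda$ and $\mathfrak r_\lambda$. Completing the square in the zero mode (Lemma~\ref{lem:Lp-estimate-three-weights}) by itself gives weight bounds exponential in $\varepsilon^{-2}$, not polynomial ones. The losses become polynomial only because the H\"older exponent $r=2+\mathfrak a_\lambda$ is tied to $\mathfrak a_\lambda$, and this is already built into the theorem.
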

\textcolor{red}{Theorem~\ref{thm:Phi43} follows from Theorem~\ref{thm:abstract-comparison}, as shown at the end of this section. Let $\nu_{\Phi^4_3}$ be the local measure in \cite[Theorem~2.6]{NZZ25}, with mass parameter $m_0=m+2C_1+2C_2$ for the counterterms in \eqref{eq:Phi43-classical-interaction}; the constants $C_1,C_2$ are defined in \cite[(2.12)]{NZZ25}. The bounds in Proposition~\ref{prop:classical-Hartree-correlations} strengthen their classical correlation convergence to the Hilbert--Schmidt norm. Combining this with Theorem~\ref{thm:Phi43} gives the following local limit.}

\begingroup\color{red}
\begin{corollary}[Derivation of $\Phi^4_3$]
\label{cor:Phi43-local-HS}
Under the assumptions of Theorem~\ref{thm:Phi43}, for every fixed $k\geq1$, the form $\gamma_{\nu_{\Phi^4_3}}^{(k)}$ has a unique positive Hilbert--Schmidt realization on $\bigotimes_{\mathrm{sym}}^k\gH$, and
\begin{equation}\label{eq:Phi43-local-classical-HS}
 \big\|\gamma_{\nu_\varepsilon}^{(k)}
 -\gamma_{\nu_{\Phi^4_3}}^{(k)}\big\|_{\gS^2}
 \longrightarrow0
 \qquad(\varepsilon\to0^+).
\end{equation}
Consequently, for every fixed $0<\eta<1/26$,
$$\stepcounter{equation}
 \big\|k!\lambda^k\Gamma_{\lambda,\lambda^\eta}^{(k)}
 -\gamma_{\nu_{\Phi^4_3}}^{(k)}\big\|_{\gS^2}
 \longrightarrow0
 \qquad(\lambda\to0^+).
$$
\end{corollary}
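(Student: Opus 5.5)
The plan is to deduce the corollary from two ingredients: the weak/distributional convergence of Hartree correlations to $\Phi^4_3$ correlations from \cite[Theorem~2.6]{NZZ25}, and a uniform-in-$\varepsilon$ Hilbert--Schmidt bound with a uniform Fourier tail, which is what Proposition~\ref{prop:classical-Hartree-correlations} supplies. The upgrade from weak to $\gS^2$ convergence is a compactness-plus-tightness argument. The final quantum statement then follows from the triangle inequality with Theorem~\ref{thm:Phi43}.

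\textbf{Step 1 (uniform weighted bounds).} Fix $k$. From Proposition~\ref{prop:classical-Hartree-correlations}, I take, for some $\delta>0$,
$$
 \sup_{0<\varepsilon\leq1}\big\|(h^{\delta/2})^{\otimes k}\gamma_{\nu_\varepsilon}^{(k)}(h^{\delta/2})^{\otimes k}\big\|_{\gS^2}<\infty .
$$
If only a one-sided weight is available, $\|(h^{\delta})^{\otimes k}\gamma_{\nu_\varepsilon}^{(k)}\|_{\gS^2}\leq C$ suffices. Since $(h^{-\delta})^{\otimes k}$ is bounded, this bound gives the tail estimate
$$
 \|(\1-P_K^{\otimes k})\gamma_{\nu_\varepsilon}^{(k)}\|_{\gS^2}\leq CK^{-\delta},
$$
uniformly in $\varepsilon$. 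The same holds with $P_K^{\otimes k}$ placed on the right, by taking adjoints.

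\textbf{Step 2 (existence of the limit and convergence).} The operators $\gamma_{\nu_\varepsilon}^{(k)}$ are positive and bounded in $\gS^2$. On the finite-dimensional range of $P_K^{\otimes k}$, matrix elements are finite Fourier tests. By \cite[Theorem~2.6]{NZZ25}, these converge as $\varepsilon\to0$ to the corresponding moments of $\nu_{\Phi^4_3}$, so $P_K^{\otimes k}\gamma_{\nu_\varepsilon}^{(k)}P_K^{\otimes k}$ converges in norm for each $K$. Combined with the uniform tail bound of Step 1, this shows that $(\gamma_{\nu_\varepsilon}^{(k)})$ is Cauchy in $\gS^2$ as $\varepsilon\to0$.

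Let $\gamma$ denote the limit. It is positive and Hilbert--Schmidt, and its matrix elements on trigonometric polynomials agree with those of the form $\gamma_{\nu_{\Phi^4_3}}^{(k)}$. Since trigonometric polynomials form a core, $\gamma$ is the unique bounded realization of that form, which gives uniqueness. This proves \eqref{eq:Phi43-local-classical-HS}.

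\textbf{Step 3 (quantum statement).} For $\varepsilon=\lambda^\eta$ with $0<\eta<1/26$, write
$$
 \big\|k!\lambda^k\Gamma^{(k)}_{\lambda,\varepsilon}-\gamma_{\nu_{\Phi^4_3}}^{(k)}\big\|_{\gS^2}
 \leq
 \big\|k!\lambda^k\Gamma^{(k)}_{\lambda,\varepsilon}-\gamma_{\nu_\varepsilon}^{(k)}\big\|_{\gS^2}
 +
 \big\|\gamma_{\nu_\varepsilon}^{(k)}-\gamma_{\nu_{\Phi^4_3}}^{(k)}\big\|_{\gS^2}.
$$
The first term tends to zero by Theorem~\ref{thm:Phi43}. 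The second tends to zero by Step 2, since $\varepsilon\to0$.

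\textbf{Main obstacle.} The essential difficulty is the uniform Fourier-tail bound in Step 1 for $\delta>0$. It cannot come from $L^p(\mu_0)$ density bounds, because $\nu_{\Phi^4_3}$ is singular with respect to $\mu_0$. Instead it must come from the stationary coupling of \cite{NZZ25}. There the field is decomposed as $u=Z+Y$, with $Z$ Gaussian and $Y$ a more regular remainder controlled in $\gH^{1/2-}$ with moment bounds uniform in $\varepsilon$. Expanding $u^{\otimes k}$ and applying Hölder's inequality then gives the weighted $\gS^2$ bound. The Gaussian part is handled by \eqref{eq:Gaussian-field-correlations}: $(h^{-1+\delta})^{\otimes k}$ is Hilbert--Schmidt for small $\delta$ in $d=3$, since $\sum_n h(n)^{-2+2\delta}<\infty$ for $\delta<1/4$.

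A secondary point needs checking in Step 2: the distributional convergence in \cite{NZZ25} must include the mixed moments $\int \prod\langle e_{n_i},u\rangle\overline{\langle e_{m_i},u\rangle}\,\dd\nu_\varepsilon$. I expect these to follow from weak convergence of the measures together with uniform integrability, which the same moment bounds provide.
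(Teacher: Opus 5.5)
Your argument is correct and is essentially the paper's: the weighted bound of Proposition~\ref{prop:classical-Hartree-correlations} gives a uniform Fourier tail, \cite[Theorem~2.6]{NZZ25} gives convergence of the compressions $P_K^{\otimes k}\gamma_{\nu_\varepsilon}^{(k)}P_K^{\otimes k}$ for fixed $K$, and the quantum statement then follows from the triangle inequality with Theorem~\ref{thm:Phi43}. The one difference is how you get the $\Phi^4_3$ operator: you build it as an $\gS^2$-Cauchy limit, whereas the paper takes its realization and tail bound directly from the $\varepsilon=0$ case of Proposition~\ref{prop:classical-Hartree-correlations}, proved there by lower semicontinuity. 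To identify your limit with the form on all $\gH^1$ test tensors, replace the appeal to a ``core'' by the Fatou bound $\int|\langle\phi,u\rangle|^{2k}\dd\nu_{\Phi^4_3}(u)\leq\|\gamma\|\,\|\phi\|^{2k}$ (obtained from finite Fourier $\phi$) together with Lemma~\ref{lem:correlation-realization-approximation}.
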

The proof of Corollary~\ref{cor:Phi43-local-HS} is given in Section~\ref{sec:derivation-Phi4d}.
\endgroup

\begin{remark}
In Theorem~\ref{thm:Phi43}, we establish the Hartree comparison of \cite[Theorem~2.8]{NZZ25} for the explicit range $0<\eta<1/26$. For polynomially shrinking interaction ranges, the authors prove \cite[(2.24)]{NZZ25} for $1\leq k\leq6$ when tested against tensors with finite Fourier support of the form $\varphi^{\otimes k}$. Here we obtain Hilbert--Schmidt convergence for every fixed $k$. \textcolor{red}{The stationary coupling constructed in \cite[Lemma~4.17]{NZZ25} gives a joint law $\pi_\varepsilon$ with marginals $\nu_\varepsilon$ and $\mu_0$. The moment estimates for the remainder and stochastic terms used in the proof of \cite[Theorem~4.1]{NZZ25} imply
$$
 \sup_{0<\varepsilon\leq1}
 \int \|u-z\|_{\gH}^{p}\dd\pi_\varepsilon(u,z)
 \leq C_{p,v,m},
 \qquad p\geq2.
$$
Appendix~\ref{app:classical-Hartree-correlations} combines this coupling with their remainder estimates and Gaussian integration to prove \eqref{eq:Phi43-local-weighted-HS}. At $\delta=0$, this controls the Hartree correlation term in \eqref{eq:main-HS} and proves the density-matrix convergence in Theorem~\ref{thm:Phi43}. For a positive $\delta$, the same bound gives uniform Fourier tails and yields the local Hilbert--Schmidt limit in Corollary~\ref{cor:Phi43-local-HS}.}
\end{remark}

\begin{remark}
The exponent $1/26$ in Theorem~\ref{thm:Phi43} and Corollary~\ref{cor:Phi43-local-HS} can be improved by refining the three-dimensional estimates within the same upper-symbol and relative-entropy framework. We retain the present bounds for a concise, unified treatment of dimensions two and three.
\end{remark}

\medskip

\subsection{A quantitative comparison}

We now formulate a comparison that applies to both models. For $d\in\{2,3\}$, we consider
\begin{equation}\label{eq:comparison-classical-interaction}
 \cW_0[u]
 =
 \frac12\sum_{k\in\Z^d}
 \widehat w(k)|\cM_0^{e_k}[u]|^2
 -
 \alpha_0\cM_0^\1[u]
 -
 \beta_0
\end{equation}
on the classical side. To compare it with the quantum gas, we use
\begin{equation}\label{eq:comparison-quantum-interaction}
 \bW_\lambda^{\ren}
 =
 \frac12\sum_{k\in\Z^d}
 \widehat w(k)|\mathbb M_\lambda^{e_k}|^2
 -
 \alpha_\lambda\mathbb M_\lambda^\1
 -
 \beta_\lambda,
\end{equation}
as in \eqref{eq:abstract-classical-W0} and \eqref{eq:abstract-quantum-W}. We allow $\alpha_\lambda,\beta_\lambda\in\R$ to depend on $\lambda$, and use $\alpha_0,\beta_0\in\R$ to specify the target Gibbs measure. We impose the following assumption:
\begin{assumption}\label{ass:interaction} The interaction $w$ is real, even, and positive definite, with
\begin{equation}
 w(x)=\frac1{(2\pi)^d}
 \sum_{k\in\Z^d}\widehat w(k)e^{\ii k\cdot x},
 \qquad
 \widehat w(k)=\widehat w(-k)\geq0.
 \label{eq:interaction-assumption}
\end{equation}
Moreover,
\begin{equation}\label{eq:def-of-c0-wo}
 w(0)=\frac1{(2\pi)^d}\sum_k\widehat w(k)<\infty,
 \qquad
 c_0:=\frac{\widehat w(0)}{2(2\pi)^d}>0.
\end{equation}
\end{assumption}

The comparison below keeps the dependence on the interaction explicit. We measure its contribution through the Gaussian covariance by
\begin{equation}\label{eq:physical-errors}
 E
 =
 \frac1{(2\pi)^{2d}}\sum_{k,j\in\Z^d}
 \frac{\widehat w(k)}{h(j)h(j+k)},
 \qquad h(k)=1+|k|^2.
\end{equation}
To account for the free particle number and the differences between the quantum and classical counterterms, we use
\begin{equation}\label{eq:comparison-e}
 \mathfrak e_\lambda
 =
 \lambda^2(N_{\rm free}+1)w(0)
 +|\alpha_\lambda-\alpha_0|
 +|\beta_\lambda-\beta_0|.
\end{equation}
Combining these contributions with the two mass counterterms, we define
\begin{equation}\label{eq:comparison-r}
 \mathfrak r_\lambda
 =
 E^{1/2}\mathfrak e_\lambda^{1/2}
 +( |\alpha_\lambda|+|\alpha_0|)
 (\lambda^2N_{\rm free})^{1/2}
 +\mathfrak e_\lambda.
\end{equation}
To keep track of the size of the interaction and the mass counterterms, we also use
\begin{equation}\label{eq:comparison-a}
 \mathfrak a_\lambda
 =
 1+E+\alpha_\lambda^2+\alpha_0^2+\mathfrak e_\lambda.
\end{equation}
The product $\mathfrak a_\lambda^2\mathfrak r_\lambda$ will control the relative free-energy error. In the density-matrix estimates, we will also use
\begin{equation}\label{eq:comparison-b}
 b_\lambda
 =
 \lambda\left(
 \widehat w(0)\lambda\rho_0+|\alpha_\lambda|
 \right).
\end{equation}

\Needspace{8\baselineskip}
\begin{theorem}[A quantitative two-body comparison] \label{thm:abstract-comparison} Let $d\in\{2,3\}$, and let $w$ satisfy Assumption~\ref{ass:interaction}. Consider the Gibbs state $\Gamma_\lambda =Z_\lambda^{-1}e^{-\lambda\dG(h)-\bW_\lambda^{\ren}}$ on the bosonic Fock space over $\gH=L^2(\bT^d;\CC)$, associated with the interaction $\bW_\lambda^{\ren}$ in \eqref{eq:comparison-quantum-interaction}, with $h=-\Delta+\1$. We write $Z_\lambda=\tr_{\gF}e^{-\lambda\dG(h)-\bW_\lambda^{\ren}}$.

Let $\Gamma_{\rm free} =Z_{\rm free}^{-1}e^{-\lambda\dG(h)}$ be the free Gibbs state defined in\kern0pt{} \eqref{eq:def-free-gibbs-state}, with $Z_{\rm free}=\tr_{\gF}e^{-\lambda\dG(h)}$. Let $\mu_0$ be the Gaussian measure with covariance $(-\Delta+\1)^{-1}$, and let $\dd\nu_0(u)=z_0^{-1}e^{-\cW_0[u]}\dd\mu_0(u)$ be the target Gibbs measure associated with \eqref{eq:comparison-classical-interaction}, with $z_0=\int e^{-\cW_0[u]}\dd\mu_0(u)$. We write $\gamma_{\nu_0}^{(k)} =\int |u^{\otimes k}\rangle\langle u^{\otimes k}|\dd\nu_0(u)$ and use\kern0pt{} \eqref{eq:rdm-convention} for the quantum reduced density matrices.

Then, for sufficiently small $\lambda>0$,
\begin{equation}\label{eq:main-FE}
 \Big|
 \log\frac{Z_\lambda}{Z_{\rm free}}-\log z_0
 \Big|
 \lesssim_{d,c_0} \mathfrak a_\lambda^2\mathfrak r_\lambda.
\end{equation}
Suppose in addition that $w\geq0$ and $ \mathfrak a_\lambda^2\mathfrak r_\lambda\ll 1. $ Then, for every fixed $k,L\in\mathbb N$,
\begin{equation}\label{eq:main-moment}
 k!\lambda^k\big\|\Gamma_\lambda^{(k)}\big\|_{\gS^2}
 \lesssim_{d,c_0,k,L}
 e^{kb_\lambda}
 \big[
 \big\|\gamma_{\nu_0}^{(k)}\big\|_{\gS^2}
 +\mathfrak a_\lambda^k
 (\mathfrak a_\lambda^2\mathfrak r_\lambda)^L
 \big].
\end{equation}
Moreover,
\begin{equation}
 \big\|k!\lambda^k\Gamma_\lambda^{(k)}
 -\gamma_{\nu_0}^{(k)}\big\|_{\gS^2}^{2}
 \lesssim_{d,c_0,k,L}e^{2kb_\lambda}
 \big(
 \mathfrak a_\lambda\mathfrak r_\lambda^{1/2}
 +\lambda^2 N_{{\rm free}}
 \big)
 \big[
 1+\big\|\gamma_{\nu_0}^{(2k)}\big\|_{\gS^2}
 +\mathfrak a_\lambda^{2k}
 (\mathfrak a_\lambda^2\mathfrak r_\lambda)^L
 \big].
 \label{eq:main-HS}
\end{equation}
Each Hilbert--Schmidt norm is taken on the corresponding symmetric particle space; the quantities in \eqref{eq:physical-errors}--\eqref{eq:comparison-b} record the dependence on the interaction.
\end{theorem}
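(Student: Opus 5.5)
We prove the free-energy bound \eqref{eq:main-FE} by a two-sided variational argument built on upper and lower symbols, following the upper-symbol approach of \cite{Cub26}. We realize the free state $\Gamma_{\rm free}$ through coherent states. Its upper symbol is the Gaussian measure $\mu_\lambda$ on the scaled field $u$, with covariance $\lambda(e^{\lambda h}-1)^{-1}$, which is comparable to $h^{-1}$ up to errors controlled by $\lambda^2 N_{\rm free}$. Writing $\log(Z_\lambda/Z_{\rm free})$ and $\log z_0$ through the Gibbs variational principles relative to $\Gamma_{\rm free}$ and $\mu_0$ gives
$$
 -\log\frac{Z_\lambda}{Z_{\rm free}}=\inf_\Gamma\big\{\mathcal H(\Gamma,\Gamma_{\rm free})+\tr(\bW_\lambda^{\ren}\Gamma)\big\},
 \qquad
 -\log z_0=\inf_\nu\big\{\mathcal H(\nu,\mu_0)+\int\cW_0\dd\nu\big\}.
$$

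For the \emph{upper bound} on the quantum free energy, we use the trial state $\Gamma=\int|\xi(u)\rangle\langle\xi(u)|\,f(u)\dd\mu(u)$ built from $\nu_0$. Quantum entropy is majorized by classical relative entropy (Berezin--Lieb). The interaction $\langle\xi(u),\bW^{\ren}_\lambda\xi(u)\rangle$ differs from $\cW_0[u]$ by Wick-reordering terms of size $\lambda^2(N_{\rm free}+1)w(0)$ and by the counterterm differences, which together give $\mathfrak e_\lambda$. The mixed terms are handled by Cauchy--Schwarz: $E^{1/2}\mathfrak e_\lambda^{1/2}$ comes from pairing $\sum\widehat w|\cM^{e_k}|^2$, whose $\mu_0$-variance is $E$, with the error, and $(|\alpha_\lambda|+|\alpha_0|)(\lambda^2N_{\rm free})^{1/2}$ comes from the mass terms. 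These Gibbs-weighted averages are converted into $\mu_0$-averages using $L^p(\mu_0)$ bounds on $e^{-\cW_0}$, which are polynomial in $\mathfrak a_\lambda$ (Lemma~\ref{lem:Lp-estimate-three-weights}). This is the source of the factor $\mathfrak a_\lambda^2$.

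For the \emph{lower bound}, we take the lower symbol (Husimi) measure of $\Gamma_\lambda$ and apply the monotonicity inequality $\mathcal H(\Gamma_\lambda,\Gamma_{\rm free})\ge\mathcal H(\mu_{\Gamma_\lambda},\mu_{\rm free})$. The quantum energy is then compared with the classical functional, with error again of order $\mathfrak r_\lambda$. Here positivity of $\widehat w$ and the condition $c_0>0$ supply coercivity of the quadratic term, $\tfrac12\widehat w(0)|\cM^{\1}|^2$, against the mass term $\alpha\cM^\1$; this controls the particle-number fluctuations.

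For the density matrices, relative entropy gives the remaining control. The free-energy estimate yields
$$
 \mathcal H(\mu_{\Gamma_\lambda},\nu_0)\lesssim\mathfrak a_\lambda^2\mathfrak r_\lambda .
$$
The \emph{moment bound} \eqref{eq:main-moment} follows by a weighted entropy inequality: for a test observable $F$, $\int F\dd\mu_{\Gamma}\le\mathcal H+\log\int e^{F}\dd\nu_0$, applied iteratively $L$ times to obtain powers $(\mathfrak a_\lambda^2\mathfrak r_\lambda)^L$. This gives $\|\gamma^{(k)}\|_{\gS^2}$ of the Husimi measure, which we then convert to $k!\lambda^k\Gamma^{(k)}$ via the kernel relation between lower symbols and density matrices. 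The factor $e^{kb_\lambda}$ arises from a kernel-domination step: comparing $e^{-\lambda\dG(h)-\bW}$ with the Gaussian kernel after shifting the chemical potential by $b_\lambda$, using $w\ge0$.

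For the \emph{Hilbert--Schmidt difference} \eqref{eq:main-HS}, we expand the square:
$$
 \big\|k!\lambda^k\Gamma_\lambda^{(k)}-\gamma_{\nu_0}^{(k)}\big\|_{\gS^2}^2=\tr\big[(k!\lambda^k\Gamma_\lambda^{(k)})^2\big]-2\tr\big[k!\lambda^k\Gamma_\lambda^{(k)}\gamma_{\nu_0}^{(k)}\big]+\tr\big[(\gamma_{\nu_0}^{(k)})^2\big].
$$
Each term is expressed as an integral of $|\langle u^{\otimes k},v^{\otimes k}\rangle|^2$ against products of measures, and is controlled by Pinsker/entropy ($\mathfrak r_\lambda^{1/2}$) times $2k$-th moments. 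This explains the appearance of $\gamma^{(2k)}$, and $\lambda^2N_{\rm free}$ accounts for the lower/upper symbol discrepancy.

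The \textbf{main obstacle} is the kernel-domination step. It requires bounding the quantum reduced density matrices by classical correlation operators without losing powers of $\varepsilon$, so that the quantum bounds reduce to uniform bounds on $\|\gamma_{\nu_0}^{(k)}\|_{\gS^2}$. The approach I would try first is to write the Gibbs state as a Duhamel/Trotter expansion and use the positivity of $w$ to drop the interaction in the kernel, up to the $b_\lambda$ shift.
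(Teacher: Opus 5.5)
\textbf{Main gap: the trial state cannot be built from $\nu_0$.} Every later step depends on this. Since $\nu_0\ll\mu_0$ and $\tr h^{-1}=\infty$ for $d\in\{2,3\}$, you have $\nu_0(\gH)=0$. The upper symbol of $\Gamma_{\rm free}$, however, is $\mu_\lambda$, and $\mu_\lambda(\gH)=1$ because $\tr\mathsf C_\lambda=\lambda N_{\rm free}<\infty$. So $\xi(u/\sqrt\lambda)$ is undefined $\nu_0$-almost surely, $\cH_{\cl}(\nu_0,\mu_\lambda)=+\infty$, and the Berezin--Lieb bound gives nothing.

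Your energy claim fails for the same reason. One has $\langle\xi(u/\sqrt\lambda),\bW_\lambda^{\ren}\xi(u/\sqrt\lambda)\rangle=\cW_\lambda[u]+\tfrac{\lambda}{2}w(0)\|u\|^2$, where $\cW_\lambda$ is Wick-ordered with respect to $\mathsf C_\lambda$. By contrast, $\cW_0$ is Wick-ordered with respect to $\mathsf C_0$ and is defined only $\mu_0$-almost everywhere. The two differ by the divergent renormalization $\tr(\mathsf C_0-\mathsf C_\lambda)=\infty$, not by a term of size $\mathfrak e_\lambda$.

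The missing idea is the intermediate measure $\nu_\lambda\propto e^{-\cW_\lambda}\dd\mu_\lambda$, which is the upper symbol of the paper's trial state $\widetilde\Gamma_\lambda$. It comes with an independent Gaussian increment $v_\lambda$ of covariance $\mathsf C_0-\mathsf C_\lambda$, so that $u_\lambda+v_\lambda\sim\mu_0$ and $\mathbb E[\cM_0^A[u_\lambda+v_\lambda]\mid u_\lambda]=\cM_\lambda^A[u_\lambda]=\mathbb E[\mathbb M_\lambda^A\mid u_\lambda]$. On this common space, the identity $\mathcal E_{\lambda,\diamond}=|\mathbb E[(f_\lambda-f_\diamond)\Delta_{\lambda,\diamond}]|$ compares the three weights. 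It is closed by H\"older with $\|\Delta_{\lambda,\diamond}\|_{L^p}\lesssim p^2\mathfrak r_\lambda$ and $\|f\|_{L^q}\leq e^{C\mathfrak a_\lambda/r}$ at $r=2+\mathfrak a_\lambda$. This choice of $r$ is what produces $\mathfrak a_\lambda^2$; polynomial $L^p$ bounds on $e^{-\cW_0}$ are not the source.

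The reverse inequality needs no lower symbols. Golden--Thompson gives $Z_\lambda/Z_{\rm free}\leq z_{\mathrm q}=\tr(\Gamma_{\rm free}e^{-\bW_\lambda^{\ren}})$, an exact upper-symbol integral because $e^{-\bW_\lambda^{\ren}}$ is a multiplication operator. Proposition~\ref{prop:quantum-entropy-interface} then bounds both $\log(Z_{\rm free}z_{\mathrm q}/Z_\lambda)$ and $\cH(\widetilde\Gamma_\lambda,\Gamma_\lambda)$ by $\mathcal E_{\lambda,\mathrm q}$. Your Husimi lower bound asserts an $O(\mathfrak r_\lambda)$ energy error without giving a mechanism. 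Lower symbols need a resolution of the identity by a Lebesgue-type measure, which does not exist on infinite-dimensional $\gH$. That route therefore requires localization and tail estimates of the kind used in \cite{LNR21,JR26}.

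\textbf{Further gaps in the density-matrix part.}

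\emph{Converting Husimi moments.} On a finite-dimensional localization, Husimi moments are anti-normally ordered. They equal $k!\lambda^k\Gamma^{(k)}$ plus terms with identity factors whose Hilbert--Schmidt norms grow with the dimension. So your conversion to $\Gamma_\lambda^{(k)}$ in $\gS^2$ fails as stated.

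\emph{The power $L$.} The Donsker--Varadhan inequality is linear in the entropy and needs $\int e^F\dd\nu_0<\infty$; iterating it does not yield $(\mathfrak a_\lambda^2\mathfrak r_\lambda)^L$. In the paper this power comes from $e^x\leq\mathrm e+e^x|x|^L$, applied to $f_\lambda=(z_{\mathrm c}/z_\lambda)e^{\Delta_{\lambda,\mathrm c}}f_{\mathrm c}$ and to $f_{\mathrm q}=(z_\lambda/z_{\mathrm q})e^{-\Delta_{\lambda,\mathrm q}}f_\lambda$. This is followed by H\"older with $\|\Delta\|_{L^{2Lr}}^L$ and the Gaussian bound $\|\gamma_f^{(k)}[u]\|_{\gS^2}\leq(2kq')^k\|f\|_{L^q}\|\mathsf C\|_{\gS^2}^k$.

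\emph{Kernel domination.} If you compare $e^{-\lambda\dG(h)-\bW_\lambda^{\ren}}$ with the free kernel, you lose a factor $e^{\alpha_\lambda^2/(4c_0)+\beta_\lambda}Z_{\rm free}/Z_\lambda$. In the two-dimensional application this is of size $e^{c\log^2(1/\varepsilon)}$, far beyond what \eqref{eq:main-moment} allows. The paper instead keeps the Gibbs weight inside the exact upper-symbol representation of $e^{-\bW_\lambda^{\ren}}\Gamma_{\rm free}$. It discards only the inserted particles, via $\bW_\lambda^{\ren}(\mathbf x\sqcup X)-\bW_\lambda^{\ren}(\mathbf x)\geq-kb_\lambda$ from $w\geq0$. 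To reach $\Gamma_\lambda$ itself, it writes $\|\Gamma_\lambda^{(k)}\|_{\gS^2}^2$ as a two-copy trace with an exchange permutation and applies Lemma~\ref{lem:permutation-Jensen} (Feynman--Kac plus Jensen). No Duhamel or Trotter expansion is used.

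\emph{Expanding the square.} This cannot work directly, because $\tr[(\Gamma_\lambda^{(k)})^2]$ is not an integral against classical measures. You must first pass from $\Gamma_\lambda$ to $\widetilde\Gamma_\lambda$ using $\cH(\widetilde\Gamma_\lambda,\Gamma_\lambda)\leq\mathcal E_{\lambda,\mathrm q}$, Pinsker, and \cite[Lemma~11.4]{LNR21}, with a priori bounds at orders $k,\dots,2k$. Next, compare $f_\lambda$ with $f_{\mathrm c}$ by Lemma~\ref{lem:classical-entropy-HS}. Finally, change $u_\lambda$ to $u_\lambda+v_\lambda$ by Lemma~\ref{lem:weighted-Gaussian-addition}. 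The term $\lambda^2N_{\rm free}$ is $\|\mathsf C_0-\mathsf C_\lambda\|_{\gS^2}^2$ from this last step.
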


For fixed $w,\alpha_0,\beta_0$, the convergence $\alpha_\lambda\to\alpha_0$ and $\beta_\lambda\to\beta_0$ keeps $\mathfrak a_\lambda$ bounded and gives $\mathfrak r_\lambda,b_\lambda\to0$. We therefore obtain relative free-energy convergence from \eqref{eq:main-FE}. When $w\geq0$, we also obtain Hilbert--Schmidt convergence to $\gamma_{\nu_0}^{(k)}$ by taking $L=1$ in \eqref{eq:main-HS}.

\begin{proof}[Proof of Theorems~\ref{thm:Phi42} and~\ref{thm:Phi43}] We apply Theorem~\ref{thm:abstract-comparison} with $w=w_\varepsilon$ and the counterterms specified in \eqref{eq:Phi42-counterterms} and \eqref{eq:Phi43-classical-interaction}--\eqref{eq:Phi43-quantum-interaction}, respectively. In both dimensions, $w_\varepsilon\geq0$, $c_0=1/[2(2\pi)^d]$, and $w_\varepsilon(0)\lesssim_v\varepsilon^{-d}$. By \cite[Lemma~B.1]{LNR21},
$$
 \lambda^2N_{\rm free}\asymp_d
 \begin{cases}
 \lambda\log(2/\lambda),&d=2,\\
 \lambda^{1/2},&d=3.
 \end{cases}
$$
Fix $k\geq1$ and take $\varepsilon=\lambda^\eta$ with the exponent in the range specified by the corresponding theorem, so that $0<\lambda\leq\varepsilon^2\leq1$.

For $d=2$, $E=E_\varepsilon$, and \cite[(5.3)]{FKSS23} gives $E+\tau_\varepsilon^2\lesssim_v\log^2(2/\varepsilon)$ in the normalization of \eqref{eq:Phi42-mass-counterterm}--\eqref{eq:Phi42-data}. Since the two sets of counterterms agree,
$$
 \mathfrak e_\lambda
 =
 \lambda^2(N_{\rm free}+1)w_\varepsilon(0)
 \lesssim_v
 \lambda\varepsilon^{-2}\log(2/\lambda).
$$
Hence $\mathfrak a_\lambda\lesssim_{v,\eta}\log^2(2/\lambda)$. With $L=1$, \eqref{eq:comparison-r} yields
$$
 \mathfrak a_\lambda^{2k}
 (\mathfrak a_\lambda^2\mathfrak r_\lambda)
 \lesssim_{k,v,\eta}
 \lambda^{1/2}\varepsilon^{-1}
 \log^{4k+11/2}(2/\lambda)=
 \lambda^{1/2-\eta}\log^{4k+11/2}(2/\lambda)
 \longrightarrow0.
$$

For $d=3$, the bounds following \cite[(2.12)]{NZZ25} give $a_\varepsilon+|b_\varepsilon|\lesssim_v\varepsilon^{-1}$. Using \cite[(5.3)]{FKSS23} and \eqref{eq:Phi43-profile}, we obtain
$$
 E
 \lesssim
 \sum_{n\in\Z^3}
 \frac{\widehat v(\varepsilon n)}{1+|n|}
 \lesssim_v\varepsilon^{-2}.
$$
Substituting \cite[(C.3)]{NZZ25} into \eqref{eq:Phi43-quantum-counterterm}, we obtain
$$
 |e_{\lambda,\varepsilon}|
 \lesssim
 \lambda^{1/2}+\lambda w_\varepsilon(0)
 \lesssim_v
 \lambda^{1/2}+\lambda\varepsilon^{-3}.
$$
Since $\lambda\leq\varepsilon^2$, \eqref{eq:comparison-e}--\eqref{eq:comparison-a} give $\mathfrak e_\lambda\lesssim_v\lambda^{1/2}\varepsilon^{-3}$ and $\mathfrak a_\lambda\lesssim_{v,m}\varepsilon^{-2}$. Choose a fixed integer $L\geq1$ such that $L(1/4-13\eta/2)>4k\eta$. Then
$$
 \mathfrak a_\lambda^{2k}
 (\mathfrak a_\lambda^2\mathfrak r_\lambda)^L
 \lesssim_{k,L,v,m}
 \varepsilon^{-4k}
 \big(
 \lambda^{1/4}\varepsilon^{-13/2}
 +\lambda^{1/2}\varepsilon^{-7}
 \big)^L\lesssim_{k,L,v,m}
 \lambda^{L(1/4-13\eta/2)-4k\eta}
 \longrightarrow0.
$$

Since $\mathfrak a_\lambda\geq1$, these estimates imply $\mathfrak a_\lambda^2\mathfrak r_\lambda\to0$ in both cases. Thus \eqref{eq:main-FE} proves the two free-energy limits. Moreover, \eqref{eq:free-number} and \eqref{eq:comparison-b} give
$$
 b_\lambda
 =
 \frac{\lambda^2N_{\rm free}}{(2\pi)^d}
 +\lambda|\alpha_\lambda|
 \leq
 \frac{\lambda^2N_{\rm free}}{(2\pi)^d}
 +\lambda\mathfrak a_\lambda^{1/2}
 \longrightarrow0.
$$
By Appendix~\ref{app:classical-Hartree-correlations}, $\sup_{0<\varepsilon\leq1} \big\|\gamma_{\nu_\varepsilon}^{(2k)}\big\|_{\gS^2}<\infty$. \textcolor{red}{In dimension three, this bound is deduced from the stationary coupling and remainder estimates of \cite{NZZ25}, as detailed in that appendix.} Using \eqref{eq:main-HS} and the preceding bounds, we conclude that
$$
 \big\|k!\lambda^k\Gamma_{\lambda,\varepsilon}^{(k)}
 -\gamma_{\nu_\varepsilon}^{(k)}\big\|_{\gS^2}^{2}
 \lesssim
 e^{2kb_\lambda}
 \big(
 \mathfrak a_\lambda\mathfrak r_\lambda^{1/2}
 +\lambda^2N_{\rm free}
 \big)
 \longrightarrow0.
$$
\end{proof}
We now turn to the proof of Theorem~\ref{thm:abstract-comparison}. The local corollaries are proved in Section~\ref{sec:derivation-Phi4d}.

\bigskip

\section{Proof strategy}\label{sec:proof-strategy}

To prove Theorem~\ref{thm:abstract-comparison}, we compare the quantum trace with a Gibbs integral against $\mu_0$. We begin with the free upper-symbol representation used in \cite{Cub26}.

The Golden--Thompson inequality separates the interaction from the free state, and the semiclassical correspondence suggests that the bound becomes asymptotically sharp as $\lambda\to0$:
\begin{equation}\label{eq:GT-upper-symbol}
 \frac{Z_\lambda}{Z_{\rm free}}
 \approx
 \tr_{\gF}\big(\Gamma_{\rm free}e^{-\bW_\lambda^{\ren}}\big)
 =:z_{\mathrm q},\qquad \mathrm{for\ }\lambda\ll 1.
\end{equation}
We write $z_{\mathrm c}=z_0$, with the subscripts $\mathrm q$ and $\mathrm c$ referring to the quantum and classical comparisons. Heuristically, the first part of Theorem~\ref{thm:abstract-comparison} thus reduces to comparing $z_{\mathrm q}$ and $z_{\mathrm c}$.

Our first key observation concerns the upper-symbol representation of $\Gamma_{{\rm free}}$ introduced in \cite{Cub26}. For $u\in\gH$, we use the coherent vector $\xi(u)=e^{-\|u\|^2/2}\bigoplus_{n\geq0}u^{\otimes n}/\sqrt{n!}$. Following \cite[Lemma~6.2]{Cub26}, we represent the free Gibbs state by integration against the Gaussian measure $\mu_\lambda$:
\begin{equation}\label{eq:free-coherent-representation}
 \Gamma_{\rm free}
 =
 \int_{\gH}
 |\xi(u/\sqrt\lambda)\rangle\langle\xi(u/\sqrt\lambda)|\dd\mu_\lambda(u),\qquad
 \mathrm{Cov}(\mu_\lambda)=\mathsf C_\lambda:=\lambda(e^{\lambda h}-1)^{-1}
 =\lambda\Gamma_{\rm free}^{(1)}.
\end{equation}
The coherent-state projectors satisfy
\begin{align*}
 \big\|
 |\xi(u)\rangle\langle\xi(u)|
 -
 |\xi(u')\rangle\langle\xi(u')|
 \big\|_{\gS^1(\gF)}
 =
 2\sqrt{1-|\langle\xi(u),\xi(u')\rangle|^2}
 =
 2\sqrt{1-e^{-\|u-u'\|^2}}
 \leq 2\|u-u'\|.
\end{align*}
Thus, the integrals in \eqref{eq:free-coherent-representation} is understood as Bochner integral in the separable Banach space $\gS^1(\gF)$. Substituting this representation into \eqref{eq:GT-upper-symbol} gives
\begin{equation}\label{eq:coherent-mu-lambda-integral}
 z_{\mathrm q}:=\tr_{\gF}\big(\Gamma_{\rm free}e^{-\bW_\lambda^{\ren}}\big)=\int_{\gH}\inprod{\xi(u/\sqrt{\lambda})}{e^{-\mathbb W_\lambda^{{\rm ren}}}\xi(u/\sqrt{\lambda})}\dd\mu_\lambda(u).
\end{equation}
By the spectral theorem for the self-adjoint operator $e^{-\mathbb W_\lambda^{{\rm ren}}}$, we can interpret the right-hand side of \eqref{eq:coherent-mu-lambda-integral} as first sampling according to the spectral measure in the coherent state indexed by $u$, and then averaging $u$ against the Gaussian measure $\mu_\lambda$. Since $e^{-\mathbb W_\lambda^{{\rm ren}}}$ acts by multiplication on $\gF$ and multiplication operators commute, we can use a common spectral measure. We describe this measure on the \textit{grand-canonical configuration space\kern0pt{}} $\mathfrak C:=\{\varnothing\}\sqcup \bigsqcup_{n\geq1}(\bT^d)^n$, equiped with the disjoint-union topology and its Borel $\sigma$-algebra.

Fix $\lambda>0$ and $u\in\gH$. Expanding $\xi(u/\sqrt\lambda)$ over the particle sectors defines a measure on $\mathfrak C$, denoted by $\dd\rho_{\lambda,u}(\mathbf x)$:
\begin{equation}
 \int_{\mathfrak C}\Phi(\mathbf x)\dd\rho_{\lambda,u}(\mathbf x)
 =
 e^{-\|u\|^2/\lambda}
 \sum_{n=0}^{\infty}\frac1{n!\lambda^n}
 \int_{(\bT^d)^n}
 \Phi_n(x_1,\ldots,x_n)\prod_{j=1}^n|u(x_j)|^2
 \dd x_1\cdots\dd x_n.
 \label{eq:abstract-coherent-kernel}
\end{equation}
Here $\Phi$ is a bounded Borel function on $\mathfrak C$. For $u,u'\in\gH$, Cauchy--Schwarz gives
\begin{align*}
 \Big|
 \int_{\mathfrak C}\Phi(\mathbf x)\dd\rho_{\lambda,u}(\mathbf x)
 -
 \int_{\mathfrak C}\Phi(\mathbf x)\dd\rho_{\lambda,u'}(\mathbf x)
 \Big|
 \leq
 2\|\Phi\|_\infty
 \|\xi(u/\sqrt\lambda)-\xi(u'/\sqrt\lambda)\|_{\gF}
 \longrightarrow0
 \qquad\text{as }u'\to u\text{ in }\gH.
\end{align*}
Thus $u\mapsto\rho_{\lambda,u}$ is a Borel probability kernel from $\gH$ to $\mathfrak C$. 

For a bounded multiplication operator $A$ and a configuration $\mathbf x=(X_1,\ldots,X_n)\in\mathfrak C$, we represent the quantum mass as a function on $\mathfrak C$:
$$
 \mathbb M_\lambda^A=
 \mathbb M_\lambda^A(\mathbf x)
 =
 \lambda\sum_{j=1}^{|\mathbf x|}A(X_j)
 -\tr_{\gH}(A\mathsf C_\lambda),\qquad |\mathbf x|:=\begin{cases}
 &n,\ {\rm if\ }\mathbf x\in(\mathbb T^d)^n;\\
 &0,\ {\rm if\ }\mathbf x=\varnothing.
 \end{cases}
$$
Substituting into \eqref{eq:free-coherent-representation}, we obtain
$$
 z_{\mathrm q}
 =
 \int_{\gH}\int_{\mathfrak C}
 e^{-\bW_\lambda^{\ren}(\mathbf x)}
 \dd\rho_{\lambda,u}(\mathbf x)\dd\mu_\lambda(u).
$$

Our second key observation is that $\mathrm{Cov}(\mu_\lambda)=\mathsf C_\lambda$ converges to $\mathrm{Cov}(\mu_0)=h^{-1}=:\mathsf C_0$ in $\gS^2$. In this sense, $\mu_\lambda$ provides a regularized version of $\mu_0$ and suggests the following mass renormalization. For a bounded normal operator $A$ and $\lambda>0$, we define the $\lambda-$regularized mass by
$$
 \cM_\lambda^A[u]
 =
 \inprod{u}{Au}-\int\inprod{u}{Au}\dd\mu_\lambda(u)
 =
 \langle u,Au\rangle-\tr_{\gH}(A\mathsf C_\lambda).
$$
Since $\mathsf C_\lambda\in\gS^1$ for $\lambda>0$, this expression is defined pointwise on $\gH$. We form an intermediate interaction from these masses, retaining the quantum counterterms:
$$
 \cW_\lambda[u]
 =
 \frac12\sum_{k\in\Z^d}
 \widehat w(k)|\cM_\lambda^{e_k}[u]|^2
 -\alpha_\lambda\cM_\lambda^\1[u]-\beta_\lambda,
$$
We normalize its Gibbs weight by
$$
 \dd\nu_\lambda(u)
 =
 z_\lambda^{-1}e^{-\cW_\lambda[u]}\dd\mu_\lambda(u),
 \qquad
 z_\lambda=\int_{\gH}e^{-\cW_\lambda[u]}\dd\mu_\lambda(u).
$$
It follows that
$$
 z_{\mathrm q}-z_\lambda
 =
 \int_{\gH}\int_{\mathfrak C}
 \big(e^{-\bW_\lambda^{\ren}(\mathbf x)}-e^{-\mathcal W_\lambda[u]}\big)
 \dd\rho_{\lambda,u}(\mathbf x)\dd\mu_\lambda(u).
$$
The two masses satisfy
$$
 \int_{\mathfrak C}
 \big(\mathbb M_\lambda^A(\mathbf x)-\cM_\lambda^A[u]\big)
 \dd\rho_{\lambda,u}(\mathbf x)
 =
 \langle u,Au\rangle-\tr_{\gH}(A\mathsf C_\lambda)
 -\cM_\lambda^A[u]
 =0,
$$
which suggests that $z_{\mathrm q}\approx  z_\lambda$. We next compare $z_{\mathrm c}$ with $z_\lambda$.

To recover the reference law $\mu_0$ while retaining the same conditional expectation, we add an independent Gaussian increment. Using two independent copies $(g,g')\in\gX\times\gX$ of the Gaussian coordinates from Subsection~\ref{subsec:classical-field-theory}, we define
\begin{equation}
 u_\lambda(g)
 =
 \sum_{n\in\Z^d}\sqrt{c_\lambda(n)}\,g_ne_n,\qquad
 v_\lambda(g')
 =
 \sum_{n\in\Z^d}
 \sqrt{h(n)^{-1}-c_\lambda(n)}\,g_n'e_n,
 \label{eq:coupled-increment}
\end{equation}
where $c_\lambda(n)=\lambda/(e^{\lambda h(n)}-1)$. Thus $u_\lambda,v_\lambda$ are independent Gaussian fields with covariances $\mathsf C_\lambda$ and $\mathsf C_0-\mathsf C_\lambda\geq 0$, respectively, and
\begin{equation}\label{eq:Gaussian-coupling-law}
 \Law(u_\lambda+v_\lambda)=\mu_0.
\end{equation}
Sampling the configuration conditionally on $u_\lambda$ gives the following joint law on $\gX\times\gX\times\mathfrak C$:
$$
 \dd\widehat{\mathbb P}_\lambda(g,g',\mathbf x)
 =
 \dd\mathbb P(g)\dd\mathbb P(g')
 \dd\rho_{\lambda,u_\lambda(g)}(\mathbf x).
$$
From now on, $\mathbb E$ denotes expectation with respect to $\widehat{\mathbb P}_\lambda$, and $\mathbb E[\cdot\mid u_\lambda]$ denotes conditional expectation given $u_\lambda$. For a bounded multiplication operator $A$, we have (see Lemma~\ref{lem:barycenter-identity})
$$
 \mathbb E[\mathbb M_\lambda^A\mid u_\lambda]
 =
 \cM_\lambda^A[u_\lambda],\qquad
 \mathbb E[\cM_0^A[u_\lambda+v_\lambda]\mid u_\lambda]
 =
 \cM_\lambda^A[u_\lambda].
$$

\medskip

\subsection{Relative free energy}

We first outline the relative free-energy comparison in Theorem~\ref{thm:abstract-comparison}. With $\cW_\lambda[u_\lambda]$ as the reference interaction, we compare $z_{\mathrm q}$ and $z_{\mathrm c}$ with $z_\lambda$. For the quantum comparison, we introduce
\begin{equation}\label{eq:quantum-interaction-difference}
 \Delta_{\lambda,\mathrm q}
 =
 \bW_\lambda^{\ren}(\mathbf x)-\cW_\lambda[u_\lambda],
\end{equation}
and for the classical comparison we use
\begin{equation}\label{eq:classical-interaction-difference}
 \Delta_{\lambda,\mathrm c}
 =
 \cW_0[u_\lambda+v_\lambda]-\cW_\lambda[u_\lambda].
\end{equation}
With respect to $\widehat{\mathbb P}_\lambda$, the three normalized Gibbs weights are:
$$
 \begin{gathered}
 f_\lambda=z_\lambda^{-1}e^{-\cW_\lambda[u_\lambda]},\qquad
 f_{\mathrm q}=z_{\mathrm q}^{-1}e^{-\bW_\lambda^{\ren}(\mathbf x)},
 \qquad
 f_{\mathrm c}=z_{\mathrm c}^{-1}e^{-\cW_0[u_\lambda+v_\lambda]}.
 \end{gathered}
$$
Each density has expectation one. 

We recall the quantum and classical Gibbs variational principles. In the quantum setting,
$$
 -\log\frac{Z_\lambda}{Z_{{\rm free}}}
 =
 \inf_{\substack{\Gamma\geq0\\ \tr_{\gF}\Gamma=1}}
 \left\{
 \cH(\Gamma,\Gamma_{{\rm free}})
 +
 \tr_{\gF}\big(\bW_\lambda^{\ren}\Gamma\big)
 \right\},
$$
where the quantum relative entropy is
$$
 \cH(\Gamma,\Gamma')
 =
 \begin{cases}
 \tr_{\gF}\!\left[
 \Gamma\big(\log\Gamma-\log\Gamma'\big)
 \right],
 & \operatorname{supp}\Gamma\subseteq\operatorname{supp}\Gamma',
 \\[0.3em]
 +\infty,
 & \text{otherwise}.
 \end{cases}
$$
The corresponding classical variational principle is
$$
 -\log z_0
 =
 \inf_{\nu}
 \left\{
 \cH_{\cl}(\nu,\mu_0)
 +
 \int\cW_0[u]\dd\nu(u)
 \right\},
$$
where the infimum is taken over probability measures and
$$
 \cH_{\cl}(\nu,\mu)
 =
 \begin{cases}
 \displaystyle
 \int
 \log\!\left(\frac{\dd\nu}{\dd\mu}\right)\dd\nu,
 & \nu\ll\mu,
 \\[0.6em]
 +\infty,
 & \text{otherwise}.
 \end{cases}
$$
The minimizers in these two variational principles are the corresponding quantum Gibbs state and classical Gibbs measure, respectively.

For probability densities $f,f'$ on $(\gX\times\gX\times\mathfrak C,\widehat{\mathbb P}_\lambda)$, the definition gives
$$
 \cH_{\cl}\big(f\dd\widehat{\mathbb P}_\lambda,
 f'\dd\widehat{\mathbb P}_\lambda\big)
 =
 \mathbb E\big[f\log(f/f')\big].
$$

\smallskip \noindent\textbf{Step 1: Compare the Gibbs weights.} For $\diamond\in\{\mathrm q,\mathrm c\}$, we denote by $\mathcal E_{\lambda,\diamond}$ the sum of the relative entropies between $f_\lambda\dd\widehat{\mathbb P}_\lambda$ and $f_\diamond\dd\widehat{\mathbb P}_\lambda$ in the two directions. Adding the two entropy identities cancels the partition functions and gives
$$
 \mathcal E_{\lambda,\diamond}
 :=
 \cH_{\cl}\big(
 f_\lambda\dd\widehat{\mathbb P}_\lambda,
 f_\diamond\dd\widehat{\mathbb P}_\lambda\big)
 +
 \cH_{\cl}\big(
 f_\diamond\dd\widehat{\mathbb P}_\lambda,
 f_\lambda\dd\widehat{\mathbb P}_\lambda\big)
 =
 \Big|
 \mathbb E\big[
 (f_\lambda-f_\diamond)\Delta_{\lambda,\diamond}
 \big]
 \Big|.
$$
Nonnegativity of each relative entropy also gives
$$
 \Big|\log\frac{z_\diamond}{z_\lambda}\Big|
 \leq
 \max\Big\{
 \big|\mathbb E[f_\lambda\Delta_{\lambda,\diamond}]\big|,
 \big|\mathbb E[f_\diamond\Delta_{\lambda,\diamond}]\big|
 \Big\}.
$$
Proposition~\ref{prop:two-entropies} proves both statements.

\smallskip \noindent\textbf{Step 2: Return to the quantum Gibbs state.} Since $z_{\mathrm c}=z_0$, we have
$$
 \log\frac{Z_\lambda}{Z_{\rm free}}-\log z_0
 =
 \log\frac{z_{\mathrm q}}{z_\lambda}
 -
 \log\frac{z_{\mathrm c}}{z_\lambda}
 -
 \log\frac{Z_{\rm free}z_{\mathrm q}}{Z_\lambda}.
$$
Proposition~\ref{prop:quantum-entropy-interface} gives $ 0\leq \log\frac{Z_{\rm free}z_{\mathrm q}}{Z_\lambda} \leq \mathcal E_{\lambda,\mathrm q}, $ which implies
$$
 \Big|
 \log\frac{Z_\lambda}{Z_{\rm free}}-\log z_0
 \Big|
 \leq
 \mathcal E_{\lambda,\mathrm q}
 +
 \sum_{\diamond\in\{\mathrm q,\mathrm c\}}
 \Big|\log\frac{z_\diamond}{z_\lambda}\Big|.
$$
Step~1 controls every term on the right. H\"older's inequality then reduces the entropy and partition-function comparisons to moments of $\Delta_{\lambda,\diamond}$ and integrability of the normalized weights, all with respect to $\widehat{\mathbb P}_\lambda$.

\smallskip \noindent\textbf{Step 3: Estimate the energy differences.} We use the common conditional expectation to estimate $\Delta_{\lambda,\diamond}$. For a bounded multiplication operator $A$, define the quantum fluctuation
$$
 \mathcal R_{\lambda,\mathrm q}^A
 =
 \mathbb M_\lambda^A(\mathbf x)-\cM_\lambda^A[u_\lambda].
$$
For the classical mass, we similarly define
$$
 \mathcal R_{\lambda,\mathrm c}^A
 =
 \cM_0^A[u_\lambda+v_\lambda]-\cM_\lambda^A[u_\lambda].
$$
Expanding the energy differences $\Delta_{\lambda,\diamond}$ gives
\begin{align*}
 \Delta_{\lambda,\mathrm q}
 &=
 \operatorname{Re}\sum_k\widehat w(k)
 \overline{\cM_\lambda^{e_k}[u_\lambda]}\,
 \mathcal R_{\lambda,\mathrm q}^{e_k}
 +\frac12\sum_k\widehat w(k)
 |\mathcal R_{\lambda,\mathrm q}^{e_k}|^2
 -\alpha_\lambda\mathcal R_{\lambda,\mathrm q}^\1,\\
 \Delta_{\lambda,\mathrm c}
 &=
 \operatorname{Re}\sum_k\widehat w(k)
 \overline{\cM_\lambda^{e_k}[u_\lambda]}\,
 \mathcal R_{\lambda,\mathrm c}^{e_k}
 +\frac12\sum_k\widehat w(k)
 |\mathcal R_{\lambda,\mathrm c}^{e_k}|^2
 -\alpha_0\mathcal R_{\lambda,\mathrm c}^\1\\
 &\quad+
 (\alpha_\lambda-\alpha_0)\cM_\lambda^\1[u_\lambda]
 +\beta_\lambda-\beta_0.
\end{align*}
Thus, estimating $\Delta_{\lambda,\diamond}$ reduces to estimating $\mathcal R^A_{\lambda,\diamond}$; see Lemmas~\ref{lem:Lp-estimate-Delta} and~\ref{lem:abstract-single-lift-moments}.

\medskip

\subsection{Reduced density matrices}

We express the target correlation operators and the reduced density matrices of the trial state on the same probability space. For a field $u$ and a scalar weight $f$ on $(\gX\times\gX\times\mathfrak C,\widehat{\mathbb P}_\lambda)$, write
\begin{equation}\label{eq:weighted-correlation-common-space}
 \gamma_f^{(k)}[u]
 =
 \mathbb E\big[
 f\,|u^{\otimes k}\rangle\langle u^{\otimes k}|
 \big].
\end{equation}
We interpret this expression through matrix elements, as in \eqref{eq:abstract-nu-0-corroration-operator}. Section~\ref{sec:correlation-preliminaries} establishes the Hilbert--Schmidt realizations and estimates for these weighted correlation operators. Note that
$$
 k!\lambda^k\widetilde\Gamma_\lambda^{(k)}
 =
 \gamma_{\nu_\lambda}^{(k)}
 =
 \gamma_{f_\lambda}^{(k)}[u_\lambda],
$$
where
\begin{equation}\label{eq:abstract-trial-state}
 \widetilde\Gamma_\lambda
 :=
 \int_{\gH}
 |\xi(u/\sqrt\lambda)\rangle\langle\xi(u/\sqrt\lambda)|
 \dd\nu_\lambda(u).
\end{equation}
For the target measure, \eqref{eq:Gaussian-coupling-law} gives
$$
 \gamma_{\nu_0}^{(k)}
 =
 \gamma_{f_{\mathrm c}}^{(k)}[u_\lambda+v_\lambda].
$$
We decompose the difference into the quantum comparison, the comparison of classical weights, and the change of field:
\begin{align}
 k!\lambda^k\Gamma_\lambda^{(k)}-\gamma_{\nu_0}^{(k)}
 &=
 k!\lambda^k
 \big(\Gamma_\lambda^{(k)}-\widetilde\Gamma_\lambda^{(k)}\big)\nn\\
 &+
 \big(\gamma_{f_\lambda}^{(k)}[u_\lambda+v_\lambda]
 -
 \gamma_{f_{\mathrm c}}^{(k)}[u_\lambda+v_\lambda]\big)
 +
 \big(\gamma_{f_\lambda}^{(k)}[u_\lambda]
 -
 \gamma_{f_\lambda}^{(k)}[u_\lambda+v_\lambda]\big).
 \label{eq:HS-three-comparisons}
\end{align}

\smallskip \noindent\textbf{Step 1: A priori bounds.} Proposition~\ref{prop:quantum-kernel-bound} and the entropy bound from Step~2 of the free-energy comparison give
$$
 k!\lambda^k\|\Gamma_\lambda^{(k)}\|_{\gS^2}
 \leq
 e^{kb_\lambda}
 \frac{Z_{\rm free}z_{\mathrm q}}{Z_\lambda}
 \|\gamma_{f_{\mathrm q}}^{(k)}[u_\lambda]\|_{\gS^2}
 \leq
 e^{kb_\lambda+\mathcal E_{\lambda,\mathrm q}}
 \|\gamma_{f_{\mathrm q}}^{(k)}[u_\lambda]\|_{\gS^2}.
$$
Thus the weighted correlation operators control the rescaled quantum reduced density matrices. Together with Proposition~\ref{prop:moment-propagation}, this gives
$$
 k!\lambda^k\|\Gamma_\lambda^{(k)}\|_{\gS^2}+\big\|\gamma_{f_\lambda}^{(k)}[u_\lambda+v_\lambda]\big\|_{\gS^2}
 +\big\|\gamma_{f_\lambda}^{(k)}[u_\lambda]\big\|_{\gS^2}
 +\big\|\gamma_{f_{\mathrm q}}^{(k)}[u_\lambda]\big\|_{\gS^2}\lesssim\big\|\gamma_{\nu_0}^{(k)}\big\|_{\gS^2}.
$$

\smallskip \noindent\textbf{Step 2: Compare the quantum reduced density matrices.} Both $\Gamma_\lambda$ and $\widetilde\Gamma_\lambda$ commute with $\cN$, so we apply \cite[Lemma~11.4]{LNR21} and Pinsker's inequality to obtain
\begin{equation}
 \big\|k!\lambda^k
 (\Gamma_\lambda^{(k)}-\widetilde\Gamma_\lambda^{(k)})
 \big\|_{\gS^2}^2
 \lesssim_k
 \cH(\widetilde\Gamma_\lambda,\Gamma_\lambda)^{1/2}
 \sum_{\ell=k}^{2k}\lambda^{2k-\ell}
 \Big[
 \lambda^\ell\|\Gamma_\lambda^{(\ell)}\|_{\gS^2}
 +\frac1{\ell!}
 \|\gamma_{f_\lambda}^{(\ell)}[u_\lambda]\|_{\gS^2}
 \Big].\label{eq:LNR-dentity-operator-lemma}
\end{equation}
Proposition~\ref{prop:quantum-entropy-interface} gives $ 0\leq \cH(\widetilde\Gamma_\lambda,\Gamma_\lambda) \leq \mathcal E_{\lambda,\mathrm q}. $ This controls the first difference in \eqref{eq:HS-three-comparisons}.

\smallskip \noindent\textbf{Step 3: Compare the classical weights.} Applying Lemma~\ref{lem:classical-entropy-HS}, a classical counterpart of\kern0pt{} \eqref{eq:LNR-dentity-operator-lemma}, we obtain
$$
 \big\|
 \gamma_{f_\lambda}^{(k)}[u_\lambda+v_\lambda]
 -
 \gamma_{f_{\mathrm c}}^{(k)}[u_\lambda+v_\lambda]\big\|_{\gS^2}^2
 \leq
 2\cH_{\cl}\big(
 f_\lambda\dd\widehat{\mathbb P}_\lambda,
 f_{\mathrm c}\dd\widehat{\mathbb P}_\lambda\big)
 \big\|
 \gamma_{f_\lambda}^{(2k)}[u_\lambda+v_\lambda]
 +
 \gamma_{f_{\mathrm c}}^{(2k)}[u_\lambda+v_\lambda]
 \big\|_{\gS^2}.
$$
We bound the entropy by $\mathcal E_{\lambda,\mathrm c}$ using the free-energy comparison.

\noindent\textbf{Step 4: Estimate the change of field.} Keeping $f_\lambda$ fixed, we average over the independent Gaussian increment. For the first-order correlation operator,
$$
 \gamma_{f_\lambda}^{(1)}[u_\lambda+v_\lambda]
 -
 \gamma_{f_\lambda}^{(1)}[u_\lambda]
 =
 \mathsf C_0-\mathsf C_\lambda.
$$
For general $k$, Lemma~\ref{lem:weighted-Gaussian-addition} gives
$$
 \big\|
 \gamma_{f_\lambda}^{(k)}[u_\lambda+v_\lambda]
 -
 \gamma_{f_\lambda}^{(k)}[u_\lambda]
 \big\|_{\gS^2}
 \lesssim_k
 \sum_{j=1}^k
 \|\gamma_{f_\lambda}^{(k-j)}[u_\lambda]\|_{\gS^2}
 \|\mathsf C_0-\mathsf C_\lambda\|_{\gS^2}^{j}.
$$
We use $\gamma_{f_\lambda}^{(0)}[u_\lambda]=1$. The covariance estimate $\|\mathsf C_0-\mathsf C_\lambda\|_{\gS^2}^2 \lesssim_d\lambda^2N_{\rm free}\to0$ controls this field difference.

Combining these estimates with \eqref{eq:HS-three-comparisons} completes the density-matrix comparison. \Needspace{17\baselineskip}

\bigskip

\section{Entropy identities}\label{sec:entropy-identities}

We prove the two entropy comparisons used in Section~\ref{sec:proof-strategy}.

\Needspace{13\baselineskip}
\begin{proposition}[Comparison of the Gibbs weights] \label{prop:two-entropies} For $\diamond\in\{\mathrm q,\mathrm c\}$, we have
\begin{equation}
 \mathcal E_{\lambda,\diamond}
 :=
 \cH_{\cl}\big(
 f_\lambda\dd\widehat{\mathbb P}_\lambda,
 f_\diamond\dd\widehat{\mathbb P}_\lambda\big)
 +
 \cH_{\cl}\big(
 f_\diamond\dd\widehat{\mathbb P}_\lambda,
 f_\lambda\dd\widehat{\mathbb P}_\lambda\big)
 =
 \big|
 \mathbb E[(f_\lambda-f_\diamond)\Delta_{\lambda,\diamond}]
 \big|.
 \label{eq:symmetric-entropy-error}
\end{equation}
Moreover,
\begin{equation}\label{eq:partition-from-entropy}
 \Big|\log\frac{z_\diamond}{z_\lambda}\Big|
 \leq
 \max\Big\{
 \big|\mathbb E[f_\lambda\Delta_{\lambda,\diamond}]\big|,
 \big|\mathbb E[f_\diamond\Delta_{\lambda,\diamond}]\big|
 \Big\}.
\end{equation}
\end{proposition}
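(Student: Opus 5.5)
The plan is to compute the log-ratio of the two densities explicitly and then apply nonnegativity of relative entropy. On $(\gX\times\gX\times\mathfrak C,\widehat{\mathbb P}_\lambda)$ we have, by definition of $\Delta_{\lambda,\diamond}$ in \eqref{eq:quantum-interaction-difference}--\eqref{eq:classical-interaction-difference},
$$
 \log\frac{f_\lambda}{f_\diamond}
 =
 \Delta_{\lambda,\diamond}+\log\frac{z_\diamond}{z_\lambda},
$$
where $\Delta_{\lambda,\mathrm q}=\bW_\lambda^{\ren}(\mathbf x)-\cW_\lambda[u_\lambda]$ and $\Delta_{\lambda,\mathrm c}=\cW_0[u_\lambda+v_\lambda]-\cW_\lambda[u_\lambda]$. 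Integrating against $f_\lambda\dd\widehat{\mathbb P}_\lambda$ and $f_\diamond\dd\widehat{\mathbb P}_\lambda$, and using $\mathbb E f_\lambda=\mathbb E f_\diamond=1$, gives the two entropy identities
$$
 \cH_{\cl}(f_\lambda\dd\widehat{\mathbb P}_\lambda,f_\diamond\dd\widehat{\mathbb P}_\lambda)
 =\mathbb E[f_\lambda\Delta_{\lambda,\diamond}]+\log\frac{z_\diamond}{z_\lambda},
 \qquad
 \cH_{\cl}(f_\diamond\dd\widehat{\mathbb P}_\lambda,f_\lambda\dd\widehat{\mathbb P}_\lambda)
 =-\mathbb E[f_\diamond\Delta_{\lambda,\diamond}]-\log\frac{z_\diamond}{z_\lambda}.
$$

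Adding them cancels the partition functions and yields $\mathcal E_{\lambda,\diamond}=\mathbb E[(f_\lambda-f_\diamond)\Delta_{\lambda,\diamond}]$. Since both entropies are nonnegative, the left side is nonnegative, and the absolute value in \eqref{eq:symmetric-entropy-error} is harmless. Nonnegativity of each entropy separately gives
$$
 -\mathbb E[f_\lambda\Delta_{\lambda,\diamond}]\le\log\frac{z_\diamond}{z_\lambda}\le-\mathbb E[f_\diamond\Delta_{\lambda,\diamond}],
$$
which implies \eqref{eq:partition-from-entropy}. This is the Gibbs variational principle applied with the reference weight $f_\lambda$.

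The main obstacle is justifying these manipulations: all quantities must be finite and the integrals must exist. I would proceed in three steps.

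\emph{Positivity of the weights.} First check that $z_\lambda,z_{\mathrm q},z_{\mathrm c}\in(0,\infty)$ and that the densities are strictly positive and well defined. $\cW_\lambda$ is bounded below, being a nonnegative quadratic form minus a linear term in $\cM^\1_\lambda$, which completes a square. The same holds for $\bW^{\ren}_\lambda(\mathbf x)$. For $z_{\mathrm c}=z_0$ I would use the construction cited for $\nu_0$, namely the $L^p$ integrability in Lemma~\ref{lem:Lp-estimate-three-weights}.

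\emph{Integrability of $\Delta_{\lambda,\diamond}$.} Next, $\Delta_{\lambda,\diamond}$ must be integrable against both $f_\lambda$ and $f_\diamond$. I would obtain this by H\"older's inequality, combining $L^p$ moments of $\Delta_{\lambda,\diamond}$ under $\widehat{\mathbb P}_\lambda$ from Lemma~\ref{lem:Lp-estimate-Delta} with $f_\lambda,f_\diamond\in L^{p'}$ from Lemma~\ref{lem:Lp-estimate-three-weights}. Polynomial moments of the masses under Gaussian and Poisson-type laws make this routine.

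\emph{Finiteness of the entropies.} Once $f_\lambda\log(f_\lambda/f_\diamond)$ is known to be integrable, it is legitimate to split the logarithm, and the mutual absolute continuity follows from positivity of both densities. If one entropy could a priori be $+\infty$, the identity would still show finiteness, since its right-hand side is finite. So no truncation argument should be needed beyond the integrability check.
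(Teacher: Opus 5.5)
Your proposal is correct and follows the paper's proof: write $\log(f_\lambda/f_\diamond)=\log(z_\diamond/z_\lambda)+\Delta_{\lambda,\diamond}$, integrate against each density to get the two entropy identities, add them to cancel the partition functions, and use nonnegativity of each entropy for the two-sided bound on $\log(z_\diamond/z_\lambda)$. Your integrability checks go slightly beyond the paper, which leaves them implicit; they rely on the lower bounds and moment lemmas of Section~\ref{sec:free-energy-estimates}, which do not depend on this proposition, so there is no circularity.
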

\begin{proof}
The definitions give $ \log\frac{f_\lambda}{f_\diamond} = \log\frac{z_\diamond}{z_\lambda} +\Delta_{\lambda,\diamond}, $ and hence
$$
 \cH_{\cl}\big(
 f_\lambda\dd\widehat{\mathbb P}_\lambda,
 f_\diamond\dd\widehat{\mathbb P}_\lambda\big)=
 \log\frac{z_\diamond}{z_\lambda}
 +\mathbb E[f_\lambda\Delta_{\lambda,\diamond}].
$$
Interchanging the two densities gives
$$
 \cH_{\cl}\big(
 f_\diamond\dd\widehat{\mathbb P}_\lambda,
 f_\lambda\dd\widehat{\mathbb P}_\lambda\big)
 =
 \log\frac{z_\lambda}{z_\diamond}
 -\mathbb E[f_\diamond\Delta_{\lambda,\diamond}].
$$
Adding the two identities proves \eqref{eq:symmetric-entropy-error}. Nonnegativity of the two relative entropies also gives
$$
 -\mathbb E[f_\lambda\Delta_{\lambda,\diamond}]
 \leq
 \log\frac{z_\diamond}{z_\lambda}
 \leq
 -\mathbb E[f_\diamond\Delta_{\lambda,\diamond}],
$$
which proves \eqref{eq:partition-from-entropy}.
\end{proof}

\begin{proposition}[Quantum relative entropy and relative free energy] \label{prop:quantum-entropy-interface} We have
\begin{equation}
 \cH_{\cl}\big(
 f_\lambda\dd\widehat{\mathbb P}_\lambda,
 f_{\mathrm q}\dd\widehat{\mathbb P}_\lambda\big)
 =
 \cH(\widetilde\Gamma_\lambda,\Gamma_\lambda)
 +
 \Big[
 \cH_{\cl}(\nu_\lambda,\mu_\lambda)
 -
 \cH(\widetilde\Gamma_\lambda,\Gamma_{\rm free})
 \Big]
 +
 \Big[\log z_{\mathrm q}-\log\frac{Z_\lambda}{Z_{{\rm free}}}\Big].
 \label{eq:three-nonnegative-terms}
\end{equation}
The three terms on the right are nonnegative. In particular,
\begin{equation}\label{eq:quantum-entropy-control}
 0\leq
 \cH(\widetilde\Gamma_\lambda,\Gamma_\lambda)
 +
 \Big[\log z_{\mathrm q}-\log\frac{Z_\lambda}{Z_{{\rm free}}}\Big]
 \leq
 \mathcal E_{\lambda,\mathrm q}
 =
 \big|
 \mathbb E[(f_\lambda-f_{\mathrm q})\Delta_{\lambda,\mathrm q}]
 \big|.
\end{equation}
Consequently,
\begin{equation}\label{eq:free-energy-from-partitions}
 \Big|
 \log\frac{Z_\lambda}{Z_{\rm free}}-\log z_0
 \Big|
 \leq
 \big|
 \mathbb E[(f_\lambda-f_{\mathrm q})\Delta_{\lambda,\mathrm q}]
 \big|
 +
 \sum_{\diamond\in\{\mathrm q,\mathrm c\}}
 \max\Big\{
 \big|\mathbb E[f_\lambda\Delta_{\lambda,\diamond}]\big|,
 \big|\mathbb E[f_\diamond\Delta_{\lambda,\diamond}]\big|
 \Big\}.
\end{equation}
\end{proposition}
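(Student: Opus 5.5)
The plan is to expand all four entropies in \eqref{eq:three-nonnegative-terms} explicitly in terms of the partition functions and the interaction energies, and check that the right-hand side telescopes to the left-hand side. We then prove the nonnegativity of each bracket separately.

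\textbf{The identity.} First, since $f_\lambda/f_{\mathrm q}=\exp\big(\log(z_{\mathrm q}/z_\lambda)+\Delta_{\lambda,\mathrm q}\big)$, the proof of Proposition~\ref{prop:two-entropies} gives
$$
 \cH_{\cl}\big(f_\lambda\dd\widehat{\mathbb P}_\lambda,f_{\mathrm q}\dd\widehat{\mathbb P}_\lambda\big)
 =\log z_{\mathrm q}-\log z_\lambda+\mathbb E\big[f_\lambda\bW_\lambda^{\ren}(\mathbf x)\big]-\mathbb E\big[f_\lambda\cW_\lambda[u_\lambda]\big].
$$
Second, $f_\lambda$ depends only on $u_\lambda$, and conditionally on $u_\lambda=u$ the configuration is distributed by $\rho_{\lambda,u}$. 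By the spectral representation \eqref{eq:abstract-coherent-kernel} of the multiplication operator $\bW^{\ren}_\lambda$, we have $\int\bW^{\ren}_\lambda\dd\rho_{\lambda,u}=\langle\xi(u/\sqrt\lambda),\bW^{\ren}_\lambda\xi(u/\sqrt\lambda)\rangle$. Hence \eqref{eq:abstract-trial-state} gives $\mathbb E[f_\lambda\bW^{\ren}_\lambda]=\tr_{\gF}(\bW^{\ren}_\lambda\widetilde\Gamma_\lambda)$. This is meaningful because $\bW^{\ren}_\lambda$ is bounded below: its $k=0$ term $\tfrac12\widehat w(0)|\mathbb M^\1_\lambda|^2$ dominates $-\alpha_\lambda\mathbb M_\lambda^\1$.

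Third, since $\log\Gamma_\lambda=-\lambda\dG(h)-\bW^{\ren}_\lambda-\log Z_\lambda$ and $\log\Gamma_{\rm free}=-\lambda\dG(h)-\log Z_{\rm free}$,
$$
 \cH(\widetilde\Gamma_\lambda,\Gamma_\lambda)=\cH(\widetilde\Gamma_\lambda,\Gamma_{\rm free})+\tr(\bW^{\ren}_\lambda\widetilde\Gamma_\lambda)+\log Z_\lambda-\log Z_{\rm free}.
$$
Fourth, $\cH_{\cl}(\nu_\lambda,\mu_\lambda)=-\log z_\lambda-\mathbb E[f_\lambda\cW_\lambda[u_\lambda]]$. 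Adding the last two displays to $\log z_{\mathrm q}-\log(Z_\lambda/Z_{\rm free})$, the quantities $\cH(\widetilde\Gamma_\lambda,\Gamma_{\rm free})$, $\log Z_\lambda$ and $\log Z_{\rm free}$ cancel. What remains is exactly the first display, which proves \eqref{eq:three-nonnegative-terms}.

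\textbf{Nonnegativity of the three terms.} The first term is a relative entropy, hence nonnegative. For the third, we use Golden--Thompson, which gives $Z_\lambda\le\tr(e^{-\lambda\dG(h)}e^{-\bW^{\ren}_\lambda})=Z_{\rm free}z_{\mathrm q}$. For the middle bracket, we view $\Phi:\nu\mapsto\int|\xi(u/\sqrt\lambda)\rangle\langle\xi(u/\sqrt\lambda)|\dd\nu(u)$ as a positive trace-preserving map from (the commutative algebra of) measures on $\gH$ to states on $\gF$. This map sends $\mu_\lambda\mapsto\Gamma_{\rm free}$ by \eqref{eq:free-coherent-representation} and $\nu_\lambda\mapsto\widetilde\Gamma_\lambda$. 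Monotonicity of relative entropy (Lindblad--Uhlmann) under such a channel then gives $\cH(\widetilde\Gamma_\lambda,\Gamma_{\rm free})\le\cH_{\cl}(\nu_\lambda,\mu_\lambda)$. This is the Berezin--Lieb-type inequality, as in \cite{Cub26}.

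\textbf{Consequences.} Dropping the middle bracket in \eqref{eq:three-nonnegative-terms} bounds $\cH(\widetilde\Gamma_\lambda,\Gamma_\lambda)+[\log z_{\mathrm q}-\log(Z_\lambda/Z_{\rm free})]$ by $\cH_{\cl}(f_\lambda\dd\widehat{\mathbb P}_\lambda,f_{\mathrm q}\dd\widehat{\mathbb P}_\lambda)$. This is in turn at most $\mathcal E_{\lambda,\mathrm q}$, since the reverse entropy is nonnegative, and Proposition~\ref{prop:two-entropies} gives the identity with $|\mathbb E[(f_\lambda-f_{\mathrm q})\Delta_{\lambda,\mathrm q}]|$. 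This proves \eqref{eq:quantum-entropy-control}. Finally, we insert this bound and \eqref{eq:partition-from-entropy} into the decomposition of Step~2 of Section~\ref{sec:proof-strategy}, which yields \eqref{eq:free-energy-from-partitions}.

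\textbf{Main obstacle.} The difficulty is rigor in infinite dimensions rather than algebra. We must justify:
\begin{itemize}
\item the trace manipulations with the unbounded operators $\dG(h)$ and $\bW^{\ren}_\lambda$, in particular finiteness of $\tr(\lambda\dG(h)\widetilde\Gamma_\lambda)$, which follows from Gaussian moments of $\nu_\lambda\ll\mu_\lambda$ via Hölder;
\item Golden--Thompson for these unbounded operators;
\item monotonicity of relative entropy for the channel $\Phi$.
\end{itemize}
I would handle all three by truncating to finitely many modes ($P_K$) and finitely many particles ($\cN\le n$), where everything is finite-dimensional. I would then pass to the limit using monotone convergence for the bounded-below energies and lower semicontinuity of relative entropy. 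If some entropy were infinite, the inequalities would hold trivially in the extended sense.
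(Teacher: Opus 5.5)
Your proposal is correct and follows essentially the same route as the paper. It uses the same expansion of $\cH_{\cl}(f_\lambda\dd\widehat{\mathbb P}_\lambda,f_{\mathrm q}\dd\widehat{\mathbb P}_\lambda)$ via $\mathbb E[f_\lambda\bW_\lambda^{\ren}]=\tr_{\gF}(\bW_\lambda^{\ren}\widetilde\Gamma_\lambda)$ and $\mathbb E[f_\lambda\log f_\lambda]=\cH_{\cl}(\nu_\lambda,\mu_\lambda)$, obtains nonnegativity from positivity of relative entropy, Golden--Thompson and the upper-symbol Berezin--Lieb inequality of Lemma~\ref{lem:relative-entropy-Berezin-Lieb} (which is exactly the data-processing inequality you describe), and concludes with Proposition~\ref{prop:two-entropies} and the Step~2 decomposition; the paper leaves implicit the infinite-dimensional justifications you list as the main obstacle.
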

\begin{proof}
By \eqref{eq:abstract-trial-state},
$$
 \mathbb E[f_\lambda\bW_\lambda^{\ren}(\mathbf x)]
 =
 \tr_{\gF}(\bW_\lambda^{\ren}\widetilde\Gamma_\lambda).
$$
Expanding the logarithms of the Gibbs densities and states, we obtain
\begin{align*}
 \cH_{\cl}\big(
 f_\lambda\dd\widehat{\mathbb P}_\lambda,
 f_{\mathrm q}\dd\widehat{\mathbb P}_\lambda\big)
 &=
 \mathbb E[f_\lambda\log f_\lambda]
 +
 \tr_{\gF}(\bW_\lambda^{\ren}\widetilde\Gamma_\lambda)
 +
 \log z_{\mathrm q}\\
 &=
 \cH(\widetilde\Gamma_\lambda,\Gamma_\lambda)
 +
 \mathbb E[f_\lambda\log f_\lambda]
 -
 \cH(\widetilde\Gamma_\lambda,\Gamma_{\rm free})
 +
 \log\frac{Z_{\rm free}z_{\mathrm q}}{Z_\lambda}.
\end{align*}
Together with $\mathbb E[f_\lambda\log f_\lambda] =\cH_{\cl}(\nu_\lambda,\mu_\lambda)$, this proves \eqref{eq:three-nonnegative-terms}. The second Berezin--Lieb inequality in Lemma~\ref{lem:relative-entropy-Berezin-Lieb} and the Golden--Thompson inequality give
$$
 \cH(\widetilde\Gamma_\lambda,\Gamma_{\rm free})
 \overset{{\rm BL}}{\leq}
 \cH_{\cl}(\nu_\lambda,\mu_\lambda),\qquad
 \log z_{\mathrm q}-\log\frac{Z_\lambda}{Z_{{\rm free}}}\overset{{\rm GT}}{\geq} 0.
$$
Hence \eqref{eq:quantum-entropy-control} follows. Finally,
$$
 \log\frac{Z_\lambda}{Z_{\rm free}}-\log z_0
 =
 \log\frac{z_{\mathrm q}}{z_\lambda}
 -
 \log\frac{z_{\mathrm c}}{z_\lambda}
 -
 \log\frac{Z_{\rm free}z_{\mathrm q}}{Z_\lambda},
$$
and Proposition~\ref{prop:two-entropies} gives \eqref{eq:free-energy-from-partitions}.
\end{proof}

\bigskip

\section{%
\texorpdfstring{Estimates of $\mathcal E_{\lambda,\diamond}$ and relative free-energy convergence}{Estimates of mathcal E and relative free-energy convergence}%
}\label{sec:free-energy-estimates}

We establish the following bound on the error terms introduced in Section~\ref{sec:proof-strategy}.

\begin{proposition}[Estimates of $\mathcal E_{\lambda,\diamond}$ and relative free-energy convergence] \label{prop:estimate-E-lambda-diamond} For $0<\lambda\leq1$ and $\diamond\in\{\mathrm q,\mathrm c\}$,
\begin{equation}\label{eq:free-energy-rate}
 \big|
 \mathbb E[(f_\lambda-f_\diamond)\Delta_{\lambda,\diamond}]
 \big|
 +
 \sum_{\diamond\in\{\mathrm q,\mathrm c\}}
 \max\Big\{
 \big|\mathbb E[f_\lambda\Delta_{\lambda,\diamond}]\big|,
 \big|\mathbb E[f_\diamond\Delta_{\lambda,\diamond}]\big|
 \Big\}\lesssim_{d,c_0}\mathfrak a_\lambda^2\mathfrak r_\lambda.
\end{equation}
In particular,
\begin{equation}\label{eq:entropy-rate}
 \mathcal E_{\lambda,\diamond}
 +\Big|\log\frac{z_\diamond}{z_\lambda}\Big|
 \lesssim_{d,c_0}\mathfrak a_\lambda^2\mathfrak r_\lambda,\qquad
 \Big|
 \log\frac{Z_\lambda}{Z_{\rm free}}-\log z_0
 \Big|
 \lesssim_{d,c_0} \mathfrak a_\lambda^2\mathfrak r_\lambda.
\end{equation}
\end{proposition}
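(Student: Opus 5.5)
The second display follows from the first by Propositions~\ref{prop:two-entropies} and~\ref{prop:quantum-entropy-interface}, so the work is to bound the expectations in \eqref{eq:free-energy-rate}. Each has the form $\mathbb E[f\Delta_{\lambda,\diamond}]$ with $f\in\{f_\lambda,f_{\mathrm q},f_{\mathrm c}\}$. By H\"older,
\[
 |\mathbb E[f\Delta_{\lambda,\diamond}]|\leq \|f\|_{L^{p'}(\widehat{\mathbb P}_\lambda)}\|\Delta_{\lambda,\diamond}\|_{L^p(\widehat{\mathbb P}_\lambda)}
\]
for a fixed conjugate pair, say $p=p'=2$. This reduces the statement to two independent estimates.
\begin{enumerate}[label=(\roman*)]
\item \emph{Moment bounds for the energy differences:} $\|\Delta_{\lambda,\diamond}\|_{L^p}\lesssim_{d,p}\mathfrak a_\lambda\mathfrak r_\lambda$.
\item \emph{Integrability of the normalized Gibbs weights:} $\|f\|_{L^{p'}}\lesssim_{d,c_0}\mathfrak a_\lambda$, up to constants.
\end{enumerate}
Combining them gives a bound $\lesssim\mathfrak a_\lambda^2\mathfrak r_\lambda$, with room to spare.

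For (i) I would use the expansions of $\Delta_{\lambda,\diamond}$ in Step~3 of Section~\ref{sec:proof-strategy}. The fluctuation moments are controlled conditionally on $u_\lambda$. For the quantum fluctuation, $\mathcal R^{A}_{\lambda,\mathrm q}=\lambda\sum_j A(X_j)-\lambda\langle u_\lambda,Au_\lambda\rangle/\lambda$ is a centered linear statistic of a Poisson point process with intensity $|u_\lambda|^2/\lambda$. Its conditional variance is therefore $\lambda\int|A|^2|u_\lambda|^2$. Summing against $\widehat w(k)$ with $|e_k|^2=(2\pi)^{-d}$ gives
\[
 \mathbb E\sum_k\widehat w(k)|\mathcal R^{e_k}_{\lambda,\mathrm q}|^2\lesssim \lambda w(0)\,\mathbb E\|u_\lambda\|^2=\lambda^2 w(0)N_{\rm free},
\]
which is bounded by $\mathfrak e_\lambda$. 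Higher moments follow from Poisson cumulant bounds.

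For the classical fluctuation, $\mathcal R^A_{\lambda,\mathrm c}=2\operatorname{Re}\langle u_\lambda,Av_\lambda\rangle+\cM^A_{\mathsf C_0-\mathsf C_\lambda}[v_\lambda]$. Gaussian computations with covariance difference $\|\mathsf C_0-\mathsf C_\lambda\|_{\gS^2}^2\lesssim\lambda^2N_{\rm free}$ bound $\sum_k\widehat w(k)\mathbb E|\mathcal R^{e_k}_{\lambda,\mathrm c}|^2$ by $\mathfrak e_\lambda$ as well; I expect this to be the content of Lemma~\ref{lem:abstract-single-lift-moments}.

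The cross term $\operatorname{Re}\sum_k\widehat w(k)\overline{\cM^{e_k}_\lambda}\mathcal R^{e_k}$ is handled by Cauchy--Schwarz in $k$. This gives $(\sum\widehat w|\cM_\lambda^{e_k}|^2)^{1/2}(\sum\widehat w|\mathcal R|^2)^{1/2}$, whose moments are $\lesssim E^{1/2}\mathfrak e_\lambda^{1/2}$, because the Gaussian second moment of $\sum\widehat w|\cM^{e_k}|^2$ is $\lesssim E$ (Lemma~\ref{lem:Gaussian-centered-quadratic-moments}). The mass terms contribute $(|\alpha_\lambda|+|\alpha_0|)(\lambda^2N_{\rm free})^{1/2}$, and the counterterm differences are already part of $\mathfrak e_\lambda$. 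Altogether $\|\Delta\|_{L^p}\lesssim\mathfrak r_\lambda$, possibly times powers of $\mathfrak a_\lambda$ coming from hypercontractivity at fixed $p$; this is Lemma~\ref{lem:Lp-estimate-Delta}.

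For (ii) I would use positivity of $\widehat w$. Completing the square gives a lower bound on each interaction:
\[
 \frac12\widehat w(0)|\cM^{\1}|^2-\alpha\cM^{\1}\geq-\frac{\alpha^2}{2\widehat w(0)},
\]
and the remaining modes are nonnegative. Hence $e^{-\cW}\leq e^{\alpha^2/(4(2\pi)^dc_0)+|\beta|}$, and for the quantum weight the same holds pointwise on $\mathfrak C$.

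The \textbf{main obstacle} is the lower bound on the partition functions $z_\lambda,z_{\mathrm q},z_{\mathrm c}$ needed to normalize. By Jensen,
\[
 \log z\geq-\mathbb E[\cW]\geq-\tfrac12\textstyle\sum_k\widehat w(k)\mathbb E|\cM^{e_k}|^2-|\beta|\gtrsim -E-|\beta|.
\]
So $\|f\|_\infty\leq e^{C\mathfrak a_\lambda}$, which is exponential rather than polynomial in $\mathfrak a_\lambda$. To obtain polynomial dependence, I would instead bound the $L^{p'}$ norm directly via the variational identity $\log\mathbb E[f^{p'}]=\log\mathbb E[e^{-p'\cW}]-p'\log z$. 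The hope is that the two exponential quantities cancel at leading order through completing the square in both, leaving a residue that is polynomial in $E,\alpha^2$. If this fails, I would settle for the cruder estimate and absorb it, but the claimed rate $\mathfrak a_\lambda^2\mathfrak r_\lambda$ suggests a Lemma~\ref{lem:Lp-estimate-three-weights} of exactly this polynomial type. Proving that lemma is where I expect the real effort to lie.
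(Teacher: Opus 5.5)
There is a genuine gap, and it is the one you flagged yourself: step (ii) with a fixed conjugate pair such as $p=p'=2$ cannot be carried out. The normalized weights are not polynomially bounded in $L^{p'}$ for a fixed $p'>1$.

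Completing the square separately in $\mathbb E e^{-p'\cW}$ and in $z^{p'}$ does not make the exponentials cancel, because the fluctuations of $\cM^{\1}$ shift the two optimizers differently. Take a one-mode Gaussian caricature: $\widehat w(k)=0$ for $k\neq0$, $\cM^{\1}$ replaced by $M\sim N(0,1)$, and $f\propto e^{-c_0M^2+\alpha M}$. One computes $\log\mathbb E f^2=\alpha^2/[(1+2c_0)(1+4c_0)]+O(1)$. The true chi-square-type $\cM^{\1}$ still leaves a residue that grows with $\alpha$. Since the claimed rate is $\mathfrak a_\lambda^2\mathfrak r_\lambda$ and $\mathfrak a_\lambda$ diverges in the applications, your fallback of absorbing a factor $e^{C\mathfrak a_\lambda}$ is not available either.

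The missing idea is to let the H\"older exponent depend on $\mathfrak a_\lambda$; then the crude bound you already derived is exactly what is needed. Because $\mathbb E f=1$, one has $\mathbb E f^q\leq\|f\|_{L^\infty}^{q-1}$. Hence $\|f\|_{L^q}\leq\|f\|_{L^\infty}^{1/r}\leq e^{C_{d,c_0}\mathfrak a_\lambda/r}$ for $1/q+1/r=1$. This is Lemma~\ref{lem:Lp-estimate-three-weights}, and its proof is precisely your Jensen lower bound on $z$ combined with the zero-mode lower bound on the interaction.

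The paper takes $r=2+\mathfrak a_\lambda$, so all three weights are $O(1)$ in $L^q$. The cost moves to $\|\Delta_{\lambda,\diamond}\|_{L^r}$, now taken at an exponent of size $\mathfrak a_\lambda$. For this, (i) must be proved with explicit polynomial dependence on the exponent: $\|\Delta_{\lambda,\diamond}\|_{L^p}\lesssim_d p^2\mathfrak r_\lambda$, with no factor of $\mathfrak a_\lambda$ at all (Lemma~\ref{lem:Lp-estimate-Delta}).
\begin{itemize}
\item The fluctuations $\mathcal R^A_{\lambda,\diamond}$ have $L^p$ norms growing linearly in $p$, via a Chernoff bound from the conditional Poisson generating function.
\item The masses $\cM_\lambda^{e_k}$ also grow linearly in $p$, by the exponential-moment bound for centered Gaussian quadratic forms.
\item The products appearing in $\Delta_{\lambda,\diamond}$ then cost $p^2$.
\end{itemize}
Hence $|\mathbb E[f\Delta_{\lambda,\diamond}]|\lesssim r^2\mathfrak r_\lambda\lesssim\mathfrak a_\lambda^2\mathfrak r_\lambda$. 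The square in the rate comes from $r^2$, not from one factor of $\mathfrak a_\lambda$ in each H\"older factor as in your heuristic.

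A minor correction: since $e_0=(2\pi)^{-d/2}$, the zero mode contributes $c_0|\cM^{\1}|^2$, so the lower bound on the interaction is $-\alpha^2/(4c_0)$.
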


We begin by estimating the two fluctuations. Recall that
$$
 \mathcal R_{\lambda,\mathrm q}^A
 =
 \mathbb M_\lambda^A(\mathbf x)-\cM_\lambda^A[u_\lambda];\qquad
 \mathcal R_{\lambda,\mathrm c}^A
 =
 \cM_0^A[u_\lambda+v_\lambda]-\cM_\lambda^A[u_\lambda],
$$
where $A$ is a bounded multiplication operator.

\begin{lemma}\label{lem:barycenter-identity} Let $\lambda>0$ and $\diamond\in\{\mathrm q,\mathrm c\}$. Then
\begin{equation}\label{eq:q-c-fluctuation-identity}
 \mathbb E[\mathcal R_{\lambda,\diamond}^A\mid u_\lambda]=0,
 \qquad \diamond\in\{\mathrm q,\mathrm c\}
\end{equation}
for every bounded multiplication operator $A$. In particular,
\begin{equation}\label{eq:abstract-barycenter-identities}
 \mathbb E|\mathcal R_{\lambda,\mathrm c}^A|^2
 =
 \mathbb E\big|\cM_0^A[u_\lambda+v_\lambda]\big|^2
 -
 \mathbb E\big|\cM_\lambda^A[u_\lambda]\big|^2.
\end{equation}
\end{lemma}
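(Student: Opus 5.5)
We treat the quantum and classical fluctuations separately, since both reduce to computing a conditional mean given $u_\lambda$.

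\emph{Quantum case.} Conditionally on $u_\lambda=u$, the configuration $\mathbf x$ has law $\rho_{\lambda,u}$ from \eqref{eq:abstract-coherent-kernel}. This is the Poisson point process on $\bT^d$ with intensity $\lambda^{-1}|u(x)|^2\dd x$: the number of points is Poisson with mean $\|u\|^2/\lambda$, and given the number, the points are i.i.d.\ with density $|u|^2/\|u\|^2$. Campbell's formula therefore gives
$$
 \int_{\mathfrak C}\sum_{j=1}^{|\mathbf x|}A(X_j)\dd\rho_{\lambda,u}(\mathbf x)
 =
 \lambda^{-1}\int_{\bT^d}A(x)|u(x)|^2\dd x
 =
 \lambda^{-1}\langle u,Au\rangle .
$$
The exchange of sum and integral is justified because $A$ is bounded and $\sum_n n\,(\|u\|^2/\lambda)^n/n!$ converges. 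Multiplying by $\lambda$ and subtracting $\tr(A\mathsf C_\lambda)$ yields $\mathbb E[\mathbb M^A_\lambda\mid u_\lambda]=\cM^A_\lambda[u_\lambda]$. This is the computation already recorded in Section~\ref{sec:proof-strategy}, applied $\mathbb P$-almost surely, since $u_\lambda\in\gH$ almost surely because $\mathsf C_\lambda\in\gS^1$. Since $\mathbb M^A_\lambda$ is integrable with respect to $\widehat{\mathbb P}_\lambda$ (its second moment is finite by the Poisson variance formula and $\mathbb E\|u_\lambda\|^4<\infty$), this gives \eqref{eq:q-c-fluctuation-identity} for $\diamond=\mathrm q$.

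\emph{Classical case.} First work at finite cutoff with $\cM^A_{0,K}[u_\lambda+v_\lambda]$. Expand
$$
 \langle P_K(u_\lambda+v_\lambda),AP_K(u_\lambda+v_\lambda)\rangle
 =
 \langle P_Ku_\lambda,AP_Ku_\lambda\rangle
 +2\operatorname{Re}\text{-type cross terms}
 +\langle P_Kv_\lambda,AP_Kv_\lambda\rangle .
$$
Since $v_\lambda$ is independent of $u_\lambda$ and centered, the cross terms have zero conditional mean. The last term has conditional mean $\tr(AP_K(\mathsf C_0-\mathsf C_\lambda)P_K)$. The subtracted constant in $\cM^A_{0,K}$ is $\tr(AP_K\mathsf C_0P_K)$. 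Hence
$$
 \mathbb E[\cM^A_{0,K}\mid u_\lambda]
 =
 \langle P_Ku_\lambda,AP_Ku_\lambda\rangle-\tr(AP_K\mathsf C_\lambda P_K).
$$
Because $u_\lambda\in\gH$ almost surely and $\mathsf C_\lambda$ is trace class, this converges almost surely and in $L^1$ to $\cM^A_\lambda[u_\lambda]$ as $K\to\infty$. Meanwhile $\cM^A_{0,K}\to\cM^A_0$ in $L^2(\mu_0)$ by Lemma~\ref{lem:Gaussian-centered-quadratic-moments}. Using \eqref{eq:Gaussian-coupling-law} and the $L^2$-contractivity of conditional expectation, we can pass to the limit and obtain \eqref{eq:q-c-fluctuation-identity} for $\diamond=\mathrm c$.

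\emph{Variance identity.} Write $\cM^A_0[u_\lambda+v_\lambda]=\cM^A_\lambda[u_\lambda]+\mathcal R^A_{\lambda,\mathrm c}$. The first term is $u_\lambda$-measurable, and the second has zero conditional mean. The cross term therefore vanishes: it equals $\mathbb E\big[\overline{\cM_\lambda^A}\,\mathbb E[\mathcal R\mid u_\lambda]\big]=0$, which is legitimate because both terms lie in $L^2$. Pythagoras then gives \eqref{eq:abstract-barycenter-identities}.

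The only delicate point is the limit $K\to\infty$ in the classical case. Interchanging the conditional expectation with the $L^2(\mu_0)$ limit requires the coupling identity \eqref{eq:Gaussian-coupling-law}, which lets us transfer $L^2(\mu_0)$ convergence to $L^2(\widehat{\mathbb P}_\lambda)$. Everything else is elementary Gaussian and Poisson computation.
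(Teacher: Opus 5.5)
Your proposal is correct and follows essentially the paper's argument. In the quantum case, your Campbell formula is the coherent-state identity $\langle\xi(u/\sqrt\lambda),\lambda\dG(A)\xi(u/\sqrt\lambda)\rangle=\langle u,Au\rangle$ written on configuration space. In the classical case, you use the same finite-rank approximation (your $P_KAP_K$ is one admissible choice of the paper's matrix-unit combinations $A_j$), the same coupling law $\Law(u_\lambda+v_\lambda)=\mu_0$, and $L^2$-contractivity of conditional expectation. The one small difference is that you identify the limit on the $u_\lambda$ side by dominated convergence, using $\tr\mathsf C_\lambda<\infty$; the paper instead uses the isometry $\|\cM_\lambda^B\|_{L^2(\mu_\lambda)}=\|\mathsf C_\lambda^{1/2}B\mathsf C_\lambda^{1/2}\|_{\gS^2}$ together with $\mathsf C_\lambda\leq\mathsf C_0$.
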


\begin{proof}
The quantum case of\kern0pt{} \eqref{eq:q-c-fluctuation-identity} follows from the identity
$$
 \inprod{\xi(u/\sqrt{\lambda})}{\lambda\dG(A)\xi(u/\sqrt{\lambda})}=\inprod{u}{Au}
$$
for every bounded operator $A$ and every $u\in\gH$.

For the classical identity, take $\phi,\psi\in\gH^1$. Independence in \eqref{eq:coupled-increment} and the rank-one formula of Lemma~\ref{lem:Gaussian-centered-quadratic-moments} give
\begin{equation}
 \mathbb E\big[
 \cM_0^{|\phi\rangle\langle\psi|}[u_\lambda+v_\lambda]
 \mid u_\lambda\big]
 =
 \langle u_\lambda,\phi\rangle\langle\psi,u_\lambda\rangle
 +\langle\psi,(\mathsf C_0-\mathsf C_\lambda)\phi\rangle
 -\langle\psi,\mathsf C_0\phi\rangle
 =
 \cM_\lambda^{|\phi\rangle\langle\psi|}[u_\lambda].\label{eq:finite-rank-identity-mass}
\end{equation}
Moreover, the vectors $e_n$ are eigenvectors of $\mathsf C_\lambda$, and the matrix units $|e_m\rangle\langle e_n|$ form an orthonormal basis of $\gS^2$. We can choose finite linear combinations $A_j$ of these matrix units such that $ \big\|\mathsf C_0^{1/2}(A-A_j) \mathsf C_0^{1/2}\big\|_{\gS^2}\leq j^{-1}. $ Since $e_n\in\gH^1$, the finite-rank identity \eqref{eq:finite-rank-identity-mass} applies to each $A_j$.

Using that identity, the $L^2$-contractivity of conditional expectation, and \eqref{eq:Gaussian-mass-limit-Lp}, we obtain
\begin{align*}
 &\Big\|
 \mathbb E[\cM_0^A[u_\lambda+v_\lambda]\mid u_\lambda]
 -
 \cM_\lambda^A[u_\lambda]
 \Big\|_{L^2(\widehat{\mathbb P}_\lambda)}
 =
 \Big\|
 \mathbb E[\cM_0^{A-A_j}[u_\lambda+v_\lambda]\mid u_\lambda]
 -
 \cM_\lambda^{A-A_j}[u_\lambda]
 \Big\|_{L^2(\widehat{\mathbb P}_\lambda)}\\
 &\leq
 \big\|\cM_0^{A-A_j}[u_\lambda+v_\lambda]
 \big\|_{L^2(\widehat{\mathbb P}_\lambda)}
 +
 \big\|\cM_\lambda^{A-A_j}[u_\lambda]
 \big\|_{L^2(\widehat{\mathbb P}_\lambda)}
 =
 \big\|\cM_0^{A-A_j}\big\|_{L^2(\mu_0)}
 +
 \big\|\cM_\lambda^{A-A_j}\big\|_{L^2(\mu_\lambda)}\\
 &=
 \big\|\mathsf C_0^{1/2}(A-A_j)
 \mathsf C_0^{1/2}\big\|_{\gS^2}
 +
 \big\|\mathsf C_\lambda^{1/2}(A-A_j)
 \mathsf C_\lambda^{1/2}\big\|_{\gS^2}
 \leq \frac2j\longrightarrow0.
\end{align*}
In the second equality, we used $\Law(u_\lambda+v_\lambda)=\mu_0$ and $\Law(u_\lambda)=\mu_\lambda$. It follows that
$$
 \mathbb E[\cM_0^A[u_\lambda+v_\lambda]\mid u_\lambda]
 =
 \cM_\lambda^A[u_\lambda],
$$
and hence $\mathbb E[\mathcal R_{\lambda,\mathrm c}^A\mid u_\lambda]=0$. Equation~\eqref{eq:abstract-barycenter-identities} follows from the orthogonal-projection property of conditional expectation in $L^2$.
\end{proof}

\begin{lemma}[Fluctuation moment estimates] \label{lem:abstract-single-lift-moments} Let $\lambda>0$ and let $A$ be a bounded multiplication operator on $\gH$. For every $1\leq p<\infty$,
\begin{equation}
 \Big(
 \mathbb E\big[
 |\mathcal R_{\lambda,\mathrm q}^A|^{2p}
 \,\big|\,u_\lambda
 \big]
 \Big)^{1/(2p)}
 \lesssim
 \Big[
 \sqrt p\,
 \big(\lambda\langle u_\lambda,|A|^2u_\lambda\rangle\big)^{1/2}
 +p\lambda\|A\|
 \Big]
 \qquad\text{almost surely}.
 \label{eq:abstract-single-quantum-lift-moment}
\end{equation}
Consequently,
\begin{equation}\label{eq:abstract-quantum-lift-moment}
 \big\|\mathcal R_{\lambda,\mathrm q}^A
 \big\|_{L^{2p}(\widehat{\mathbb P}_\lambda)}
 \lesssim
 p\Big[
 \lambda\tr_{\gH}(|A|^2\mathsf C_\lambda)
 +\lambda^2\|A\|^2
 \Big]^{1/2}.
\end{equation}
For the classical fluctuation,
\begin{equation}
 \big\|\mathcal R_{\lambda,\mathrm c}^A
 \big\|_{L^{2p}(\widehat{\mathbb P}_\lambda)}
 \lesssim
 p\Big[
 \big\|\mathsf C_0^{1/2}A\mathsf C_0^{1/2}\big\|_{\gS^2}^2
 -
 \big\|\mathsf C_\lambda^{1/2}A
 \mathsf C_\lambda^{1/2}\big\|_{\gS^2}^2
 \Big]^{1/2}.
 \label{eq:abstract-classical-lift-moment}
\end{equation}
\end{lemma}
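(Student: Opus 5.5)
Conditionally on $u_\lambda=u$, the configuration $\mathbf x$ has law $\rho_{\lambda,u}$, which by \eqref{eq:abstract-coherent-kernel} is a Poisson point process on $\bT^d$ with intensity measure $\lambda^{-1}|u(x)|^2\dd x$. Indeed, the $n$-particle weight $e^{-\|u\|^2/\lambda}\lambda^{-n}\prod_j|u(x_j)|^2/n!$ is exactly the Janossy density of such a process. Hence
$$
\lambda\sum_{j}A(X_j)=\lambda\int A\,\dd\Pi,
$$
where $\Pi$ is the Poisson random measure, and by Lemma~\ref{lem:barycenter-identity},
$$
\mathcal R_{\lambda,\mathrm q}^A=\lambda\int A\,\dd(\Pi-\lambda^{-1}|u|^2\dd x)
$$
is a compensated Poisson integral.

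I would bound its conditional moments through the exponential formula. For real $t$ and real-valued $A$,
$$
\log\mathbb E[e^{t\mathcal R}\mid u]=\lambda^{-1}\int |u|^2\bigl(e^{t\lambda A}-1-t\lambda A\bigr)\dd x .
$$
Using $e^{s}-1-s\leq s^2e^{|s|}$ for $|t|\lambda\|A\|\leq 1$, this is at most $C t^2\lambda\langle u,|A|^2u\rangle$. This is a Bernstein-type sub-gamma bound with variance proxy $\sigma^2=\lambda\langle u,|A|^2u\rangle$ and scale $b=\lambda\|A\|$. The standard conversion from the cumulant bound to moments gives
$$
\|\mathcal R\|_{L^{2p}(\cdot\mid u)}\lesssim \sqrt p\,\sigma+p\,b,
$$
which is \eqref{eq:abstract-single-quantum-lift-moment}. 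Alternatively, one can apply Rosenthal's inequality for compensated Poisson integrals directly. For complex $A$, split into real and imaginary parts. Both have modulus at most $|A|$, so the bound transfers with a constant factor.

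To pass to \eqref{eq:abstract-quantum-lift-moment}, take the unconditional $L^{2p}$ norm and use Minkowski's inequality:
$$
\|\mathcal R\|_{2p}\lesssim \sqrt p\,\bigl\|\lambda\langle u_\lambda,|A|^2u_\lambda\rangle\bigr\|_{L^p}^{1/2}+p\lambda\|A\|.
$$
The random variable $\langle u_\lambda,|A|^2u_\lambda\rangle$ is a nonnegative Gaussian quadratic form with mean $\tr(|A|^2\mathsf C_\lambda)$. By Gaussian hypercontractivity in the second chaos, or directly from the moment bound for $\mathcal M_\lambda$ in Lemma~\ref{lem:Gaussian-centered-quadratic-moments}, its $L^p$ norm is at most $Cp\,\tr(|A|^2\mathsf C_\lambda)$. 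This uses $\|\mathsf C_\lambda^{1/2}|A|^2\mathsf C_\lambda^{1/2}\|_{\gS^2}\leq\tr(|A|^2\mathsf C_\lambda)$. Altogether,
$$
\|\mathcal R\|_{2p}\lesssim p\bigl[\lambda\tr(|A|^2\mathsf C_\lambda)+\lambda^2\|A\|^2\bigr]^{1/2}.
$$

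For the classical fluctuation, I would first note that $\mathcal R_{\lambda,\mathrm c}^A$ lies in the Wiener chaos of order at most $2$ of the Gaussian vector $(g,g')$. Expanding $u_\lambda+v_\lambda$ gives
$$
\mathcal R_{\lambda,\mathrm c}^A=2\operatorname{Re}\text{-type cross term }\langle u_\lambda,Av_\lambda\rangle+\langle v_\lambda,Av_\lambda\rangle-\tr(A(\mathsf C_0-\mathsf C_\lambda)).
$$
This holds after the same finite-rank approximation used in Lemma~\ref{lem:barycenter-identity}, with the limit taken in $L^2$. Hypercontractivity on chaos of order at most $2$ then gives $\|\mathcal R\|_{2p}\lesssim p\|\mathcal R\|_2$. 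By \eqref{eq:abstract-barycenter-identities} and the Gaussian second-moment identity $\|\mathcal M_{\mathsf C}^A\|_{L^2}=\|\mathsf C^{1/2}A\mathsf C^{1/2}\|_{\gS^2}$ (used already in the proof of Lemma~\ref{lem:barycenter-identity}), the $L^2$ norm equals
$$
\bigl[\|\mathsf C_0^{1/2}A\mathsf C_0^{1/2}\|_{\gS^2}^2-\|\mathsf C_\lambda^{1/2}A\mathsf C_\lambda^{1/2}\|_{\gS^2}^2\bigr]^{1/2}.
$$
This yields \eqref{eq:abstract-classical-lift-moment}.

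The main obstacle is the quantum conditional moment bound, namely identifying $\rho_{\lambda,u}$ as a Poisson process and extracting the Bernstein form with constants uniform in $p$. Everything else consists of Gaussian chaos facts already available in the paper.
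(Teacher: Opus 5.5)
Your proposal is correct and follows essentially the same route as the paper: your Poisson exponential formula is exactly the conditional moment generating function \eqref{eq:abstract-quantum-lift-mgf} that the paper reads off from \eqref{eq:abstract-coherent-kernel}. It is then followed, as in the paper, by a sub-gamma/Chernoff-to-moments conversion, a real/imaginary splitting, and the bound $\|\langle u_\lambda,|A|^2u_\lambda\rangle\|_{L^p}\lesssim p\tr_{\gH}(|A|^2\mathsf C_\lambda)$ via Lemma~\ref{lem:Gaussian-centered-quadratic-moments}. For the classical fluctuation, the paper likewise combines the $L^2$ identity from Lemma~\ref{lem:barycenter-identity} with the quadratic-chaos moment bound \eqref{eq:Gaussian-joint-quadratic-moment} applied to the jointly Gaussian pair $(u_\lambda+v_\lambda,u_\lambda)$, which is your hypercontractivity step.
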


\begin{proof}
Assume first that $A=A^*$. By \eqref{eq:abstract-coherent-kernel},
\begin{equation}
 \mathbb E\big[
 e^{t\mathcal R_{\lambda,\mathrm q}^A}
 \,\big|\,u_\lambda
 \big]
 =
 \exp\Big[
 \lambda^{-1}
 \big\langle u_\lambda,
 (e^{\lambda tA}-1-\lambda tA)u_\lambda
 \big\rangle
 \Big],
 \qquad t\in\R.
 \label{eq:abstract-quantum-lift-mgf}
\end{equation}
Using $0\leq e^x-1-x\leq x^2$ for $|x|\leq1/2$, we obtain
\begin{equation}\label{eq:abstract-quantum-lift-mgf-bound}
 \mathbb E\big[
 e^{t\mathcal R_{\lambda,\mathrm q}^A}
 \,\big|\,u_\lambda
 \big]
 \leq
 e^{\lambda t^2\langle u_\lambda,A^2u_\lambda\rangle},
 \qquad |t|\lambda\|A\|\leq\frac12.
\end{equation}
If $\langle u_\lambda,A^2u_\lambda\rangle=0$, then \eqref{eq:abstract-quantum-lift-mgf} gives $\mathcal R_{\lambda,\mathrm q}^A=0$ conditionally on $u_\lambda$. Otherwise, for $R>0$, choose $ t= \min\Big\{ \frac{R}{4\lambda\langle u_\lambda,A^2u_\lambda\rangle}, \frac1{2\lambda\|A\|} \Big\}. $ Then
\begin{equation}
 -tR
 +
 \lambda t^2
 \inprod{u_\lambda}{A^2u_\lambda}
 \leq
 -\frac18
 \min\Big\{
 \frac{R^2}{
 \lambda\inprod{u_\lambda}{A^2u_\lambda}
 },
 \frac{R}{\lambda\norm{A}}
 \Big\}.
 \label{eq:abstract-quantum-lift-Chernoff-exponent}
\end{equation}
Applying the exponential form of Markov's inequality to $\mathcal R_{\lambda,\mathrm q}^A$ and $-\mathcal R_{\lambda,\mathrm q}^A$, and using \eqref{eq:abstract-quantum-lift-mgf-bound}--\eqref{eq:abstract-quantum-lift-Chernoff-exponent}, we obtain
$$
 \mathbb E
 \big[
 \1_{\{
 |\mathcal R_{\lambda,\mathrm q}^A|\geq R
 \}}
 \,\big|\,
 u_\lambda
 \big]
 \leq
 2\exp\Big[
 -\frac18
 \min\Big\{
 \frac{R^2}{
 \lambda\inprod{u_\lambda}{A^2u_\lambda}
 },
 \frac{R}{\lambda\norm{A}}
 \Big\}
 \Big].
$$
The layer-cake formula and $e^{-\min\{a,b\}}\leq e^{-a}+e^{-b}$ now give
\begin{align*}
 &
 \mathbb E\big[
 |\mathcal R_{\lambda,\mathrm q}^A|^{2p}
 \,\big|\,u_\lambda
 \big]
 =
 2p
 \int_0^\infty
 R^{2p-1}
 \mathbb E
 \big[
 \1_{\{
 |\mathcal R_{\lambda,\mathrm q}^A|\geq R
 \}}
 \,\big|\,
 u_\lambda
 \big]
 \dd R
 \\
 &\leq
 4p
 \int_0^\infty
 R^{2p-1}
 \exp\Big[
 -\frac{R^2}{
 8\lambda\inprod{u_\lambda}{A^2u_\lambda}
 }
 \Big]
 \dd R
 +
 4p
 \int_0^\infty
 R^{2p-1}
 \exp\Big[
 -\frac{R}{
 8\lambda\norm{A}
 }
 \Big]
 \dd R
 \\
 &=
 2p\,8^p\Gamma(p)
 \left(
 \lambda\inprod{u_\lambda}{A^2u_\lambda}
 \right)^p
 +
 4p\,8^{2p}\Gamma(2p)
 \lambda^{2p}\norm{A}^{2p}.
\end{align*}
Taking the $2p$-th root and using $\Gamma(p+1)^{1/(2p)}\lesssim p^{1/2}$ and $\Gamma(2p+1)^{1/(2p)}\lesssim p$, we obtain \eqref{eq:abstract-single-quantum-lift-moment} for self-adjoint $A$.

For a multiplication operator with a complex-valued symbol, use $ \mathcal R_{\lambda,\mathrm q}^A = \mathcal R_{\lambda,\mathrm q}^{\operatorname{Re}A} + \ii\mathcal R_{\lambda,\mathrm q}^{\operatorname{Im}A}. $ The conditional triangle inequality, $(\operatorname{Re}A)^2+(\operatorname{Im}A)^2=|A|^2$, and $\|\operatorname{Re}A\|,\|\operatorname{Im}A\|\leq\|A\|$ give \eqref{eq:abstract-single-quantum-lift-moment}.

Applying Lemma~\ref{lem:Gaussian-centered-quadratic-moments} to $|A|^2$, we obtain
\begin{align*}
 \norm{
 \mathcal R_{\lambda,\mathrm q}^A
 }_{L^{2p}(\widehat{\mathbb P}_\lambda)}^2
 &=
 \big(
 \mathbb E
 \left[
 \mathbb E\big[
 |\mathcal R_{\lambda,\mathrm q}^A|^{2p}
 \,\big|\,u_\lambda
 \big]\right]\big)^{1/p}
 \lesssim
 p\lambda
 \big\|\inprod{u_\lambda}{|A|^2 u_\lambda}\big\|_{L^p(\widehat{\mathbb P}_\lambda)}
 +
 p^2\lambda^2\norm{A}^2
 \\
 &\lesssim
 p\lambda \Big[\big\|\mathcal M_\lambda^{|A|^2}[u_\lambda]\big\|_{L^p(\widehat{\mathbb P}_\lambda)}+\tr_{\gH}(|A|^2\mathsf C_\lambda)\Big]+p^2\lambda^2\norm{A}^2\\
 &\lesssim
 p\lambda \Big[p\big\|\mathsf C_\lambda^{\frac12}|A|^2\mathsf C_\lambda^{\frac12}\big\|_{\gS^2}+\tr_{\gH}(|A|^2\mathsf C_\lambda)\Big]+p^2\lambda^2\norm{A}^2\\
 &\lesssim
 p^2\Big[
 \lambda\tr_{\gH}(|A|^2\mathsf C_\lambda)
 +\lambda^2\|A\|^2
 \Big]
\end{align*}
In the last step, we used positivity to bound $\|\mathsf C_\lambda^{1/2}|A|^2\mathsf C_\lambda^{1/2}\|_{\gS^2} \leq\tr_{\gH}(|A|^2\mathsf C_\lambda)$. This proves \eqref{eq:abstract-quantum-lift-moment}.

For the classical fluctuation, \eqref{eq:Gaussian-mass-limit-Lp} gives
\begin{equation}
 \mathbb E|\mathcal R_{\lambda,\mathrm c}^A|^2
 =
 \mathbb E\big|\cM_0^A[u_\lambda+v_\lambda]\big|^2
 -
 \mathbb E\big|\cM_\lambda^A[u_\lambda]\big|^2
 =
 \big\|\mathsf C_0^{1/2}A\mathsf C_0^{1/2}\big\|_{\gS^2}^2
 -
 \big\|\mathsf C_\lambda^{1/2}A
 \mathsf C_\lambda^{1/2}\big\|_{\gS^2}^2.
 \label{eq:abstract-classical-lift-variance}
\end{equation}
The fields $u_\lambda+v_\lambda$ and $u_\lambda$ are jointly Gaussian under $\widehat{\mathbb P}_\lambda$. Applying\kern0pt{} \eqref{eq:Gaussian-joint-quadratic-moment} to this pair gives
$$
 \big\|\mathcal R_{\lambda,\mathrm c}^A
 \big\|_{L^{2p}(\widehat{\mathbb P}_\lambda)}
 \lesssim
 p\big\|\mathcal R_{\lambda,\mathrm c}^A
 \big\|_{L^2(\widehat{\mathbb P}_\lambda)}.
$$
Together with \eqref{eq:abstract-classical-lift-variance}, this proves \eqref{eq:abstract-classical-lift-moment}.
\end{proof}

\begin{lemma}\label{lem:Lp-estimate-Delta} For $0<\lambda\leq1$, $p\geq1$, and $\diamond\in\{\mathrm q,\mathrm c\}$, every bounded multiplication operator $A$ satisfies
\begin{equation}\label{eq:common-mass-fluctuation-bound}
 \big\|\mathcal R_{\lambda,\diamond}^A\big\|_{L^{2p}}
 \lesssim_d
 p\|A\|_{L^\infty(\bT^d)}
 (\lambda^2N_{\rm free})^{1/2}.
\end{equation}
Moreover,
\begin{equation}\label{eq:abstract-interaction-difference-moment}
 \big\|\Delta_{\lambda,\diamond}\big\|_{L^p(\widehat{\mathbb P}_\lambda)}
 \lesssim_d p^2\mathfrak r_\lambda,
\end{equation}
where $\mathfrak r_\lambda$ is defined in \eqref{eq:comparison-r}.
\end{lemma}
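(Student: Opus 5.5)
The plan is to deduce the bound \eqref{eq:common-mass-fluctuation-bound} directly from Lemma~\ref{lem:abstract-single-lift-moments}, and then to expand $\Delta_{\lambda,\diamond}$ as in Section~\ref{sec:proof-strategy} and estimate it term by term with H\"older's inequality.

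\textbf{The fluctuation bound.} For the quantum case I start from \eqref{eq:abstract-quantum-lift-moment}. Since $A$ is a multiplication operator, $|A|^2\leq\|A\|_\infty^2$, and hence
\[
\lambda\tr(|A|^2\mathsf C_\lambda)\leq\|A\|_\infty^2\lambda\tr\mathsf C_\lambda=\|A\|_\infty^2\lambda^2N_{\rm free}.
\]
The second term is harmless because $\lambda^2\|A\|^2\leq\|A\|_\infty^2\lambda^2N_{\rm free}$. This holds because $N_{\rm free}\geq(e^{\lambda}-1)^{-1}\gtrsim\lambda^{-1}\geq1$ for $\lambda\leq1$.

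For the classical case I use \eqref{eq:abstract-classical-lift-moment}. Write $D=\mathsf C_0-\mathsf C_\lambda\geq0$, which commutes with $\mathsf C_\lambda$. Then
\[
\|\mathsf C_0^{1/2}A\mathsf C_0^{1/2}\|_{\gS^2}^2-\|\mathsf C_\lambda^{1/2}A\mathsf C_\lambda^{1/2}\|_{\gS^2}^2=\tr(A^*\mathsf C_0A\mathsf C_0)-\tr(A^*\mathsf C_\lambda A\mathsf C_\lambda)=\tr(A^*DA\mathsf C_0)+\tr(A^*\mathsf C_\lambda AD).
\]
Both terms are bounded by $\|A\|^2\|D\|_{\gS^2}\|\mathsf C_0\|_{\gS^2}$, using Cauchy--Schwarz in $\gS^2$. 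Since $\|\mathsf C_0\|_{\gS^2}\lesssim_d1$ for $d\leq3$, it remains to use $\|D\|_{\gS^2}^2\lesssim_d\lambda^2N_{\rm free}$. To see this, note $0\leq h^{-1}-c_\lambda\leq\min\{\lambda/2,h^{-1}\}$. Then
\[
\sum_n\min\{\lambda^2,h(n)^{-2}\}\lesssim\lambda^2\cdot\#\{h\leq\lambda^{-1}\}+\sum_{h>\lambda^{-1}}h^{-2}.
\]
In $d=2,3$ this is $\lesssim\lambda^{2-d/2}+\lambda^{2-d/2}\lesssim\lambda^2N_{\rm free}$, because $\lambda^2N_{\rm free}\gtrsim\lambda^{2-d/2}$. \textbf{This step is the main obstacle.} A cruder route would only give $\|D\|_{\gS^2}^{1/2}$ and lose half the power. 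If Cauchy--Schwarz is too lossy here, I would instead compute in Fourier variables, which gives $\sum_{k,j}|\widehat A(k-j)|^2\ldots$, and bound by $\|A\|_\infty^2\sup$-type estimates.

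\textbf{The interaction difference.} I use the expansions of $\Delta_{\lambda,\diamond}$ in Section~\ref{sec:proof-strategy} and take $A=e_k$, so that $\|e_k\|_\infty=(2\pi)^{-d/2}$.

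The cross term is $\sum_k\widehat w(k)\overline{\cM_\lambda^{e_k}}\mathcal R^{e_k}$. I bound it by Cauchy--Schwarz in $k$ weighted by $\widehat w$, then H\"older in $L^p$:
\[
\Big(\sum_k\widehat w(k)|\cM_\lambda^{e_k}|^2\Big)^{1/2}\Big(\sum_k\widehat w(k)|\mathcal R^{e_k}|^2\Big)^{1/2}.
\]
In the first factor, Minkowski and Gaussian hypercontractivity give $\|\sum_k\widehat w(k)|\cM_\lambda^{e_k}|^2\|_{L^p}\lesssim p^2\sum_k\widehat w(k)\|\mathsf C_\lambda^{1/2}e_k\mathsf C_\lambda^{1/2}\|_{\gS^2}^2\lesssim p^2E$, since $c_\lambda\leq h^{-1}$. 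For the second factor I will not use \eqref{eq:common-mass-fluctuation-bound}, which would cost $w(0)$ times $\lambda^2N_{\rm free}$ without matching $\mathfrak e_\lambda$. Instead I refine by splitting into the two cases.
\begin{itemize}
\item In the quantum case, \eqref{eq:abstract-quantum-lift-moment} gives $\sum_k\widehat w(k)p^2[\lambda\tr\mathsf C_\lambda+\lambda^2]\lesssim p^2\lambda^2(N_{\rm free}+1)w(0)\leq p^2\mathfrak e_\lambda$.
\item In the classical case, I use the trace formula above together with $\tr(D\mathsf C_0)\lesssim\lambda^2N_{\rm free}$-type bounds. If these are insufficient, I fall back to including this term in $\mathfrak e_\lambda$ as well.
\end{itemize}
This yields the contribution $p^2E^{1/2}\mathfrak e_\lambda^{1/2}$. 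The quadratic term $\frac12\sum_k\widehat w|\mathcal R|^2$ contributes $p^2\mathfrak e_\lambda$.

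The mass term $\alpha\mathcal R^{\1}$ gives $p(|\alpha_\lambda|+|\alpha_0|)(\lambda^2N_{\rm free})^{1/2}$ by \eqref{eq:common-mass-fluctuation-bound}. In the classical case there are two further terms. The term $(\alpha_\lambda-\alpha_0)\cM_\lambda^{\1}$ has $L^p$ norm $\lesssim p|\alpha_\lambda-\alpha_0|\|\mathsf C_\lambda\|_{\gS^2}\lesssim p\,\mathfrak e_\lambda$, and the constant satisfies $|\beta_\lambda-\beta_0|\leq\mathfrak e_\lambda$. Summing all contributions gives \eqref{eq:abstract-interaction-difference-moment}.
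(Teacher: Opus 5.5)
\textbf{Gap: the classical case of \eqref{eq:common-mass-fluctuation-bound} loses a square root.} Your trace identity is correct. Cauchy--Schwarz in $\gS^2$ then bounds the difference of squared Hilbert--Schmidt norms by $\|A\|^2\|D\|_{\gS^2}\|\mathsf C_0\|_{\gS^2}$. To reach $p\|A\|_\infty(\lambda^2N_{\rm free})^{1/2}$ you therefore need $\|D\|_{\gS^2}\lesssim\lambda^2N_{\rm free}$. You only show $\|D\|_{\gS^2}^2\lesssim\lambda^2N_{\rm free}$, and the stronger bound is false. Indeed $t\mapsto t^{-1}-(e^t-1)^{-1}$ is decreasing, so $D(n)\geq\lambda/4$ whenever $h(n)\leq\lambda^{-1}$, which gives $\|D\|_{\gS^2}\gtrsim\lambda^{1-d/4}$ for small $\lambda$. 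In $d=3$ that is $\lambda^{1/4}$ against $\lambda^2N_{\rm free}\asymp\lambda^{1/2}$. In $d=2$ it is $\lambda^{1/2}$ against $\lambda\log(2/\lambda)$. So your argument yields only $\|\mathcal R^A_{\lambda,\mathrm c}\|_{L^{2p}}\lesssim p\|A\|_\infty(\lambda^2N_{\rm free})^{1/4}$. The ``cruder route that loses half the power'' is exactly the one you took. The loss propagates to two places. The classical $\alpha_0\mathcal R^{\1}_{\lambda,\mathrm c}$ term no longer fits $\mathfrak r_\lambda$. The classical quadratic and cross terms need $\sum_k\widehat w(k)\|\mathcal R^{e_k}_{\lambda,\mathrm c}\|_{L^{2p}}^2\lesssim p^2\lambda^2N_{\rm free}w(0)$, which you no longer have. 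Your fallback of ``including this term in $\mathfrak e_\lambda$'' is not an argument, since $\mathfrak e_\lambda$ is fixed by \eqref{eq:comparison-e}.

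\textbf{Fix: pair $D$ against $\mathsf C_0$ in trace norm, not $\gS^2$.} The paper cites \eqref{eq:covariance-multiplication-bound} from Lemma~\ref{lem:covariance-estimates}. For a multiplication operator, the difference equals $\sum_\ell|\langle e_\ell,Ae_0\rangle|^2\sum_n[c_0(n)c_0(n+\ell)-c_\lambda(n)c_\lambda(n+\ell)]$. The inner sum is bounded by $\lambda^2N_{\rm free}$ uniformly in $\ell$, which is \eqref{eq:shifted-covariance-defect}. Alternatively, keep your decomposition and replace Cauchy--Schwarz by von Neumann's trace inequality. The operators $D$, $\mathsf C_\lambda$ and $\mathsf C_0$ are all diagonal in the Fourier basis with eigenvalues decreasing in $h$. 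Hence $\tr(A^*DA\mathsf C_0)+\tr(A^*\mathsf C_\lambda AD)\leq2\|A\|^2\tr(D\mathsf C_0)$. Moreover $\tr(D\mathsf C_0)\leq\sum_n\min\{\lambda/2,h(n)^{-1}\}h(n)^{-1}\lesssim\lambda^2N_{\rm free}$, and this route even covers all bounded $A$. With this in place, the rest of your plan works. Use \eqref{eq:common-mass-fluctuation-bound} directly for the second factor in both cases, as the paper does. Your worry that this does not match $\mathfrak e_\lambda$ is unfounded: $\lambda^2N_{\rm free}w(0)\leq\lambda^2(N_{\rm free}+1)w(0)\leq\mathfrak e_\lambda$.
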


\begin{proof}
We abbreviate $L^p=L^p(\widehat{\mathbb P}_\lambda)$. Using $\lambda\tr_{\gH}(|A|^2\mathsf C_\lambda) \leq\lambda^2N_{\rm free}\|A\|_\infty^2$ and \eqref{eq:covariance-multiplication-bound}, we obtain \eqref{eq:common-mass-fluctuation-bound} from Lemma~\ref{lem:abstract-single-lift-moments}. 

Minkowski's inequality therefore gives
\begin{equation}
 \Big\|\sum_k\widehat w(k)
 |\mathcal R_{\lambda,\diamond}^{e_k}|^2\Big\|_{L^p}
 \leq
 \sum_k\widehat w(k)
 \big\|\mathcal R_{\lambda,\diamond}^{e_k}\big\|_{L^{2p}}^2
 \lesssim_d
 p^2\lambda^2N_{\rm free}
 \frac1{(2\pi)^d}\sum_k\widehat w(k)
 =
 p^2\lambda^2N_{\rm free}w(0).
 \label{eq:aggregate-fluctuation-moment}
\end{equation}
For the intermediate renormalized masses, \eqref{eq:Gaussian-mass-limit-Lp} yields
\begin{align}
 \Big\|\sum_k\widehat w(k)
 |\cM_\lambda^{e_k}[u_\lambda]|^2\Big\|_{L^p}
 &\lesssim
 p^2\sum_k\widehat w(k)
 \big\|\mathsf C_\lambda^{1/2}e_k
 \mathsf C_\lambda^{1/2}\big\|_{\gS^2}^2
 =
 \frac{p^2}{(2\pi)^d}
 \sum_{k,j}\widehat w(k)c_\lambda(j)c_\lambda(j+k)\nn\\
 &\leq\frac{p^2}{(2\pi)^{d}}\sum_{k,j\in\Z^d}
 \frac{\widehat w(k)}{h(j)h(j+k)}
 =(2\pi)^dp^2 E.
 \label{eq:aggregate-intermediate-moment}
\end{align}
Similarly, $\|\cM_\lambda^\1[u_\lambda]\|_{L^p}\lesssim_dp$.

Expanding the interactions in \eqref{eq:quantum-interaction-difference} and \eqref{eq:classical-interaction-difference}, we obtain
\begin{align*}
 \Delta_{\lambda,\mathrm q}
 &=
 \operatorname{Re}\sum_k\widehat w(k)
 \overline{\cM_\lambda^{e_k}[u_\lambda]}\,
 \mathcal R_{\lambda,\mathrm q}^{e_k}
 +\frac12\sum_k\widehat w(k)
 |\mathcal R_{\lambda,\mathrm q}^{e_k}|^2
 -\alpha_\lambda\mathcal R_{\lambda,\mathrm q}^\1,\\
 \Delta_{\lambda,\mathrm c}
 &=
 \operatorname{Re}\sum_k\widehat w(k)
 \overline{\cM_\lambda^{e_k}[u_\lambda]}\,
 \mathcal R_{\lambda,\mathrm c}^{e_k}
 +\frac12\sum_k\widehat w(k)
 |\mathcal R_{\lambda,\mathrm c}^{e_k}|^2
 -\alpha_0\mathcal R_{\lambda,\mathrm c}^\1\\
 &\quad+
 (\alpha_\lambda-\alpha_0)\cM_\lambda^\1[u_\lambda]
 +\beta_\lambda-\beta_0.
\end{align*}
The sums converge in $L^p$ by \eqref{eq:aggregate-fluctuation-moment} and \eqref{eq:aggregate-intermediate-moment}. H\"older's inequality in the sum over $k$ gives
$$
 \big\|
 \operatorname{Re}\sum_k\widehat w(k)
 \overline{\cM_\lambda^{e_k}[u_\lambda]}\,
 \mathcal R_{\lambda,\diamond}^{e_k}
 \big\|_{L^p}\leq
 \big\|\sum_k\widehat w(k)|\cM_\lambda^{e_k}[u_\lambda]|^2
 \big\|_{L^p}^{1/2}
 \big\|\sum_k\widehat w(k)|\mathcal R_{\lambda,\diamond}^{e_k}|^2
 \big\|_{L^p}^{1/2}
 \lesssim_d p^2 E^{1/2}\mathfrak e_\lambda^{1/2}.
$$
Consequently,
\begin{align*}
 \|\Delta_{\lambda,\diamond}\|_{L^p}
 &\lesssim_d
 p^2E^{1/2}\mathfrak e_\lambda^{1/2}
 +p^2\lambda^2N_{\rm free}w(0)
 +p(|\alpha_\lambda|+|\alpha_0|)
 (\lambda^2N_{\rm free})^{1/2}\\
 &\quad+
 p|\alpha_\lambda-\alpha_0|+|\beta_\lambda-\beta_0|
 \lesssim_d p^2\mathfrak r_\lambda.
\end{align*}
\end{proof}

\begin{lemma}\label{lem:Lp-estimate-three-weights} Let $0<\lambda\leq1$, $q>1$, and $1/q+1/r=1$. There exists a constant $C_{d,c_0}>0$, depending only on $d,c_0$, such that
\begin{equation}\label{eq:normalized-weights}
 \max\Big\{
 \|f_\lambda\|_{L^q(\widehat{\mathbb P}_\lambda)},
 \|f_{\mathrm q}\|_{L^q(\widehat{\mathbb P}_\lambda)},
 \|f_{\mathrm c}\|_{L^q(\widehat{\mathbb P}_\lambda)}
 \Big\}
 \leq
 \exp\Big(\frac{C_{d,c_0}\mathfrak a_\lambda}{r}\Big).
\end{equation}
Here $\mathfrak a_\lambda$ and $c_0=\widehat w(0)/[2(2\pi)^d]$ are defined in\kern0pt{} \eqref{eq:comparison-a} and\kern0pt{} \eqref{eq:def-of-c0-wo}, respectively.
\end{lemma}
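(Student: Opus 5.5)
The plan is to use the elementary bound $\|f\|_{L^q}\leq(\sup f)^{1/r}$ for a probability density $f$. Since $\mathbb E f=1$,
$$
\|f\|_{L^q}^q=\mathbb E\big[f\cdot f^{q-1}\big]\leq(\operatorname{ess\,sup}f)^{q-1},
\qquad (q-1)/q=1/r .
$$
It therefore suffices to show $\log\operatorname{ess\,sup} f_\diamond\leq C_{d,c_0}\mathfrak a_\lambda$ for each of the three weights. Writing $f=z^{-1}e^{-W}$ with $W\in\{\cW_\lambda[u_\lambda],\bW_\lambda^{\ren}(\mathbf x),\cW_0[u_\lambda+v_\lambda]\}$, we have
$$
\log\sup f=\sup(-W)-\log z .
$$
Jensen's inequality gives $-\log z\leq\mathbb E[W]$. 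Hence we need an almost sure lower bound on $W$ and an upper bound on $\mathbb E W$. The constants $-\beta$ cancel between these two bounds, so $\beta_\lambda$ and $\beta_0$ never enter.

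\textbf{Lower bound.} All coefficients $\widehat w(k)$ are nonnegative, so we drop every term except $k=0$. Since $e_0=(2\pi)^{-d/2}\1$, we have $\cM^{e_0}=(2\pi)^{-d/2}\cM^{\1}$, and the $k=0$ term equals $c_0|\cM^\1|^2$ with $c_0=\widehat w(0)/[2(2\pi)^d]$. Completing the square gives
$$
W+\beta\geq c_0|\cM^\1|^2-\alpha\cM^\1\geq-\frac{\alpha^2}{4c_0}.
$$
This holds pointwise in all three cases: for the intermediate and classical weights with $\cM_\lambda$ and $\cM_0$, and for the quantum weight with $\mathbb M_\lambda$ on $\mathfrak C$. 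For $\cW_0$ the inequality holds $\mu_0$-almost surely, since the series defining $\cW_0$ converges in $L^1$ and each partial sum dominates the $k=0$ term. Consequently $\sup(-W)\leq\beta+\alpha^2/(4c_0)$, with $\alpha\in\{\alpha_\lambda,\alpha_0\}$.

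\textbf{Mean.} Every mass is centred: $\mathbb E\cM_\lambda^A[u_\lambda]=\mathbb E\cM_0^A[u_\lambda+v_\lambda]=0$, and $\mathbb E\mathbb M_\lambda^A=0$ by Lemma~\ref{lem:barycenter-identity}. Hence the $\alpha$-terms have zero mean, and
$$
\mathbb E W+\beta=\tfrac12\sum_k\widehat w(k)\,\mathbb E|\cM^{e_k}|^2 .
$$
We evaluate this sum in each case:
\begin{itemize}
\item For $\cW_\lambda$, the computation \eqref{eq:aggregate-intermediate-moment} bounds it by $\lesssim_d E$.
\item For $\cW_0$, it equals $\sum_k\widehat w(k)\|\mathsf C_0^{1/2}e_k\mathsf C_0^{1/2}\|_{\gS^2}^2=(2\pi)^dE$ by \eqref{eq:Gaussian-mass-limit-Lp}. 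In particular $e^{-\cW_0}\in L^1(\mu_0)$ and $z_0\in(0,\infty)$ follow along the way.
\item For the quantum weight, orthogonality from Lemma~\ref{lem:barycenter-identity} gives $\mathbb E|\mathbb M_\lambda^{e_k}|^2=\mathbb E|\cM_\lambda^{e_k}[u_\lambda]|^2+\mathbb E|\mathcal R^{e_k}_{\lambda,\mathrm q}|^2$. The fluctuation part sums, by \eqref{eq:aggregate-fluctuation-moment}, to $\lesssim_d\lambda^2N_{\rm free}w(0)\leq\mathfrak e_\lambda$.
\end{itemize}

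Combining the two bounds yields
$$
\log\sup f_\diamond\leq\frac{\alpha^2}{4c_0}+C_d(E+\mathfrak e_\lambda)\leq C_{d,c_0}\mathfrak a_\lambda,
$$
which is \eqref{eq:normalized-weights}. The main point requiring care is the quantum case. There $W$ is a function on the configuration space, and we need the completed-square bound to hold for $\mathbf x$ itself and not merely in expectation. This works because $\bW_\lambda^{\ren}(\mathbf x)$ is an explicit nonnegative combination of $|\mathbb M^{e_k}_\lambda(\mathbf x)|^2$ plus a linear term in $\mathbb M^\1_\lambda(\mathbf x)$. One must also justify the finiteness of $z_{\mathrm q}$ and $z_\lambda$ so that Jensen's inequality applies; this follows from the same pointwise lower bound.
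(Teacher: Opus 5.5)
Your proposal is correct and follows essentially the same route as the paper: the pointwise lower bound $W+\beta\geq-\alpha^2/(4c_0)$ from the zero Fourier mode, Jensen's inequality for $z$, the mean computation with the orthogonal splitting in the quantum case, and $\|f\|_{L^q}\leq\|f\|_{L^\infty}^{1/r}$. The only differences are cosmetic. The paper evaluates the quantum fluctuation contribution exactly as $\tfrac12\lambda^2N_{\rm free}w(0)$ and bounds $\mathbb E\cW_\lambda[u_\lambda]+\beta_\lambda\leq\mathbb E\cW_0[u_\lambda+v_\lambda]+\beta_0$ via $c_\lambda(j)\leq h(j)^{-1}$, where you instead cite the aggregate moment bounds at $p=1$.
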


\begin{proof}
The zero Fourier mode gives
\begin{align}
 \cW_\lambda[u_\lambda]+\beta_\lambda
 &\geq
 c_0\big|\cM_\lambda^\1[u_\lambda]\big|^2
 -\alpha_\lambda\cM_\lambda^\1[u_\lambda]
 \geq-\frac{\alpha_\lambda^2}{4c_0},
 \nn\\
 \bW_\lambda^\ren+\beta_\lambda
 &\geq
 c_0|\mathbb M_\lambda^\1|^2
 -\alpha_\lambda\mathbb M_\lambda^\1
 \geq-\frac{\alpha_\lambda^2}{4c_0},
 \nn\\
 \cW_0[u_\lambda+v_\lambda]+\beta_0
 &\geq
 c_0\big|\cM_0^\1[u_\lambda+v_\lambda]\big|^2
 -\alpha_0\cM_0^\1[u_\lambda+v_\lambda]
 \geq-\frac{\alpha_0^2}{4c_0}.
 \label{eq:interaction-lower-bounds}
\end{align}
By \eqref{eq:Gaussian-mass-limit-Lp} and $c_\lambda(j)\leq h(j)^{-1}$, we have
\begin{align*}
 \mathbb E\cW_\lambda[u_\lambda]+\beta_\lambda
 &=
 \frac1{2(2\pi)^d}
 \sum_{k,j}\widehat w(k)c_\lambda(j)c_\lambda(j+k)\\
 &\leq
 \frac1{2(2\pi)^d}
 \sum_{k,j}\frac{\widehat w(k)}{h(j)h(j+k)}
 =
 \mathbb E\cW_0[u_\lambda+v_\lambda]+\beta_0
 =
 \frac{(2\pi)^d}{2}E.
\end{align*}
Moreover, \eqref{eq:abstract-coherent-kernel} gives
$$
 \mathbb E\big[
 |\mathcal R_{\lambda,\mathrm q}^{e_k}|^2
 \mid u_\lambda\big]
 =
 \lambda\langle u_\lambda,|e_k|^2u_\lambda\rangle
 =
 \frac{\lambda}{(2\pi)^d}\|u_\lambda\|^2.
$$
Using \eqref{eq:q-c-fluctuation-identity} and $\tr_{\gH}\mathsf C_\lambda=\lambda N_{\rm free}$, we obtain
\begin{align*}
 \mathbb E\bW_\lambda^\ren+\beta_\lambda
 &=
 \mathbb E\cW_\lambda[u_\lambda]+\beta_\lambda
 +\frac12\sum_k\widehat w(k)
 \mathbb E|\mathcal R_{\lambda,\mathrm q}^{e_k}|^2\\
 &=
 \mathbb E\cW_\lambda[u_\lambda]+\beta_\lambda
 +\frac12\lambda^2N_{\rm free}w(0)
 \leq
 \frac{(2\pi)^d}{2}E
 +\frac12\lambda^2N_{\rm free}w(0).
\end{align*}

Jensen's inequality and \eqref{eq:interaction-lower-bounds} give
\begin{align*}
 0<e^{-\mathbb E\cW_\lambda[u_\lambda]}
 &\leq z_\lambda
 \leq e^{\beta_\lambda+\alpha_\lambda^2/(4c_0)}<\infty,\\
 0<e^{-\mathbb E\bW_\lambda^\ren}
 &\leq z_{\mathrm q}
 \leq e^{\beta_\lambda+\alpha_\lambda^2/(4c_0)}<\infty,\\
 0<e^{-\mathbb E\cW_0[u_\lambda+v_\lambda]}
 &\leq z_{\mathrm c}
 \leq e^{\beta_0+\alpha_0^2/(4c_0)}<\infty.
\end{align*}
Since $\mathbb Ef_\lambda=\mathbb Ef_{\mathrm q} =\mathbb Ef_{\mathrm c}=1$ and $(q-1)/q=1/r$, we obtain
\begin{align*}
 &\max\big\{
 \|f_\lambda\|_{L^q},\|f_{\mathrm q}\|_{L^q},
 \|f_{\mathrm c}\|_{L^q}\big\}
 \leq
 \max\big\{
 \|f_\lambda\|_{L^\infty},\|f_{\mathrm q}\|_{L^\infty},
 \|f_{\mathrm c}\|_{L^\infty}\big\}^{1/r}\\
 &\quad\leq
 \max\left\{
 \frac{e^{\beta_\lambda+\alpha_\lambda^2/(4c_0)}}{z_\lambda},
 \frac{e^{\beta_\lambda+\alpha_\lambda^2/(4c_0)}}{z_{\mathrm q}},
 \frac{e^{\beta_0+\alpha_0^2/(4c_0)}}{z_{\mathrm c}}
 \right\}^{1/r}\\
 &\quad\leq
 \exp\left[
 \frac1r\left(
 \frac{(2\pi)^d}{2}E+\frac12\lambda^2N_{\rm free}w(0)
 +\frac{\max\{\alpha_\lambda^2,\alpha_0^2\}}{4c_0}
 \right)\right]
 \leq
 \exp\Big(\frac{C_{d,c_0}\mathfrak a_\lambda}{r}\Big),
\end{align*}
where the last inequality uses $\lambda^2N_{\rm free}w(0)\leq\mathfrak e_\lambda$ and \eqref{eq:comparison-a}.
\end{proof}

We now combine these estimates to prove Proposition~\ref{prop:estimate-E-lambda-diamond}.

\begin{proof}[Proof of Proposition~\ref{prop:estimate-E-lambda-diamond}] Choose $r=2+\mathfrak a_\lambda$ and $q=r/(r-1)$. By H\"older's inequality\kern0pt{},
$$
 \mathrm{LHS\ of\ \eqref{eq:free-energy-rate}}
 \lesssim
 \sum_{\diamond\in\{\mathrm q,\mathrm c\}}\big(\|f_\lambda\|_{L^q}+\|f_\diamond\|_{L^q}\big)
 \|\Delta_{\lambda,\diamond}\|_{L^r}
 \lesssim_{d,c_0}
 r^2\mathfrak r_\lambda
 \lesssim_{d,c_0}
 \mathfrak a_\lambda^2\mathfrak r_\lambda,
$$
where the norms are bounded using Lemmas~\ref{lem:Lp-estimate-Delta} and \ref{lem:Lp-estimate-three-weights}. Equation~\eqref{eq:entropy-rate} follows from \eqref{eq:symmetric-entropy-error}, \eqref{eq:partition-from-entropy}, and \eqref{eq:free-energy-from-partitions}.
\end{proof}

\bigskip

\section{Weighted correlation operators and a priori bounds}
\label{sec:correlation-preliminaries}

Recall the three Gibbs weights introduced in Section~\ref{sec:proof-strategy}:
$$
 \begin{gathered}
 f_\lambda=z_\lambda^{-1}e^{-\cW_\lambda[u_\lambda]},\qquad
 f_{\mathrm q}=z_{\mathrm q}^{-1}e^{-\bW_\lambda^{\ren}(\mathbf x)},
 \qquad
 f_{\mathrm c}=z_{\mathrm c}^{-1}e^{-\cW_0[u_\lambda+v_\lambda]}.
 \end{gathered}
$$
In this section, we prove the following a priori bounds.

\begin{proposition}[A priori bounds]\label{prop:moment-propagation} Assume Assumption~\ref{ass:interaction}. For every fixed $k,L\geq1$,
\begin{equation}
 \big\|\gamma_{f_\lambda}^{(k)}[u_\lambda+v_\lambda]\big\|_{\gS^2}
 +\big\|\gamma_{f_\lambda}^{(k)}[u_\lambda]\big\|_{\gS^2}
 +\big\|\gamma_{f_{\mathrm q}}^{(k)}[u_\lambda]\big\|_{\gS^2}
 \lesssim_{d,c_0,k,L}
 \Big[
 \big\|\gamma_{\nu_0}^{(k)}\big\|_{\gS^2}
 +\mathfrak a_\lambda^k
 (\mathfrak a_\lambda^2\mathfrak r_\lambda)^L
 \Big].
 \label{eq:propagated-moments}
\end{equation}
\end{proposition}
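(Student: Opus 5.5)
Since $\gamma_{\nu_0}^{(k)}=\gamma_{f_{\mathrm c}}^{(k)}[u_\lambda+v_\lambda]$, the plan is to transfer bounds from the target weight $f_{\mathrm c}$ to $f_\lambda$ and $f_{\mathrm q}$ using the entropy estimates of Proposition~\ref{prop:estimate-E-lambda-diamond}, and then to change the field with the Gaussian increment. The basic tool is a Hilbert--Schmidt change-of-weight inequality. For two probability densities $f,f'$ and a field $u$, write $\gamma_f^{(k)}[u]-\gamma_{f'}^{(k)}[u]=\mathbb E[(f-f')|u^{\otimes k}\rangle\langle u^{\otimes k}|]$. The square of its Hilbert--Schmidt norm equals
\[
 \mathbb E\otimes\mathbb E\big[(f-f')(\omega)(f-f')(\omega')|\langle u(\omega),u(\omega')\rangle|^{2k}\big].
\]
I would split $f-f'=(\sqrt f-\sqrt{f'})(\sqrt f+\sqrt{f'})$ and apply Cauchy--Schwarz on the product space. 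The Hellinger factor is bounded by the relative entropy, and the remaining factor is controlled by $\|\gamma^{(2k)}_{f}[u]+\gamma^{(2k)}_{f'}[u]\|_{\gS^2}$. This is the content of Lemma~\ref{lem:classical-entropy-HS}. It has a circularity: it bounds order $k$ by order $2k$ for the \emph{unknown} weight.

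To break the circularity I will not use entropy alone. Instead I use Hölder together with the $L^q$ bounds of Lemma~\ref{lem:Lp-estimate-three-weights}. First I take a crude bound: for any weight $f\in\{f_\lambda,f_{\mathrm q},f_{\mathrm c}\}$, $\|\gamma_f^{(k)}[u]\|_{\gS^2}\le \mathbb E\otimes\mathbb E[f f' |\langle u,u'\rangle|^{2k}]^{1/2}$. Hölder with $\|f\|_{L^q}\le e^{C\mathfrak a_\lambda/r}$ and Gaussian hypercontractivity for $\mathbb E|\langle u,u'\rangle|^{2kr}\lesssim (r^{k}\|\mathsf C\|_{\gS^2}^{k})^{r}$ then give a polynomial a priori bound, roughly $\lesssim_k \mathfrak a_\lambda^{k}$ after choosing $r\sim\mathfrak a_\lambda$. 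Here $\mathsf C\in\{\mathsf C_\lambda,\mathsf C_0\}$, and for $f_{\mathrm q}$ only $u_\lambda$ enters, so the configuration variable plays no role in the field. This crude bound explains the factor $\mathfrak a_\lambda^k$ in \eqref{eq:propagated-moments}.

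I then bootstrap. I apply the change-of-weight inequality between $f_{\mathrm c}$ and $f_\lambda$ for the field $u_\lambda+v_\lambda$, with the Hellinger distance bounded by $\mathcal E_{\lambda,\mathrm c}^{1/2}\lesssim(\mathfrak a_\lambda^2\mathfrak r_\lambda)^{1/2}$. A cleaner route is to replace the Cauchy--Schwarz split by a Hölder split against $|f-f'|$, using $\|f-f'\|_{L^1}\lesssim\mathcal E^{1/2}$ by Pinsker. Interpolating that with the $L^q$ bounds, the gain per step is a positive power of $\mathfrak a_\lambda^2\mathfrak r_\lambda$ times a polynomial factor in $\mathfrak a_\lambda$. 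Iterating in $L$ steps, each time feeding the previous bound on the higher-order operator, produces the remainder $\mathfrak a_\lambda^k(\mathfrak a_\lambda^2\mathfrak r_\lambda)^L$. The constants depend on $k,L$, and the order reached grows like $2^L k$, which is harmless because the crude bound holds at every order. This gives the estimate for $\gamma_{f_\lambda}^{(k)}[u_\lambda+v_\lambda]$.

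Next, for $\gamma_{f_\lambda}^{(k)}[u_\lambda]$, I note that $f_\lambda$ depends only on $u_\lambda$. Conditioning on $u_\lambda$ and integrating out $v_\lambda$ gives $\gamma_{f_\lambda}^{(k)}[u_\lambda+v_\lambda]=\sum_j\binom{k}{j}^{2}\,\mathrm{Sym}(\gamma_{f_\lambda}^{(k-j)}[u_\lambda]\otimes j!(\mathsf C_0-\mathsf C_\lambda)^{\otimes j})$, where every term is a positive operator. Positivity then gives $\|\gamma_{f_\lambda}^{(k)}[u_\lambda]\|_{\gS^2}\lesssim\|\gamma_{f_\lambda}^{(k)}[u_\lambda+v_\lambda]\|_{\gS^2}$, and checking this is routine. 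Finally, for $f_{\mathrm q}$ I compare with $f_\lambda$ on the same field $u_\lambda$ using $\mathcal E_{\lambda,\mathrm q}$ and the same bootstrap.

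The main obstacle is the bootstrap step. The loss from Hölder must stay polynomial in $\mathfrak a_\lambda$, and I must verify that the exponent bookkeeping really closes to the stated form $\mathfrak a_\lambda^k(\mathfrak a_\lambda^2\mathfrak r_\lambda)^L$ rather than $\mathfrak a_\lambda^{Ck}$. If the split produces extra powers of $\mathfrak a_\lambda$, I would absorb them by increasing $L$.
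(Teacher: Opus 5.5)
Your crude bound (H\"older with Lemma~\ref{lem:Lp-estimate-three-weights} and Gaussian moments, with $r\sim\mathfrak a_\lambda$) is correct. So is your positivity argument giving $\gamma_{f_\lambda}^{(k)}[u_\lambda]\leq\gamma_{f_\lambda}^{(k)}[u_\lambda+v_\lambda]$, which is exactly the paper's step. The gap is the bootstrap: no iteration of entropy-based difference bounds produces the remainder $\mathfrak a_\lambda^k(\mathfrak a_\lambda^2\mathfrak r_\lambda)^L$ for $L\geq1$.

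Write $\epsilon_\lambda=\mathfrak a_\lambda^2\mathfrak r_\lambda$, $A_k=\|\gamma_{f_\lambda}^{(k)}[u_\lambda+v_\lambda]\|_{\gS^2}$ and $B_k=\|\gamma_{\nu_0}^{(k)}\|_{\gS^2}$. Lemma~\ref{lem:classical-entropy-HS} gives
\[
 A_k\leq B_k+(2\mathcal E_{\lambda,\mathrm c})^{1/2}(A_{2k}+B_{2k})^{1/2}.
\]
Even with the perfect input $A_{2k}\leq B_{2k}$, the error contains $\epsilon_\lambda^{1/2}B_{2k}^{1/2}$. Moreover, $B_{2k}$ is not controlled by $B_k$; in terms of the quantities in the statement it is only bounded crudely by $\mathfrak a_\lambda^{2k}$. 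So you get at best $A_k\lesssim B_k+\epsilon_\lambda^{1/2}\mathfrak a_\lambda^k$, however many steps you take. The pure-remainder part does not improve either: the square roots compound, and starting from crude bounds at order $2^Lk$ the power of $\epsilon_\lambda$ is $1/2+1/4+\cdots+2^{-L}<1$.

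Your Pinsker--H\"older variant has no recursive structure at all, and its gain is capped by $\|f_\lambda-f_{\mathrm c}\|_{L^1}\lesssim\epsilon_\lambda^{1/2}$. More fundamentally, any bound routed through $\|\gamma_{f_\lambda}^{(k)}-\gamma_{\nu_0}^{(k)}\|_{\gS^2}$ is limited by the actual size of that difference. That size is typically set by the fluctuations of $\Delta_{\lambda,\mathrm c}$, not by $\epsilon_\lambda^L$. Raising $L$ to absorb powers of $\mathfrak a_\lambda$ is therefore unavailable, since the power of $\epsilon_\lambda$ never grows. This is not cosmetic: the proof of Theorem~\ref{thm:Phi43} needs $L$ with $L(1/4-13\eta/2)>4k\eta$. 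With only $\epsilon_\lambda^{1/2}$, the admissible range of $\eta$ would shrink with $k$.

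What is missing is a multiplicative, operator-order comparison in place of an additive one. The paper uses $e^x\leq\mathrm e+e^x|x|^L$ to obtain, pointwise,
\[
 f_\lambda=\frac{z_{\mathrm c}}{z_\lambda}e^{\Delta_{\lambda,\mathrm c}}f_{\mathrm c}
 \leq\mathrm e\,\frac{z_{\mathrm c}}{z_\lambda}f_{\mathrm c}
 +f_\lambda|\Delta_{\lambda,\mathrm c}|^L .
\]
Positivity of $g\mapsto\gamma_g^{(k)}$ for $g\geq0$ then gives
\[
 0\leq\gamma_{f_\lambda}^{(k)}[u_\lambda+v_\lambda]\leq\mathrm e\,\frac{z_{\mathrm c}}{z_\lambda}\gamma_{\nu_0}^{(k)}+\gamma^{(k)}_{f_\lambda|\Delta_{\lambda,\mathrm c}|^L}[u_\lambda+v_\lambda],
\]
and the Hilbert--Schmidt norm is monotone on positive operators.

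The power $L$ comes from the moments of $\Delta_{\lambda,\mathrm c}$ in Lemma~\ref{lem:Lp-estimate-Delta}, not from the entropy. With $r=2+\mathfrak a_\lambda$, H\"older gives
\[
 \big\|f_\lambda|\Delta_{\lambda,\mathrm c}|^L\big\|_{L^{2r/(2r-1)}}\leq\|f_\lambda\|_{L^q}\|\Delta_{\lambda,\mathrm c}\|_{L^{2Lr}}^L\lesssim_{d,c_0,L}\epsilon_\lambda^L .
\]
Your crude Gaussian bound (Lemma~\ref{lem:Gaussian-weighted-correlations} with conjugate exponent $2r$), applied to this small weight, supplies the factor $(4kr)^k\lesssim\mathfrak a_\lambda^k$. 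Entropy enters only through $z_{\mathrm c}/z_\lambda\leq e^{C\epsilon_\lambda}\lesssim1$, which uses the smallness assumption $\epsilon_\lambda\ll1$. The same domination with $e^{-\Delta_{\lambda,\mathrm q}}$ compares $f_{\mathrm q}$ with $f_\lambda$ on the field $u_\lambda$.
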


Before proving Proposition~\ref{prop:moment-propagation}, we give a precise definition of the weighted correlations introduced in \eqref{eq:weighted-correlation-common-space} and establish their operator realizations and approximation properties.

In the general statements of this section, we fix a probability space $(\gX,\mathcal A,\mathbb P)$ and write $\mathbb E=\int_{\gX}\dd\mathbb P$. All fields and weights in the same expression are defined on this space. We return to $\widehat{\mathbb P}_\lambda$ when applying the results to the Gibbs weights.

Fix $s>0$. We use the spaces $\gH^s,\gH^{-s}$ from Subsection~\ref{subsec:classical-field-theory}, with the duality pairing
$$
 \langle\phi,u\rangle
 =
 \langle h^{s/2}\phi,h^{-s/2}u\rangle_{\gH},
 \qquad \phi\in\gH^s,\quad u\in\gH^{-s}.
$$
We write $\langle u,\phi\rangle=\overline{\langle\phi,u\rangle}$. The physical $k$-particle space is $\bigotimes_{\mathrm{sym}}^k\gH$. We denote by $\Pi_k$ the orthogonal projection onto this space:
$$
 \Pi_k(\phi_1\otimes\cdots\otimes\phi_k)
 =
 \frac1{k!}\sum_{\sigma\in S_k}
 \phi_{\sigma(1)}\otimes\cdots\otimes\phi_{\sigma(k)}.
$$
We test against finite linear combinations of these symmetric tensors with $\phi_j\in\gH^s$. For such tensors,
$$
 \big\langle \Pi_k(\phi_1\otimes\cdots\otimes\phi_k),
 u^{\otimes k}\big\rangle
 =
 \prod_{j=1}^k\langle\phi_j,u\rangle.
$$
All tensor norms with an unspecified underlying space refer to the physical particle space. When forming tensor products of operators on symmetric particle spaces, we extend them by zero on the orthogonal complements.

\begin{definition}[Weighted correlations] \label{def:weighted-correlations} Let $u:\gX\to\gH^{-s}$ be measurable and let $f\in L^1(\mathbb P;\CC)$. For $k\geq1$, assume that
\begin{equation}\label{eq:correlation-integrability}
 \mathbb E\big[|f|\,|\langle\phi,u\rangle|^{2k}\big]<\infty
 \qquad\text{for every }\phi\in\gH^s.
\end{equation}
We define the correlation form by
\begin{equation}\label{eq:intrinsic-correlation-form}
 \langle\Phi,\gamma_f^{(k)}[u]\Psi\rangle
 =
 \mathbb E\big[
 f\,\langle\Phi,u^{\otimes k}\rangle
 \langle u^{\otimes k},\Psi\rangle
 \big]
\end{equation}
on the test tensors specified above. If this form satisfies
\begin{equation}\label{eq:correlation-bounded-form}
 \big|\langle\Phi,\gamma_f^{(k)}[u]\Psi\rangle\big|
 \leq C\|\Phi\|\|\Psi\|
\end{equation}
with a constant independent of $\Phi,\Psi$, we use the same notation for its unique bounded operator realization on $\bigotimes_{\mathrm{sym}}^k\gH$. At order zero, we use $\gamma_f^{(0)}[u]=\mathbb Ef$.
\end{definition}

Condition~\eqref{eq:correlation-integrability} ensures absolute integrability of all the matrix elements. Indeed, H\"older's inequality gives
$$
 \mathbb E\Big[|f|
 \prod_{j=1}^k|\langle\phi_j,u\rangle|^2\Big]
 \leq
 \prod_{j=1}^k
 \Big(\mathbb E[|f|\,|\langle\phi_j,u\rangle|^{2k}]\Big)^{1/k},
$$
and the Cauchy--Schwarz inequality controls the mixed matrix elements. The test tensors are dense in $\bigotimes_{\mathrm{sym}}^k\gH$; hence \eqref{eq:correlation-bounded-form} gives a continuous extension of the form, and the Riesz representation theorem yields the stated operator. The correlation form is complex linear in $f$ and depends only on the joint law of $f,u$. Whenever the bounded realizations exist,
$$
 \gamma_{\overline f}^{(k)}[u]
 =
 \big(\gamma_f^{(k)}[u]\big)^*,
 \qquad
 \gamma_f^{(k)}[u]\geq0\quad\text{if }f\geq0.
$$
For $f\geq0$ and $\mathbb Ef=1$, the weighted field law $\nu=u_\#(f\dd\mathbb P)$ satisfies
$$
 \langle\Phi,\gamma_f^{(k)}[u]\Psi\rangle
 =
 \int_{\gH^{-s}}
 \langle\Phi,u^{\otimes k}\rangle
 \langle u^{\otimes k},\Psi\rangle\dd\nu(u).
$$
We denote this form and its bounded realization, when it exists, by $\gamma_\nu^{(k)}$, with $\gamma_\nu^{(0)}=1$. Thus the notation $\int|u^{\otimes k}\rangle\langle u^{\otimes k}|\dd\nu(u)$ is always interpreted through these matrix elements. For an $\gH$-valued field satisfying $\mathbb E[|f|\|u\|^{2k}]<\infty$, it also agrees with the trace-class integral, since
$$
 \mathbb E\big\|
 f\,|u^{\otimes k}\rangle\langle u^{\otimes k}|
 \big\|_{\gS^1}
 =
 \mathbb E[|f|\|u\|^{2k}]<\infty.
$$

We use the Fourier projections $P_K=\1_{\{h\leq K\}}$, $K\geq1$, in the following approximation result.

\begin{lemma}[Realization and approximation] \label{lem:correlation-realization-approximation} Suppose that $u,f$ satisfy \eqref{eq:correlation-integrability}. For all test tensors $\Phi,\Psi$,
\begin{equation}\label{eq:correlation-form-approximation}
 \mathbb E\big[
 f\,\langle\Phi,(P_Ku)^{\otimes k}\rangle
 \langle(P_Ku)^{\otimes k},\Psi\rangle
 \big]
 \longrightarrow
 \mathbb E\big[
 f\,\langle\Phi,u^{\otimes k}\rangle
 \langle u^{\otimes k},\Psi\rangle
 \big].
\end{equation}
If $\gamma_f^{(k)}[u]$ has a bounded realization, then
\begin{equation}\label{eq:correlation-compression}
 \gamma_f^{(k)}[P_Ku]
 =
 P_K^{\otimes k}\gamma_f^{(k)}[u]P_K^{\otimes k}.
\end{equation}
If this realization belongs to $\gS^p$, $1\leq p<\infty$, then
\begin{equation}\label{eq:correlation-Schatten-approximation}
 \big\|\gamma_f^{(k)}[P_Ku]-\gamma_f^{(k)}[u]\big\|_{\gS^p}
 \longrightarrow0.
\end{equation}

Conversely, if $\sup_K\|\gamma_f^{(k)}[P_Ku]\|_{\gS^2}<\infty$, the form \eqref{eq:intrinsic-correlation-form} has a unique Hilbert--Schmidt realization, and
$$
 \big\|\gamma_f^{(k)}[u]\big\|_{\gS^2}^2
 =
 \sup_K\big\|\gamma_f^{(k)}[P_Ku]\big\|_{\gS^2}^2.
$$
For $L\geq K$, the finite-dimensional operators also satisfy
\begin{equation}\label{eq:HS-compatible-compressions}
 \big\|\gamma_f^{(k)}[P_Lu]
 -\gamma_f^{(k)}[P_Ku]\big\|_{\gS^2}^2
 =
 \big\|\gamma_f^{(k)}[P_Lu]\big\|_{\gS^2}^2
 -
 \big\|\gamma_f^{(k)}[P_Ku]\big\|_{\gS^2}^2.
\end{equation}
\end{lemma}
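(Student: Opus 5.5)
The plan is to work first with finite-dimensional truncations and then pass to limits using dominated convergence and Hilbert space arguments.

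\emph{Step 1: the form approximation \eqref{eq:correlation-form-approximation}.} For a test tensor $\Phi=\Pi_k(\phi_1\otimes\cdots\otimes\phi_k)$ with $\phi_j\in\gH^s$, we have $\langle\Phi,(P_Ku)^{\otimes k}\rangle=\prod_j\langle P_K\phi_j,u\rangle$, since $P_K$ is a self-adjoint Fourier multiplier that commutes with $h^{s/2}$. As $K\to\infty$, $P_K\phi_j\to\phi_j$ in $\gH^s$, so pointwise $\langle P_K\phi_j,u\rangle\to\langle\phi_j,u\rangle$ for every $u\in\gH^{-s}$. To dominate, I will use $|\langle P_K\phi,u\rangle|^2\leq\sum_n(1+|n|^2)^s|\phi_n|^2(1+|n|^2)^{-s}|u_n|^2$-type bounds; a cleaner route is convergence in $L^{2k}(|f|\dd\mathbb P)$. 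Condition \eqref{eq:correlation-integrability} holds for each $\phi$, and the map $\phi\mapsto\langle\phi,u\rangle$ is linear. Hence $\mathbb E[|f||\langle\phi-P_K\phi,u\rangle|^{2k}]$ needs to go to zero. I would obtain this by noting that $\phi-P_K\phi$ has Fourier support in $\{h>K\}$ and applying dominated convergence to the fixed function $|\langle\phi,u\rangle|$ restricted to the tail. If that is awkward, I will instead use the closed graph theorem: the linear map $\gH^s\to L^{2k}(|f|\dd\mathbb P)$ is closed (a.e. limits along subsequences), hence bounded, which gives continuity. This continuity is the main technical point, and the closed-graph argument is what I would try first. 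Multilinear Hölder, as displayed after Definition~\ref{def:weighted-correlations}, then gives convergence of the matrix elements, and linearity extends it to finite combinations.

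\emph{Step 2: compression and Schatten convergence.} Equation \eqref{eq:correlation-compression} follows directly on test tensors, since $\langle\Phi,\gamma_f^{(k)}[P_Ku]\Psi\rangle=\langle P_K^{\otimes k}\Phi,\gamma_f^{(k)}[u]P_K^{\otimes k}\Psi\rangle$, and $P_K^{\otimes k}\Phi$ is again a test tensor. Both sides are bounded, so density of test tensors identifies them. For \eqref{eq:correlation-Schatten-approximation}, $P_K^{\otimes k}\to\1$ strongly, and for $T\in\gS^p$ with $p<\infty$, the standard fact that strongly convergent uniformly bounded sequences $Q_K$ satisfy $\|Q_KTQ_K-T\|_{\gS^p}\to0$ (approximate $T$ by finite rank) finishes the argument.

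\emph{Step 3: the converse.} The operators $T_K:=\gamma_f^{(k)}[P_Ku]$ act on the finite-dimensional range of $P_K^{\otimes k}$. For $L\geq K$, applying the Step 2 identity at the finite level (here $P_Lu$ is an honest finite-dimensional field, so boundedness is automatic) gives $T_K=P_K^{\otimes k}T_LP_K^{\otimes k}$. Hence $T_K$ is the corner of $T_L$, and $\|T_L\|_{\gS^2}^2=\|T_K\|_{\gS^2}^2+\|T_L-T_K\|_{\gS^2}^2$, because the off-corner blocks are Hilbert--Schmidt orthogonal to the corner. This proves \eqref{eq:HS-compatible-compressions}. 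The norms are nondecreasing and bounded by assumption, so $(T_K)$ is Cauchy in $\gS^2$ with a limit $T$ satisfying $\|T\|_{\gS^2}^2=\sup_K\|T_K\|_{\gS^2}^2$. By Step 1, $\langle\Phi,T\Psi\rangle=\lim_K\langle\Phi,T_K\Psi\rangle$ equals the form \eqref{eq:intrinsic-correlation-form}, so the form is bounded and $T$ is its realization. Uniqueness follows from density of test tensors.
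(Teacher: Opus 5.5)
Your proposal is correct. Steps 1 and 2 match the paper's argument. The paper also obtains $\|\langle u,\phi\rangle\|_{L^{2k}(|f|\dd\mathbb P)}\leq C\|\phi\|_{\gH^s}$ from the closed graph theorem, then telescopes the product with H\"older, checks the compression identity on test tensors, and gets $\gS^p$ convergence by finite-rank approximation. Your dominated-convergence fallback has no evident integrable dominating function for the tail pairings, so the closed-graph route is the one to use.

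The converse is organized differently. The paper first uses the form approximation to show $|\langle\Phi,\gamma_f^{(k)}[u]\Psi\rangle|\leq\sup_K\|\gamma_f^{(k)}[P_Ku]\|_{\gS^2}\|\Phi\|\|\Psi\|$ and obtains a bounded realization by Riesz. It then computes the Hilbert--Schmidt norm of that operator as the limit of Fourier-block sums, using the compression identity. The orthogonality relation \eqref{eq:HS-compatible-compressions} is proved only at the end, as a separate statement.

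You instead prove \eqref{eq:HS-compatible-compressions} first and make it the engine. Bounded monotone norms make $(\gamma_f^{(k)}[P_Ku])_K$ Cauchy in $\gS^2$, and Step 1 identifies the limit with the form. Your route produces the Hilbert--Schmidt operator directly and makes explicit the monotonicity behind $\lim_K=\sup_K$. The paper's route needs only the compression identity already established, and gives an operator-norm bound on the form before any Hilbert--Schmidt information is used.
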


\begin{proof}
We first establish \eqref{eq:correlation-form-approximation}. By \eqref{eq:correlation-integrability}, the linear map $\phi\mapsto\langle u,\phi\rangle$ is defined on all of $\gH^s$ with values in $L^{2k}(|f|\dd\mathbb P)$. It has a closed graph: convergence of $\phi_j$ to $\phi$ in $\gH^s$ gives pointwise convergence of the pairings, while convergence in $L^{2k}(|f|\dd\mathbb P)$ gives the same limit almost everywhere along a subsequence. The closed graph theorem therefore yields $ \|\langle u,\phi\rangle\|_{L^{2k}(|f|\dd\mathbb P)} \leq C\|\phi\|_{\gH^s}. $ In particular, $\langle u,(P_K-1)\phi\rangle\to0$ in this space. For $\Phi=\Pi_k(\phi_1\otimes\cdots\otimes\phi_k)$, a telescoping expansion of the product and H\"older's inequality give
\begin{align*}
 &\big\|\langle\Phi,(P_Ku)^{\otimes k}\rangle
 -\langle\Phi,u^{\otimes k}\rangle
 \big\|_{L^2(|f|\dd\mathbb P)}\\
 &\leq
 \sum_{j=1}^k
 \|\langle(P_K-1)\phi_j,u\rangle\|_{L^{2k}(|f|\dd\mathbb P)}
 \prod_{i<j}
 \|\langle P_K\phi_i,u\rangle\|_{L^{2k}(|f|\dd\mathbb P)}
 \prod_{i>j}
 \|\langle\phi_i,u\rangle\|_{L^{2k}(|f|\dd\mathbb P)}
 \longrightarrow0.
\end{align*}
Linearity gives the same convergence for every test tensor. The Cauchy--Schwarz inequality with the weight $|f|$ now proves \eqref{eq:correlation-form-approximation}.

For fixed $K$,
$$
 \mathbb E[|f|\|P_Ku\|^{2k}]
 \leq
 \big(\dim P_K\gH\big)^{k-1}\sum_{h(n)\leq K}
 \mathbb E[|f|\,|\langle e_n,u\rangle|^{2k}]
 <\infty.
$$
Hence $\gamma_f^{(k)}[P_Ku]$ is a finite-rank operator defined by its integrable matrix elements. If $\gamma_f^{(k)}[u]$ is bounded, then
$$
 \langle\Phi,\gamma_f^{(k)}[P_Ku]\Psi\rangle
 =
 \mathbb E\big[
 f\,\langle P_K^{\otimes k}\Phi,u^{\otimes k}\rangle
 \langle u^{\otimes k},P_K^{\otimes k}\Psi\rangle
 \big]
 =
 \langle\Phi,
 P_K^{\otimes k}\gamma_f^{(k)}[u]P_K^{\otimes k}\Psi\rangle.
$$
Equality on the dense set of test tensors proves \eqref{eq:correlation-compression}.

Suppose that $\gamma_f^{(k)}[u]\in\gS^p$. For a finite-rank operator $A$ on $\bigotimes_{\mathrm{sym}}^k\gH$,
$$
 \big\|P_K^{\otimes k}\gamma_f^{(k)}[u]P_K^{\otimes k}
 -\gamma_f^{(k)}[u]\big\|_{\gS^p}
 \leq
 2\big\|\gamma_f^{(k)}[u]-A\big\|_{\gS^p}
 +
 \big\|P_K^{\otimes k}AP_K^{\otimes k}-A\big\|_{\gS^p}.
$$
Since $P_K^{\otimes k}\to1$ strongly, each rank-one summand satisfies
$$
 \big\|P_K^{\otimes k}|\Phi\rangle\langle\Psi|P_K^{\otimes k}
 -|\Phi\rangle\langle\Psi|\big\|_{\gS^p}
 \leq
 \|(P_K^{\otimes k}-1)\Phi\|\,\|\Psi\|
 +
 \|\Phi\|\,\|(P_K^{\otimes k}-1)\Psi\|
 \longrightarrow0.
$$
Finite-rank operators are dense in $\gS^p$, which proves \eqref{eq:correlation-Schatten-approximation}. The same argument applies to any finite-rank orthogonal projections with ranges in $\gH^s$ that converge strongly to the identity on $\gH$, and yields the operator associated with the same form \eqref{eq:intrinsic-correlation-form}.

Conversely, if $\sup_K\|\gamma_f^{(k)}[P_Ku]\|_{\gS^2}<\infty$, \eqref{eq:correlation-form-approximation} gives, for all test tensors $\Phi,\Psi$,
$$
 \big|\langle\Phi,\gamma_f^{(k)}[u]\Psi\rangle\big|
 =
 \lim_{K\to\infty}
 \big|\langle\Phi,\gamma_f^{(k)}[P_Ku]\Psi\rangle\big|
 \leq
 \sup_K\|\gamma_f^{(k)}[P_Ku]\|_{\gS^2}\,\|\Phi\|\|\Psi\|.
$$
Thus \eqref{eq:correlation-bounded-form} holds, and the form has a bounded operator realization. Let $(\Phi_j)_j$ be the orthonormal symmetric Fourier basis. By \eqref{eq:correlation-compression},
$$
 \big\|\gamma_f^{(k)}[u]\big\|_{\gS^2}^2
 =
 \sum_{i,j}
 \big|\langle\Phi_i,\gamma_f^{(k)}[u]\Phi_j\rangle\big|^2
 =
 \lim_{K\rightarrow\infty}
 \big\|P_K^{\otimes k}\gamma_f^{(k)}[u]P_K^{\otimes k}\big\|_{\gS^2}^2
 =
 \sup_K\|\gamma_f^{(k)}[P_Ku]\|_{\gS^2}^2<\infty.
$$
Thus the operator already defined is Hilbert--Schmidt, and \eqref{eq:correlation-Schatten-approximation} applies.

Finally, $P_K^{\otimes k}\gamma_f^{(k)}[P_Lu]P_K^{\otimes k} =\gamma_f^{(k)}[P_Ku]$ for $L\geq K$. Compression is an orthogonal projection in $\gS^2$, so
$$
 \tr\Big(
 \gamma_f^{(k)}[P_Lu]^*
 \gamma_f^{(k)}[P_Ku]\Big)
 =
 \tr\Big(
 [P_K^{\otimes k}\gamma_f^{(k)}[P_Lu]P_K^{\otimes k}]^*
 \gamma_f^{(k)}[P_Ku]\Big)
 =
 \|\gamma_f^{(k)}[P_Ku]\|_{\gS^2}^2.
$$
Expanding the square gives \eqref{eq:HS-compatible-compressions}.
\end{proof}

\begin{lemma}[Classical relative entropy and correlations] \label{lem:classical-entropy-HS} Let $f,f'$ be probability densities on $(\gX,\mathcal A,\mathbb P)$, and let $u:\gX\to\gH^{-s}$ be measurable. Suppose that $\gamma_f^{(2k)}[u]$ and $\gamma_{f'}^{(2k)}[u]$ have Hilbert--Schmidt realizations. Then
\begin{equation}
 \|\gamma_f^{(k)}[u]-\gamma_{f'}^{(k)}[u]\|_{\gS^2}^2
 \leq
 2\cH_{\cl}(f\dd\mathbb P,f'\dd\mathbb P)
 \|\gamma_f^{(2k)}[u]+\gamma_{f'}^{(2k)}[u]\|_{\gS^2}.
 \label{eq:classical-entropy-HS}
\end{equation}
\end{lemma}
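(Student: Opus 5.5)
The plan is to follow the classical analogue of the quantum argument in \cite[Lemma~11.4]{LNR21}. We work first with the finite-dimensional compressions $\gamma^{(k)}_\cdot[P_Ku]$ and pass to the limit $K\to\infty$ at the end. On the range of $P_K^{\otimes k}$, write
$$
 \gamma_f^{(k)}[P_Ku]-\gamma_{f'}^{(k)}[P_Ku]=\mathbb E\big[(f-f')\,|(P_Ku)^{\otimes k}\rangle\langle(P_Ku)^{\otimes k}|\big].
$$
Its Hilbert--Schmidt norm squared is the double integral
$$
 \mathbb E\otimes\mathbb E\big[(f-f')(\omega)(f-f')(\omega')\,|\langle P_Ku(\omega),P_Ku(\omega')\rangle|^{2k}\big].
$$
Let $\Omega=\langle P_Ku(\omega),P_Ku(\omega')\rangle^k$, so that the integrand is $(f-f')\otimes(f-f')\,|\Omega|^2$. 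The key observation is that $|\Omega|^2=|\langle (P_Ku)^{\otimes 2k}(\omega),(P_Ku)^{\otimes 2k}(\omega')\rangle|$ in the relevant sense. More precisely, one can write
$$
 \|\gamma_f^{(k)}-\gamma_{f'}^{(k)}\|_{\gS^2}^2=\mathbb E\big[(f-f')\,G\big],\qquad G(\omega):=\mathbb E'\big[(f-f')(\omega')\,|\langle P_Ku(\omega),P_Ku(\omega')\rangle|^{2k}\big].
$$

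For the entropy step, I would not use Pinsker in total-variation form, since $G$ is unbounded. Instead I would use the weighted Cauchy--Schwarz form. Set $g=\sqrt f-\sqrt{f'}$ and $\sigma=\sqrt f+\sqrt{f'}$, so that $f-f'=g\sigma$. Then
$$
 \|\cdot\|_{\gS^2}^2=\mathbb E\otimes\mathbb E\big[g(\omega)g(\omega')\,\sigma(\omega)\sigma(\omega')\,|\langle\cdot,\cdot\rangle|^{2k}\big].
$$
The kernel $K(\omega,\omega')=|\langle P_Ku(\omega),P_Ku(\omega')\rangle|^{2k}=\langle (P_Ku)^{\otimes k}\otimes\overline{(P_Ku)}^{\otimes k}(\omega),\cdot(\omega')\rangle$ is positive semidefinite. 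Cauchy--Schwarz for this positive kernel gives
$$
 \mathbb E\otimes\mathbb E[g\sigma\, g\sigma\, K]\le \big(\mathbb E\otimes\mathbb E[g^2\otimes\sigma^2 K]\big)\ \text{-type bounds}.
$$
More cleanly, I would bound the bilinear form by $\mathbb E\otimes\mathbb E[|g(\omega)||g(\omega')|\sigma\sigma K]$ and then apply AM--GM, $|g(\omega)\sigma(\omega')||g(\omega')\sigma(\omega)|\le\frac12(g(\omega)^2\sigma(\omega')^2+g(\omega')^2\sigma(\omega)^2)$, together with symmetry of $K$. This yields
$$
 \mathbb E[g(\omega)^2\,H(\omega)],\qquad H(\omega)=\mathbb E'[\sigma(\omega')^2K(\omega,\omega')]\le 2\,\mathbb E'[(f+f')(\omega')K(\omega,\omega')].
$$
Then $H(\omega)=2\langle (P_Ku)^{\otimes 2k}(\omega),\,(\gamma^{(2k)}_f+\gamma^{(2k)}_{f'})[P_Ku]\,(P_Ku)^{\otimes 2k}(\omega)\rangle$, with the conjugate factors reordered; this is the place where the $2k$-correlation appears.

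This is the main obstacle. $H$ is a quadratic form in $(P_Ku)^{\otimes 2k}$, not a bounded function, so we cannot simply pair it with $\mathbb E g^2\le\mathcal H_{\cl}$. To handle it, instead of splitting pointwise, I would apply Cauchy--Schwarz at the operator level. With $D=\mathbb E[(f-f')|U\rangle\langle U|]$ and $U=(P_Ku)^{\otimes k}$, we have $\|D\|_{\gS^2}^2=\tr(D^2)$. Writing $f-f'=g\sigma$, define operators $X=\mathbb E[g\,\sigma\,|U\rangle\langle U|]$. Then use the Hilbert--Schmidt Cauchy--Schwarz inequality in the tensor space $L^2(\mathbb P)\otimes\gS^2$:
$$
 \tr D^2=\mathbb E\otimes\mathbb E\big[g\,g'\cdot \sigma\sigma'|\langle U,U'\rangle|^2\big]\le \Big(\mathbb E\otimes\mathbb E[g^2g'^2]\Big)^{1/2}\Big(\mathbb E\otimes\mathbb E[\sigma^2\sigma'^2|\langle U,U'\rangle|^4]\Big)^{1/2}.
$$
The first factor is $\mathbb E[g^2]=\|\sqrt f-\sqrt{f'}\|_{L^2}^2$, the squared Hellinger distance, which is at most $\mathcal H_{\cl}(f\dd\mathbb P,f'\dd\mathbb P)$. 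The second factor is $\|\mathbb E[\sigma^2|U^{\otimes2}\rangle\langle U^{\otimes 2}|]\|_{\gS^2}\le 2\|\gamma^{(2k)}_f[P_Ku]+\gamma^{(2k)}_{f'}[P_Ku]\|_{\gS^2}$, using $\sigma^2\le 2(f+f')$ together with the positivity and monotonicity of the HS norm on positive operators. This gives the claimed bound with constant $2$ at level $K$. Finally, compressions do not increase the HS norm: $\gamma^{(2k)}[P_Ku]=P_K^{\otimes 2k}\gamma^{(2k)}[u]P_K^{\otimes2k}$ by \eqref{eq:correlation-compression}. Letting $K\to\infty$ via Lemma~\ref{lem:correlation-realization-approximation} then yields \eqref{eq:classical-entropy-HS}, including the HS realizability of $\gamma^{(k)}$. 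That realizability uses $\|\gamma^{(k)}_f[P_Ku]\|_{\gS^2}^2\le\|\gamma_f^{(2k)}[P_Ku]\|_{\gS^2}$, which follows from the same Cauchy--Schwarz argument with $g$ replaced by $\sqrt f$.
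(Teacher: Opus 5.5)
Your final argument, Cauchy--Schwarz on $\mathbb P\otimes\mathbb P$ at the level of the compressions $P_K$ and then $K\to\infty$ via Lemma~\ref{lem:correlation-realization-approximation}, is correct and is essentially the paper's proof. That includes your check that $\|\gamma_f^{(k)}[P_Ku]\|_{\gS^2}^2\leq\|\gamma_f^{(2k)}[P_Ku]\|_{\gS^2}$ gives realizability at order $k$. The only difference is the factorization: the paper writes $|f-f'|=\frac{|f-f'|}{\sqrt{f+f'}}\sqrt{f+f'}$ and uses $\mathbb E\frac{(f-f')^2}{f+f'}\leq2\cH_{\cl}$, whereas you use $(\sqrt f-\sqrt{f'})(\sqrt f+\sqrt{f'})$, the Hellinger bound $\mathbb E(\sqrt f-\sqrt{f'})^2\leq\cH_{\cl}$, and $(\sqrt f+\sqrt{f'})^2\leq2(f+f')$, which gives the same constant.
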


\begin{proof}
The Cauchy--Schwarz inequality gives
\begin{align*}
 &\|\gamma_f^{(k)}[P_Ku]-\gamma_{f'}^{(k)}[P_Ku]\|_{\gS^2}^2\\
 &=
 \iint
 (f-f')(g)(f-f')(g')
 |\langle P_Ku(g),P_Ku(g')\rangle|^{2k}
 \dd\mathbb P(g)\dd\mathbb P(g')\\
 &\leq
 \iint
 |f-f'|(g)|f-f'|(g')
 |\langle P_Ku(g),P_Ku(g')\rangle|^{2k}
 \dd\mathbb P(g)\dd\mathbb P(g')\\
 &\leq
 \mathbb E\frac{|f-f'|^2}{f+f'}\,
 \Big(\iint(f+f')(g)(f+f')(g')
 |\langle P_Ku(g),P_Ku(g')\rangle|^{4k}
 \dd\mathbb P(g)\dd\mathbb P(g')\Big)^{1/2}\\
 &=
 \mathbb E\frac{|f-f'|^2}{f+f'}\,
 \|\gamma_f^{(2k)}[P_Ku]+\gamma_{f'}^{(2k)}[P_Ku]\|_{\gS^2}\\
 &\leq
 2\cH_{\cl}(f\dd\mathbb P,f'\dd\mathbb P)
 \|\gamma_f^{(2k)}[P_Ku]+\gamma_{f'}^{(2k)}[P_Ku]\|_{\gS^2}.
\end{align*}
In the last line, we used the scalar inequality
$$
 \frac{(a-b)^2}{a+b}\leq2(a\log(a/b)-a+b),\qquad a,b>0.
$$
Lemma~\ref{lem:correlation-realization-approximation} allows us to pass to the limit $K\to\infty$, which proves \eqref{eq:classical-entropy-HS}.
\end{proof}

\medskip

\subsection{Gaussian fields}

For a Gaussian field, we estimate the correlation form directly in terms of its covariance. Throughout this subsection, let $u:\gX\to\gH^{-s}$ be a Gaussian field with bounded covariance $\mathsf C$, in the sense of Subsection~\ref{subsec:classical-field-theory}. We use the identity
\begin{equation}\label{eq:gamma-1-k-identity}
 \gamma_1^{(k)}[u]=\mathbb E\big[\ket{u^{\otimes k}}\bra{u^{\otimes k}}\big]=k!\mathsf C^{\otimes k}.
\end{equation}
See \cite[Section 3.2]{LNR15}.

\begin{lemma}\label{lem:Gaussian-weighted-correlations} For $k\geq1$ and $f\in L^q(\mathbb P;\CC)$, $1<q\leq\infty$, the form \eqref{eq:intrinsic-correlation-form} has a unique bounded realization on $\bigotimes_{\mathrm{sym}}^k\gH$. If $\mathsf C\in\gS^p$, $1\leq p<\infty$, then
\begin{equation}\label{eq:Gaussian-weighted-HS}
 \big\|\gamma_f^{(k)}[u]\big\|_{\gS^p}
 \leq
 (2kq')^k\|f\|_{L^q(\mathbb P)}\|\mathsf C\|_{\gS^p}^k,
 \qquad \frac1q+\frac1{q'}=1.
\end{equation}
\end{lemma}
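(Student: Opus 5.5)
Since $\mathsf C$ is diagonal in the Fourier basis $(e_n)$, I will work with the coordinates $\langle e_n,u\rangle=\sqrt{c_n}\,g_n$, where $c_n\geq0$ are the eigenvalues of $\mathsf C$ and $(g_n)$ are i.i.d.\ standard complex Gaussians. (For a general bounded $\mathsf C$, replace $(e_n)$ by any orthonormal basis of $\gH^s$-vectors diagonalizing a finite-rank approximation; the argument below only uses the compressions $P_K$.)

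The plan is to reduce to finite rank via Lemma~\ref{lem:correlation-realization-approximation}. Once we have a uniform bound on $\gamma_f^{(k)}[P_Ku]$ in operator norm (respectively in $\gS^p$), the bounded realization follows by the same density argument used in the converse part of that lemma. The $\gS^p$ bound then passes to the limit by lower semicontinuity, since $P_K^{\otimes k}\gamma P_K^{\otimes k}\to\gamma$ weakly. The compressed covariance satisfies $\|P_K\mathsf C P_K\|_{\gS^p}\le\|\mathsf C\|_{\gS^p}$, so it suffices to prove \eqref{eq:Gaussian-weighted-HS} when $u$ is finite-dimensional.

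For the finite-dimensional bound I would use duality. For a finite-rank operator $B$ on $\bigotimes_{\mathrm{sym}}^k\gH$ with $\|B\|_{\gS^{p'}}\le1$, we have
\[
\tr(B\gamma_f^{(k)}[u])=\mathbb E\big[f\,\langle u^{\otimes k},Bu^{\otimes k}\rangle\big].
\]
Applying H\"older's inequality gives
\[
|\tr(B\gamma_f^{(k)}[u])|\le\|f\|_{L^q}\,\big\|\langle u^{\otimes k},Bu^{\otimes k}\rangle\big\|_{L^{q'}}.
\]
The random variable $X_B=\langle u^{\otimes k},Bu^{\otimes k}\rangle$ is a polynomial of degree $2k$ in the Gaussian variables. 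By Gaussian hypercontractivity (Nelson), its $L^{q'}$ norm is controlled by its $L^2$ norm, with a factor of order $(q'-1)^{k}\lesssim q'^k$ for degree $2k$.

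The remaining task is to bound $\|X_B\|_{L^2}$ in terms of $\|\mathsf C^{\otimes k}\|_{\gS^p}=\|\mathsf C\|_{\gS^p}^k$. Write $\mathbb E|X_B|^2=\mathbb E[\langle u^{\otimes 2k},(B\otimes \overline B)u^{\otimes2k}\rangle]$, suitably arranged. Using \eqref{eq:gamma-1-k-identity} at order $2k$, this equals $(2k)!\,\tr\big(\Pi_{2k}(B\otimes B^*)\Pi_{2k}\mathsf C^{\otimes 2k}\big)$ up to ordering. Expanding the symmetrizer over $S_{2k}$ gives at most $(2k)!$ terms. Each term is a tensor-network trace of two copies of $B$ and $2k$ copies of $\mathsf C$, and Hölder in Schatten classes bounds it by $\|B\|_{\gS^{p'}}^2\|\mathsf C\|_{\gS^p}^{2k}$. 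For the terms where the permutation does not split across the two copies of $B$, one first rewrites them as $\tr(B\mathsf C^{\otimes k}B^*\mathsf C^{\otimes k})$-type expressions and interpolates. Taking square roots yields $(2k)^k$-type combinatorial constants, which together with the hypercontractive factor match the claimed $(2kq')^k$.

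I expect the main obstacle to be this mixed-permutation Schatten estimate. Bounding the general tensor contraction uniformly by $\|B\|_{\gS^{p'}}\|\mathsf C\|_{\gS^p}^k$ is not straightforward. If it resists, my fallback is to avoid it entirely and prove the bound by interpolation. The endpoints are:
\begin{itemize}
\item at $p=1$, use positivity: for $f\ge0$ (splitting $f$ into four nonnegative parts), $\tr\gamma_f^{(k)}=\mathbb E[f\|u\|^{2k}]\le\|f\|_{L^q}\|\|u\|^{2k}\|_{L^{q'}}$, and hypercontractivity applied to the degree-$2k$ variable $\|u\|^{2k}$ controls this by $\tr \mathsf C$ to the power $k$;
\item at $p=\infty$, bound $\langle\Phi,\gamma\Phi\rangle\le\|f\|_{L^q}\|\,|\langle\Phi,u^{\otimes k}\rangle|^2\|_{L^{q'}}$ and use $\mathbb E|\langle\Phi,u^{\otimes k}\rangle|^2=k!\langle\Phi,\mathsf C^{\otimes k}\Phi\rangle$.
\end{itemize}
For intermediate $p$, complex interpolation would be run in the variable $\mathsf C$ via $\mathsf C^{z}$ scaling of the Gaussian field.
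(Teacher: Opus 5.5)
Your main route (duality against finite-rank $B$ with $\|B\|_{\gS^{p'}}\le1$, H\"older in $f$, hypercontractivity, then an $L^2$ computation of $X_B$) could work. As written, however, it has a genuine gap exactly at the step you flag: you give no bound for the mixed-permutation terms $\tr\big((B\otimes B^*)U_\sigma\mathsf C^{\otimes 2k}\big)$, $\sigma\in S_{2k}$, and the fallback does not supply one. Complex interpolation ``in $\mathsf C$ via $\mathsf C^z$'' is not well posed. No Gaussian field has covariance $\mathsf C^z$ for non-real $z$. If you instead write $u_z=\mathsf C^{z/2}w$ with white noise $w$, the correlation operator is sesquilinear in the field, so it depends on both $z$ and $\bar z$ and is not an analytic family. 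Making it analytic requires first separating $\mathsf C^{1/2}$ from the rest, which is the factorization below; after that, no interpolation is needed. Your $p=1$ endpoint, which splits $f$ into four nonnegative parts, would also lose the explicit constant $(2kq')^k$.

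The missing idea is to keep $\mathsf C$ as a sandwich $(\mathsf C^{1/2})^{\otimes k}\,\cdot\,(\mathsf C^{1/2})^{\otimes k}$. In your route this removes the obstacle in one line. Since $U_\sigma$ commutes with $(\mathsf C^{1/2})^{\otimes 2k}$, cyclicity gives that each term equals $\tr\big((\widetilde B\otimes\widetilde B^*)U_\sigma\big)$ with $\widetilde B=(\mathsf C^{1/2})^{\otimes k}B(\mathsf C^{1/2})^{\otimes k}$. Each term is therefore bounded by $\|\widetilde B\|_{\gS^1}^2\le\|B\|_{\gS^{p'}}^2\|\mathsf C\|_{\gS^p}^{2k}$, uniformly in $\sigma$. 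The $(2k)!$ terms and the hypercontractive factor then give the constant $\sqrt{(2k)!}\,\max\{1,q'-1\}^k\le(2kq')^k$.

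The paper's proof is shorter still. It uses neither duality, hypercontractivity, diagonalization of $\mathsf C$, nor $P_K$-compression. The identity \eqref{eq:gamma-1-k-identity} at order $km$ gives $\mathbb E|\langle\Phi,u^{\otimes k}\rangle|^{2m}\le(km)!\,\|(\mathsf C^{1/2})^{\otimes k}\Phi\|^{2m}$. With $m=\lceil q'\rceil$ and H\"older, this yields $\mathbb E\big[|f|\,|\langle\Phi,u^{\otimes k}\rangle|^2\big]\le(2kq')^k\|f\|_{L^q}\|(\mathsf C^{1/2})^{\otimes k}\Phi\|^2$. Weighted Cauchy--Schwarz then bounds $|\langle\Phi,\gamma_f^{(k)}[u]\Psi\rangle|$ by $(2kq')^k\|f\|_{L^q}\|(\mathsf C^{1/2})^{\otimes k}\Phi\|\,\|(\mathsf C^{1/2})^{\otimes k}\Psi\|$. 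Riesz representation then gives $\gamma_f^{(k)}[u]=(\mathsf C^{1/2})^{\otimes k}T_f(\mathsf C^{1/2})^{\otimes k}$ with $\|T_f\|\le(2kq')^k\|f\|_{L^q}$. This yields the bounded realization and, via Schatten--H\"older and $\|(\mathsf C^{1/2})^{\otimes k}\|_{\gS^{2p}}^2=\|\mathsf C\|_{\gS^p}^k$, all $\gS^p$ bounds at once. In your $p=\infty$ bullet you already have $k!\langle\Phi,\mathsf C^{\otimes k}\Phi\rangle=k!\|(\mathsf C^{1/2})^{\otimes k}\Phi\|^2$ on the right. Keeping it there, rather than bounding it by $k!\|\mathsf C\|^k\|\Phi\|^2$, is the whole argument.
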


\begin{proof}
For a test tensor $\Phi$ and an integer $m\geq1$, applying \eqref{eq:gamma-1-k-identity} at order $km$ gives
\begin{align*}
 \mathbb E|\langle\Phi,u^{\otimes k}\rangle|^{2m}
 &=
 \mathbb E\big|
 \langle \Pi_{km}(\Phi^{\otimes m}),u^{\otimes km}\rangle
 \big|^2
 =
 \big\langle \Pi_{km}(\Phi^{\otimes m}),
 \gamma_1^{(km)}[u]\Pi_{km}(\Phi^{\otimes m})\big\rangle\\
 &=
 (km)!\,\big\|\Pi_{km}[((\mathsf C^{1/2})^{\otimes k}\Phi)^{\otimes m}]\big\|^2
 \leq
 (km)!\,\|(\mathsf C^{1/2})^{\otimes k}\Phi\|^{2m}.
\end{align*}
Here we used the fact that $(\mathsf C^{1/2})^{\otimes km}$ commutes with $\Pi_{km}$ and that $\Pi_{km}$ is an orthogonal projection.

Choose $m=\lceil q'\rceil$, so that $q'\leq m\leq2q'$. H\"older's inequality gives
\begin{align*}
 \mathbb E\big[|f|\,|\langle\Phi,u^{\otimes k}\rangle|^2\big]
 &\leq
 \|f\|_{L^q(\mathbb P)}
 \|\langle\Phi,u^{\otimes k}\rangle\|_{L^{2q'}(\mathbb P)}^2
 \leq
 \|f\|_{L^q(\mathbb P)}
 \big(\mathbb E|\langle\Phi,u^{\otimes k}\rangle|^{2m}\big)^{1/m}\\
 &\leq
 ((km)!)^{1/m}\|f\|_{L^q(\mathbb P)}\|(\mathsf C^{1/2})^{\otimes k}\Phi\|^2
 \leq
 (2kq')^k\|f\|_{L^q(\mathbb P)}\|(\mathsf C^{1/2})^{\otimes k}\Phi\|^2.
\end{align*}
Taking $\Phi=\phi^{\otimes k}$ proves \eqref{eq:correlation-integrability}, so the correlation form $\gamma_f^{(k)}[u]$ is well defined on the test tensor space. For two test tensors, the Cauchy--Schwarz inequality gives
$$
 \big|\langle\Phi,\gamma_f^{(k)}[u]\Psi\rangle\big|
 \leq
 (2kq')^k\|f\|_{L^q(\mathbb P)}\|(\mathsf C^{1/2})^{\otimes k}\Phi\|\|(\mathsf C^{1/2})^{\otimes k}\Psi\|
 \leq
 (2kq')^k\|f\|_{L^q(\mathbb P)}
 \|\mathsf C\|^k\|\Phi\|\|\Psi\|.
$$
Thus \eqref{eq:correlation-bounded-form} holds, and Definition~\ref{def:weighted-correlations} gives the unique bounded realization.

The preceding estimate also shows that $(\mathsf C^{1/2})^{\otimes k}\Phi=(\mathsf C^{1/2})^{\otimes k}\Phi'$ implies $\langle\Phi-\Phi',\gamma_f^{(k)}[u]\Psi\rangle=0$, with the same conclusion in the second variable. The Riesz representation theorem gives an operator $T_f$ vanishing on $(\Ran (\mathsf C^{1/2})^{\otimes k})^\perp$, with $\|T_f\|\leq(2kq')^k\|f\|_{L^q(\mathbb P)}$, such that $\gamma_f^{(k)}[u]=(\mathsf C^{1/2})^{\otimes k}T_f(\mathsf C^{1/2})^{\otimes k}$. H\"older's inequality for Schatten norms gives
$$
 \|\gamma_f^{(k)}[u]\|_{\gS^p}
 =
 \|(\mathsf C^{1/2})^{\otimes k}T_f(\mathsf C^{1/2})^{\otimes k}\|_{\gS^p}
 \leq
 \|T_f\|\,\|(\mathsf C^{1/2})^{\otimes k}\|_{\gS^{2p}}^2
 \leq
 (2kq')^k\|f\|_{L^q(\mathbb P)}
 \|\mathsf C\|_{\gS^p}^k.
$$
\end{proof}

\begin{lemma}[Adding a Gaussian field]\label{lem:weighted-Gaussian-addition} Let $u,v:\gX\to\gH^{-s}$ be independent Gaussian fields with covariances $\mathsf C,\mathsf D\in\gS^2$. If $f\in L^q(\mathbb P;\CC)$, $1<q\leq\infty$, is measurable with respect to $u$, then
\begin{equation}\label{eq:weighted-Gaussian-addition}
 \gamma_f^{(k)}[u+v]
 =
 \sum_{j=0}^k\binom kj^2j!\,
 \Pi_k\big[\gamma_f^{(k-j)}[u]\otimes\mathsf D^{\otimes j}\big]\Pi_k.
\end{equation}
The identity holds in $\gS^2$. For $f\geq0$, every summand is positive, and $0\leq\gamma_f^{(k)}[u]\leq\gamma_f^{(k)}[u+v]$.
\end{lemma}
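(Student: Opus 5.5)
The plan is to verify \eqref{eq:weighted-Gaussian-addition} on matrix elements against finite-dimensional test tensors. Then I use Lemma~\ref{lem:correlation-realization-approximation} to identify the operators in $\gS^2$.

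\emph{Step 1: well-definedness.} Since $u+v$ is Gaussian with covariance $\mathsf C+\mathsf D\in\gS^2$, Lemma~\ref{lem:Gaussian-weighted-correlations} shows that $\gamma_f^{(k)}[u+v]$ and every $\gamma_f^{(k-j)}[u]$ have Hilbert--Schmidt realizations. Thus both sides of \eqref{eq:weighted-Gaussian-addition} are defined in $\gS^2$. The right side is a finite sum of compressions by $\Pi_k$ of tensor products of $\gS^2$ operators, so it is Hilbert--Schmidt.

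\emph{Step 2: the algebraic identity at finite cutoff.} Replace $u,v$ by $P_Ku,P_Kv$. These are again independent Gaussian fields with covariances $P_K\mathsf CP_K$ and $P_K\mathsf DP_K$, and the weight $f$ remains $u$-measurable. By \eqref{eq:correlation-compression}, both sides transform by conjugation with $P_K^{\otimes k}$, since $P_K^{\otimes k}$ commutes with $\Pi_k$. Hence it suffices to prove the identity in finite dimensions and then let $K\to\infty$ using \eqref{eq:correlation-Schatten-approximation}.

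In finite dimensions, test against $\Phi=\phi^{\otimes k}$ and $\Psi=\psi^{\otimes k}$; polarization covers general symmetric tensors. The left side is then $\mathbb E[f\,\langle\phi,u+v\rangle^k\overline{\langle\psi,u+v\rangle}^k]$. Expand both binomials and condition on $u$. Independence together with $\mathbb E[\langle\phi,v\rangle^a\overline{\langle\psi,v\rangle}^b]=\delta_{ab}\,a!\,\langle\phi,\mathsf D\psi\rangle^a$ shows that only terms with equal numbers of $v$-factors survive. This follows from \eqref{eq:gamma-1-k-identity} and the vanishing of the pseudo-covariance in \eqref{eq:classical-Gaussian-covariances}, using circular symmetry. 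The result is
\[
\sum_{j=0}^k\binom kj^2 j!\,\langle\phi,\mathsf D\psi\rangle^j\,
\mathbb E\big[f\,\langle\phi,u\rangle^{k-j}\overline{\langle\psi,u\rangle}^{k-j}\big].
\]
This is exactly $\langle\phi^{\otimes k},\Pi_k[\gamma_f^{(k-j)}[u]\otimes\mathsf D^{\otimes j}]\Pi_k\psi^{\otimes k}\rangle$ summed with the same coefficients, because $\Pi_k\phi^{\otimes k}=\phi^{\otimes k}$. The main care needed here is the bookkeeping: the conjugation conventions, and the combinatorial factor $\binom kj^2 j!$ arising from the choice of positions on each side and the Wick pairings. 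I expect this to be routine.

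\emph{Step 3: positivity.} For $f\geq0$, each $\gamma_f^{(k-j)}[u]\geq0$ and $\mathsf D^{\otimes j}\geq0$. Their tensor product is therefore positive, and a compression by the projection $\Pi_k$ preserves positivity, so every summand is positive. The $j=0$ summand equals $\Pi_k\gamma_f^{(k)}[u]\Pi_k=\gamma_f^{(k)}[u]$. Dropping the remaining positive summands gives $0\leq\gamma_f^{(k)}[u]\leq\gamma_f^{(k)}[u+v]$.
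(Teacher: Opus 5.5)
Your proposal is correct and follows essentially the same argument as the paper. Both expand the two binomials, condition on $u$, use independence and phase invariance of $v$ to keep only the diagonal terms $\delta_{j\ell}\,j!\langle\phi,\mathsf D\psi\rangle^j$, identify the result on $\phi^{\otimes k},\psi^{\otimes k}$, polarize, and read off positivity from the $j=0$ summand. The only difference is your intermediate Fourier cutoff $P_K$. It is harmless, because the conditioning uses only that $v$ is independent of $\sigma(u)$, even though $f$ need not be $P_Ku$-measurable. It is also unnecessary: the paper tests directly on $\phi^{\otimes k}$ with $\phi\in\gH^s$ and concludes from the density of test tensors and the boundedness of both sides.
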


\begin{proof}
Lemma~\ref{lem:Gaussian-weighted-correlations} gives Hilbert--Schmidt realizations for $\gamma_f^{(k)}[u+v]$ and $\gamma_f^{(k-j)}[u]$, since $u+v$ has covariance $\mathsf C+\mathsf D$. For $\phi,\psi\in\gH^s$, independence and $\Law(e^{\ii\theta}v)=\Law(v)$ yield
$$
 \mathbb E\big[
 \langle\phi,v\rangle^j\langle v,\psi\rangle^\ell
 \mid u\big]
 =
 \mathbb E\big[
 \langle\phi,v\rangle^j\langle v,\psi\rangle^\ell\big]
 =
 \delta_{j\ell}\,j!\langle\phi,\mathsf D\psi\rangle^j,
 \qquad 0\leq j,\ell\leq k,
$$
where the last equality uses \eqref{eq:gamma-1-k-identity} at order $j$.

Expanding the powers and using the measurability of $f$ with respect to $u$, we obtain
\begin{align*}
 \langle\phi^{\otimes k},
 \gamma_f^{(k)}[u+v]\psi^{\otimes k}\rangle
 &=
 \mathbb E\Big[
 f\big(\langle\phi,u\rangle+\langle\phi,v\rangle\big)^k
 \big(\langle u,\psi\rangle+\langle v,\psi\rangle\big)^k
 \Big]\\
 &\quad=
 \sum_{j,\ell=0}^k\binom kj\binom k\ell
 \mathbb E\Big[
 f\,\langle\phi,u\rangle^{k-j}\langle u,\psi\rangle^{k-\ell}
 \mathbb E\big[
 \langle\phi,v\rangle^j\langle v,\psi\rangle^\ell
 \mid u\big]\Big]\\
 &\quad=
 \sum_{j=0}^k\binom kj^2j!\,
 \mathbb E\big[
 f\,\langle\phi,u\rangle^{k-j}\langle u,\psi\rangle^{k-j}
 \big]\langle\phi,\mathsf D\psi\rangle^j\\
 &\quad=
 \sum_{j=0}^k\binom kj^2j!\,
 \big\langle\phi^{\otimes k},
 \Pi_k[\gamma_f^{(k-j)}[u]\otimes\mathsf D^{\otimes j}]
 \Pi_k\psi^{\otimes k}\big\rangle.
\end{align*}
By polarization, the tensors $\phi^{\otimes k}$ with $\phi\in\gH^s$ span the test tensor space. Its density in $\bigotimes_{\mathrm{sym}}^k\gH$ and the boundedness of both sides give \eqref{eq:weighted-Gaussian-addition} in $\gS^2$.
\end{proof}

\medskip

\subsection{A priori bounds}

We return to the fields and weights on $(\gX\times\gX\times\mathfrak C,\widehat{\mathbb P}_\lambda)$ from Section~\ref{sec:proof-strategy}, with $\mathbb E=\int\dd\widehat{\mathbb P}_\lambda$ and $L^p=L^p(\widehat{\mathbb P}_\lambda)$. The Gaussian laws of $u_\lambda,u_\lambda+v_\lambda$, together with \eqref{eq:normalized-weights} and Lemma~\ref{lem:Gaussian-weighted-correlations}, yield the Hilbert--Schmidt realizations used below. We retain the convention $\gamma_{\nu_0}^{(0)}=1$.

\begin{proof}[Proof of Proposition~\ref{prop:moment-propagation}] Using $e^x\leq\mathrm e+e^x|x|^L$, $x\in\R$, we obtain
\begin{equation}
 f_\lambda
 =
 \frac{z_{\mathrm c}}{z_\lambda}
 e^{\Delta_{\lambda,\mathrm c}}f_{\mathrm c}
 \leq
 \mathrm e\,\frac{z_{\mathrm c}}{z_\lambda}f_{\mathrm c}
 +f_\lambda|\Delta_{\lambda,\mathrm c}|^L,
 \qquad
 f_{\mathrm q}
 =
 \frac{z_\lambda}{z_{\mathrm q}}
 e^{-\Delta_{\lambda,\mathrm q}}f_\lambda
 \leq
 \mathrm e\,\frac{z_\lambda}{z_{\mathrm q}}f_\lambda
 +f_{\mathrm q}|\Delta_{\lambda,\mathrm q}|^L.
 \label{eq:weight-domination}
\end{equation}
Choose $r=2+\mathfrak a_\lambda$ and $q=r/(r-1)$. By H\"older's inequality\kern0pt{}, \eqref{eq:normalized-weights}, and \eqref{eq:abstract-interaction-difference-moment}, we have
\begin{align*}
 &\big\|f_\lambda|\Delta_{\lambda,\mathrm c}|^L
 \big\|_{L^{\frac{2r}{2r-1}}}
 +\big\|f_{\mathrm q}|\Delta_{\lambda,\mathrm q}|^L
 \big\|_{L^{\frac{2r}{2r-1}}}
 \leq
 \|f_\lambda\|_{L^q}
 \|\Delta_{\lambda,\mathrm c}\|_{L^{2Lr}}^L
 +\|f_{\mathrm q}\|_{L^q}
 \|\Delta_{\lambda,\mathrm q}\|_{L^{2Lr}}^L\\
 &\quad\leq
 2\exp\Big(\frac{C_{d,c_0}\mathfrak a_\lambda}{r}\Big)
 \big(C_d(2Lr)^2\mathfrak r_\lambda\big)^L\lesssim_{d,c_0,L}
 (\mathfrak a_\lambda^2\mathfrak r_\lambda)^L,
\end{align*}
where $(2r-1)/(2r)=1/q+1/(2r)$, $\mathfrak a_\lambda/r\leq1$, and $r\leq3\mathfrak a_\lambda$. Moreover, \eqref{eq:entropy-rate} and the smallness condition $\mathfrak a_\lambda^2\mathfrak r_\lambda\ll1$ give
$$
 \max\Big\{
 \frac{z_{\mathrm c}}{z_\lambda},
 \frac{z_\lambda}{z_{\mathrm q}}\Big\}
 \leq
 \exp\big(C_{d,c_0}\mathfrak a_\lambda^2\mathfrak r_\lambda\big)
 \lesssim_{d,c_0}1.
$$

By\kern0pt{} \eqref{eq:weight-domination} and $ \gamma_{\nu_0}^{(k)} = \gamma_{f_{\mathrm c}}^{(k)}[u_\lambda+v_\lambda], $ we obtain
$$
 0\leq\gamma_{f_\lambda}^{(k)}[u_\lambda+v_\lambda]
 \leq
 \mathrm e\,\frac{z_{\mathrm c}}{z_\lambda}
 \gamma_{\nu_0}^{(k)}
 +\gamma_{f_\lambda|\Delta_{\lambda,\mathrm c}|^L}^{(k)}
 [u_\lambda+v_\lambda].
$$
Taking Hilbert--Schmidt norms and applying Lemma~\ref{lem:Gaussian-weighted-correlations} with exponent $2r/(2r-1)$ and conjugate exponent $2r$, we obtain
$$
 \big\|\gamma_{f_\lambda}^{(k)}[u_\lambda+v_\lambda]\big\|_{\gS^2}
 \leq
 \mathrm e\,\frac{z_{\mathrm c}}{z_\lambda}
 \big\|\gamma_{\nu_0}^{(k)}\big\|_{\gS^2}
 +
 (4kr)^k\|\mathsf C_0\|_{\gS^2}^k
 \big\|f_\lambda|\Delta_{\lambda,\mathrm c}|^L
 \big\|_{L^{\frac{2r}{2r-1}}}
 \lesssim_{d,c_0,k,L}
 \big\|\gamma_{\nu_0}^{(k)}\big\|_{\gS^2}
 +\mathfrak a_\lambda^k
 (\mathfrak a_\lambda^2\mathfrak r_\lambda)^L.
$$
Here $\|\mathsf C_0\|_{\gS^2}^2 =\sum_{j\in\Z^d}h(j)^{-2}\lesssim_d1$.

Since $f_\lambda$ is nonnegative and measurable with respect to $u_\lambda$, Lemma~\ref{lem:weighted-Gaussian-addition} gives $ 0\leq\gamma_{f_\lambda}^{(k)}[u_\lambda] \leq\gamma_{f_\lambda}^{(k)}[u_\lambda+v_\lambda]. $ Thus $\|\gamma_{f_\lambda}^{(k)}[u_\lambda]\|_{\gS^2} \leq \|\gamma_{f_\lambda}^{(k)}[u_\lambda+v_\lambda]\|_{\gS^2}$.

Finally, the second inequality in \eqref{eq:weight-domination} gives
$$
 0\leq\gamma_{f_{\mathrm q}}^{(k)}[u_\lambda]
 \leq
 \mathrm e\,\frac{z_\lambda}{z_{\mathrm q}}
 \gamma_{f_\lambda}^{(k)}[u_\lambda]
 +\gamma_{f_{\mathrm q}|\Delta_{\lambda,\mathrm q}|^L}^{(k)}
 [u_\lambda].
$$
Applying \eqref{eq:Gaussian-weighted-HS} with the same exponent, we obtain
$$
 \big\|\gamma_{f_{\mathrm q}}^{(k)}[u_\lambda]\big\|_{\gS^2}
 \leq
 \mathrm e\,\frac{z_\lambda}{z_{\mathrm q}}
 \big\|\gamma_{f_\lambda}^{(k)}[u_\lambda]\big\|_{\gS^2}
 +
 (4kr)^k\|\mathsf C_\lambda\|_{\gS^2}^k
 \big\|f_{\mathrm q}|\Delta_{\lambda,\mathrm q}|^L
 \big\|_{L^{\frac{2r}{2r-1}}}
 \lesssim_{d,c_0,k,L}
 \big\|\gamma_{\nu_0}^{(k)}\big\|_{\gS^2}
 +\mathfrak a_\lambda^k
 (\mathfrak a_\lambda^2\mathfrak r_\lambda)^L,
$$
where $\|\mathsf C_\lambda\|_{\gS^2}\leq\|\mathsf C_0\|_{\gS^2}$. Combining these estimates proves \eqref{eq:propagated-moments}.
\end{proof}

\bigskip

\section{Kernel domination and convergence of reduced density matrices}\label{sec:kernel-domination}

Proposition~\ref{prop:moment-propagation} provides a priori bounds for the weighted correlation operators. The following kernel comparison allows us to transfer these bounds to the quantum reduced density matrices.

\begin{proposition}\label{prop:quantum-kernel-bound} Suppose that $w\geq0$. For every $\lambda>0$ and $k\geq1$, the following kernel bound holds:
$$
 0\leq
 \frac{k!\lambda^k}{z_{\mathrm q}}
 (e^{-\bW_\lambda^\ren}\Gamma_{\rm free})^{(k)}(X;Y)
 \leq
 e^{kb_\lambda}\gamma_{f_{\mathrm q}}^{(k)}[u_\lambda](X;Y),
$$
where $b_\lambda$ is defined in \eqref{eq:comparison-b}. Moreover,
$$
 k!\lambda^k\big\|\Gamma_\lambda^{(k)}\big\|_{\gS^2}
 \leq
 e^{kb_\lambda}
 \frac{Z_{\rm free}z_{\mathrm q}}{Z_\lambda}
 \big\|\gamma_{f_{\mathrm q}}^{(k)}[u_\lambda]\big\|_{\gS^2}.
$$
\end{proposition}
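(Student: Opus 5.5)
The plan is to compute both kernels explicitly from the coherent-state representation \eqref{eq:free-coherent-representation}, and then compare them pointwise in the configuration variables. Since $\bW_\lambda^\ren$ acts by multiplication on each particle sector, the $k$-particle kernel is
$$
(e^{-\bW_\lambda^\ren}\Gamma_{\rm free})^{(k)}(X;Y)=\sum_{n\geq k}\binom nk\int e^{-\bW_\lambda^\ren(X,Z)}\,\Gamma_{{\rm free},n}(X,Z;Y,Z)\dd Z .
$$
I will insert
$$
\Gamma_{{\rm free},n}(X,Z;Y,Z)=\int e^{-\|u\|^2/\lambda}\frac{1}{n!\lambda^n}\,u^{\otimes k}(X)\overline{u^{\otimes k}(Y)}\prod_{z\in Z}|u(z)|^2\dd\mu_\lambda(u)
$$
and set $m=n-k$. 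The combinatorial factors then collapse: $k!\lambda^k\binom nk/(n!\lambda^n)=1/(m!\lambda^m)$, which is exactly the weight of $\rho_{\lambda,u}$ in \eqref{eq:abstract-coherent-kernel}. This yields
$$
\frac{k!\lambda^k}{z_{\mathrm q}}(e^{-\bW_\lambda^\ren}\Gamma_{\rm free})^{(k)}(X;Y)=\frac1{z_{\mathrm q}}\int\!\!\int_{\mathfrak C}u^{\otimes k}(X)\overline{u^{\otimes k}(Y)}\,e^{-\bW_\lambda^\ren(X\sqcup\mathbf z)}\dd\rho_{\lambda,u}(\mathbf z)\dd\mu_\lambda(u).
$$
This is the same as the kernel of $\gamma_{f_{\mathrm q}}^{(k)}[u_\lambda]$, except that the configuration is enlarged by the $k$ points $X$.

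The key step is the pointwise estimate $\bW_\lambda^\ren(X\sqcup\mathbf z)\geq\bW_\lambda^\ren(\mathbf z)-kb_\lambda$, which I will prove from the position-space form of $\bW_\lambda^\ren$. Adding points $x_1,\dots,x_k$ shifts the density $\lambda\sum_j\delta_{X_j}$ by $\lambda\sum_i\delta_{x_i}$. The resulting change in the interaction splits into four terms:
\begin{itemize}
\item a self-interaction $\frac{\lambda^2}{2}\sum_{i,i'}w(x_i-x_{i'})\geq0$;
\item a cross term $\lambda^2\sum_i\sum_{j}w(x_i-z_j)\geq0$;
\item a background term $-\lambda^2\rho_0\sum_i\int w(x_i-y)\dd y$, which is bounded below by $-k\lambda^2\rho_0\widehat w(0)$ up to the normalization of $\widehat w(0)$;
\item the mass term $-\alpha_\lambda\lambda k$.
\end{itemize}
The first two terms are nonnegative precisely because $w\geq0$. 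Hence $e^{-\bW_\lambda^\ren(X\sqcup\mathbf z)}\leq e^{kb_\lambda}e^{-\bW_\lambda^\ren(\mathbf z)}$, with $b_\lambda$ as in \eqref{eq:comparison-b}.

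The main obstacle is turning this scalar domination into the stated kernel/operator inequality. The weight $e^{-\bW(X\sqcup\mathbf z)}$ depends on $X$ but not on $Y$, so the kernel is not manifestly positive.
\begin{itemize}
\item To fix this, I will first replace $e^{-\bW}\Gamma_{\rm free}$ by the symmetrized $e^{-\bW/2}\Gamma_{\rm free}e^{-\bW/2}$. This is legitimate because the partial trace over the remaining $n-k$ variables is cyclic in the multiplication factors acting on those variables.
\item The kernel then becomes $\int|\psi_{u,\mathbf z}\rangle\langle\psi_{u,\mathbf z}|$ with $\psi_{u,\mathbf z}(X)=e^{-\bW(X\sqcup\mathbf z)/2}u^{\otimes k}(X)$. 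This gives nonnegativity.
\item For the upper bound, I will write $\psi_{u,\mathbf z}=M_{\mathbf z}\,e^{-\bW(\mathbf z)/2}u^{\otimes k}$, where $M_{\mathbf z}$ is multiplication by a function taking values in $[0,e^{kb_\lambda/2}]$. I will then compare against $e^{kb_\lambda}\gamma_{f_{\mathrm q}}^{(k)}[u_\lambda]$ in the sense in which the statement's kernel inequality is to be read.
\item I expect to use the gauge invariance of $\mu_\lambda$ and the Fourier-diagonal structure so that the comparison controls Hilbert--Schmidt norms. This is the step I am least sure of, and I would check it first on $k=1$.
\end{itemize}

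Finally, $\Gamma_\lambda^{(k)}\leq\frac{Z_{\rm free}z_{\mathrm q}}{Z_\lambda}\cdot\frac{1}{z_{\mathrm q}}(e^{-\bW/2}\Gamma_{\rm free}e^{-\bW/2})^{(k)}$ should follow from the Golden--Thompson/Peierls-type comparison already used in Proposition~\ref{prop:quantum-entropy-interface}. Monotonicity of the Hilbert--Schmidt norm on nonnegative operators then gives the second claim.
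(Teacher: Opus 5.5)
Your coherent-state formula for the $k$-particle kernel is right, including the combinatorics $k!\lambda^k\binom nk/(n!\lambda^n)=1/((n-k)!\lambda^{n-k})$. So is the insertion bound $\bW_\lambda^\ren(X\sqcup\mathbf z)\geq\bW_\lambda^\ren(\mathbf z)-kb_\lambda$; both are exactly what the paper uses. The argument breaks at the step you flag as the obstacle, for two reasons.

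First, the symmetrization is not legitimate. $e^{-\bW_\lambda^\ren}$ multiplies by a function of all $n$ variables, including $X$, so it is not of the form $\1\otimes B$ on the traced-out factor. The operator $\tr_{k+1,\ldots,n}(e^{-\bW_\lambda^\ren}\Gamma_{{\rm free},n})$, with kernel $\int e^{-\bW_\lambda^\ren(X,Z)}\Gamma_{{\rm free},n}(X,Z;Y,Z)\dd Z$, is non-self-adjoint. It differs from the operator with weight $e^{-\bW_\lambda^\ren(X,Z)/2}e^{-\bW_\lambda^\ren(Y,Z)/2}$, so you would not be proving the stated bound.

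Second, no operator comparison of $\int M_{\mathbf z}|\phi\rangle\langle\phi|M_{\mathbf z}$ with $e^{kb_\lambda}\int|\phi\rangle\langle\phi|$ holds for multiplication operators $0\leq M_{\mathbf z}\leq e^{kb_\lambda/2}$. It already fails for a single rank-one term, since $M_{\mathbf z}\phi\notin\CC\phi$, and gauge invariance does not repair this.

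The inequality in the statement is \emph{pointwise} in $(X,Y)$, and the missing idea is pointwise positivity of the Gaussian integrand. Take $Z\in(\bT^d)^{n-k}$, $(s_a)=(X,Z)$ and $(t_b)=(Y,Z)$. Integrating out $u$ at fixed $X,Y,Z$ gives
$\int e^{-\|u\|^2/\lambda}\prod_{a=1}^n u(s_a)\prod_{b=1}^n\overline{u(t_b)}\dd\mu_\lambda(u)=\lambda^{n}Z_{\rm free}^{-1}\sum_{\sigma\in\mathfrak S_n}\prod_a e^{-\lambda h}(s_a,t_{\sigma(a)})\geq0$.
This holds because the tilted covariance is $(\mathsf C_\lambda^{-1}+\lambda^{-1})^{-1}=\lambda e^{-\lambda h}$ and the heat kernel is positive. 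Against this nonnegative integrand, the insertion bound applies directly under the integral. The asymmetry between $X$ and $Y$ is harmless, and both kernels come out real and nonnegative.

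The Hilbert--Schmidt claim also does not follow as you suggest. Golden--Thompson is only a trace inequality. It gives no operator inequality $(e^{-\lambda\dG(h)-\bW_\lambda^\ren})^{(k)}\leq(e^{-\bW_\lambda^\ren/2}e^{-\lambda\dG(h)}e^{-\bW_\lambda^\ren/2})^{(k)}$. Such an inequality would amount to $\tr[(P\otimes\1)e^{A+B}]\leq\tr[(P\otimes\1)e^{B/2}e^{A}e^{B/2}]$ for all $P\geq0$, which is false in general: even without the partial trace, $e^{A+B}\leq e^{B/2}e^Ae^{B/2}$ fails for generic non-commuting Hermitian matrices.

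The ingredient you are missing is a Golden--Thompson inequality \emph{with a permutation inserted}. The paper writes $\|\Gamma_\lambda^{(k)}\|_{\gS^2}^2$ as $Z_\lambda^{-2}\sum_{n,m}\binom nk\binom mk\tr[U^{(k)}_{n,m}(\Pi_n\otimes\Pi_m)e^{-H_{\lambda,n}\otimes\1-\1\otimes H_{\lambda,m}}]$, where $U^{(k)}_{n,m}$ exchanges the first $k$ coordinates of the two copies. It then proves $\tr(U_\pi e^{-H_0-V})\leq\tr(U_\pi e^{-V}e^{-H_0})$ in Lemma~\ref{lem:permutation-Jensen}. The proof uses Feynman--Kac and Jensen's inequality in the time variable. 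Under the permuted trace, every time slice of the Brownian bridge integrates to the same $K_1(Y,\pi Y)$, by the semigroup property and permutation invariance.

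This yields $\|\Gamma_\lambda^{(k)}\|_{\gS^2}^2\leq(Z_{\rm free}/Z_\lambda)^2\iint K(X;Y)K(Y;X)\dd X\dd Y$ for the unsymmetrized $K=(e^{-\bW_\lambda^\ren}\Gamma_{\rm free})^{(k)}$. Only then does the pointwise bound $0\leq k!\lambda^kK/z_{\mathrm q}\leq e^{kb_\lambda}\gamma_{f_{\mathrm q}}^{(k)}[u_\lambda]$ give the stated estimate, using $\gamma(X;Y)\gamma(Y;X)=|\gamma(X;Y)|^2$ for the Hermitian kernel $\gamma=\gamma_{f_{\mathrm q}}^{(k)}[u_\lambda]$.
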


Before proving Proposition~\ref{prop:quantum-kernel-bound}, we establish a trace inequality with a permutation inserted. Such traces arise when the squared Hilbert--Schmidt norm of a reduced density matrix is expressed.

\begin{lemma}\label{lem:permutation-Jensen} On $L^2((\bT^d)^n)$, let $H_{0,n}=\lambda\sum_{j=1}^nh_j$, let $V$ be multiplication by a real-valued bounded continuous function, and let $U_\pi$ permute the $n$ coordinates. Then
$$
 \tr(U_\pi e^{-H_{0,n}-V})
 \leq
 \tr(U_\pi e^{-V}e^{-H_{0,n}}).
$$
\end{lemma}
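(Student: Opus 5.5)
The plan is to use the Trotter product formula together with positivity of the heat kernel, applying Jensen's inequality pointwise in the discretized path variables; this is a discrete Feynman--Kac argument. Write $K_t(x,y)$ for the integral kernel of $e^{-tH_{0,n}}$ on $(\bT^d)^n$. Since $h=-\Delta+\1$, this kernel is $e^{-\lambda tn}$ times a product of periodized Gaussian heat kernels, so it is strictly positive and continuous for $t>0$. Two structural facts will be used repeatedly:
\begin{itemize}
\item $U_\pi$ commutes with $H_{0,n}$, because $H_{0,n}$ is symmetric in the coordinates.
\item For any trace-class operator $T$ with continuous kernel, $\tr(U_\pi T)=\int T(\pi x,x)\,\dd x$, up to the convention for the direction of $\pi$, which plays no role below.
\end{itemize}

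\textbf{Step 1: Trotter approximation.} For $N\geq1$ put $T_N=(e^{-H_{0,n}/N}e^{-V/N})^N$. Since $V$ is bounded and $e^{-\epsilon H_{0,n}}$ is trace class for every $\epsilon>0$, we have $T_N\to e^{-H_{0,n}-V}$ in trace norm. One way to see this is to note that the Trotter formula converges in operator norm for a bounded perturbation of a semibounded operator. One then factors $T_N=e^{-H_{0,n}/(2N)}\,\widetilde T_N$ and uses uniform trace-norm bounds, with an additional bound on $e^{-H_{0,n}/2}$-type factors. It follows that $\tr(U_\pi T_N)\to\tr(U_\pi e^{-H_{0,n}-V})$, so it suffices to prove the inequality for each $T_N$ up to an error that vanishes as $N\to\infty$. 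In fact we aim to prove it exactly.

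\textbf{Step 2: pointwise Jensen on the discretized path.} Expanding $T_N$ gives
$$
\tr(U_\pi T_N)=\int \prod_{j=1}^{N}K_{1/N}(z_{j-1},z_j)\,e^{-\frac1N\sum_{j=1}^N V(z_j)}\,\dd z_1\cdots\dd z_N,
$$
where $z_0=\pi z_N$. All heat-kernel factors are positive, so we may apply convexity of the exponential pointwise:
$$
e^{-\frac1N\sum_j V(z_j)}\leq\frac1N\sum_{j=1}^N e^{-V(z_j)}.
$$
For the $j$-th term we integrate out every variable except $z_j$ using the semigroup property, which produces $\tr(U_\pi e^{-(j/N)H_{0,n}}e^{-V}e^{-(1-j/N)H_{0,n}})$. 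By cyclicity of the trace and $[U_\pi,H_{0,n}]=0$, each such term equals $\tr(U_\pi e^{-V}e^{-H_{0,n}})$. Averaging over $j$ yields $\tr(U_\pi T_N)\leq\tr(U_\pi e^{-V}e^{-H_{0,n}})$, and letting $N\to\infty$ gives the claim.

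\textbf{Main obstacle.} The delicate points are the functional-analytic ones. First, the traces $\tr(U_\pi\,\cdot\,)$ are real, since both sides reduce to integrals of nonnegative kernels. Second, one must justify the kernel formula for the trace with the permutation inserted. This requires continuity of the product kernel: the kernel of $e^{-V}e^{-H_{0,n}}$ is $e^{-V(x)}K_1(x,y)$, which is continuous because $V$ is continuous. Third, one must justify trace-norm convergence of the Trotter product. If trace-norm convergence proves awkward, I would instead write both sides as integrals of continuous kernels on the diagonal twisted by $\pi$. I would then pass to the limit using the Feynman--Kac representation, in which the bridge measure is positive, and apply Jensen in the form $e^{-\int_0^1V(\omega_t)\,\dd t}\leq\int_0^1 e^{-V(\omega_t)}\,\dd t$. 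This gives the same conclusion by the same cyclicity argument.
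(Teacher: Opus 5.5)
Your argument is correct and is the time-discretized version of the paper's proof. The paper applies Jensen's inequality $e^{-\int_0^1V(\omega(t))\dd t}\leq\int_0^1e^{-V(\omega(t))}\dd t$ directly under the positive Brownian-bridge measure of the Feynman--Kac formula (your fallback), and collapses each time-$t$ term to $\int e^{-V(Y)}K_1(Y,\pi Y)\dd Y$ by the semigroup property and permutation invariance, exactly as you do for each $j$. Working in continuous time spares the paper the trace-norm convergence of the Trotter products, which is the one loosely justified point in your Step~1. Factoring out $e^{-H_{0,n}/(2N)}$ cannot give uniform trace-norm bounds, because $\|e^{-H_{0,n}/(2N)}\|_{\gS^1}\to\infty$. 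What does work is to split $T_N$ into two blocks of about $N/2$ factors, bound each block in $\gS^1$ by H\"older's inequality for Schatten norms, and use operator-norm Trotter convergence.
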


\begin{proof}
Write $K_t(X,Y)=e^{-tH_{0,n}}(X,Y)$. The unnormalized free Brownian bridge measure $\mathbb B_{X,Y}$ on $[0,1]$ has total mass $K_1(X,Y)$. The Feynman--Kac formula for bounded potentials gives
$$
 e^{-H_{0,n}-V}(X,Y)
 =
 \int e^{-\int_0^1V(\omega(t))\dd t}
 \dd\mathbb B_{X,Y}(\omega).
$$

Using Jensen's inequality and the time-$t$ marginal of the bridge, we obtain
\begin{align*}
 \tr(U_\pi e^{-H_{0,n}-V})
 &=
 \int\dd X\int
 e^{-\int_0^1V(\omega(t))\dd t}
 \dd\mathbb B_{X,\pi X}(\omega)\\
 &\leq
 \int_0^1\dd t\int\dd X\int
 e^{-V(\omega(t))}
 \dd\mathbb B_{X,\pi X}(\omega)\\
 &=
 \int_0^1\dd t\int e^{-V(Y)}
 \int K_t(X,Y)K_{1-t}(Y,\pi X)\dd X\,\dd Y\\
 &=
 \int e^{-V(Y)}K_1(Y,\pi Y)\dd Y
 =
 \tr(U_\pi e^{-V}e^{-H_{0,n}}).
\end{align*}
Indeed, permutation invariance and the semigroup identity give
$$
 \int K_t(X,Y)K_{1-t}(Y,\pi X)\dd X
 =
 \int K_t(Z,\pi Y)K_{1-t}(Y,Z)\dd Z
 =
 K_1(Y,\pi Y).
$$
Here we use the convention $(U_\pi F)(X)=F(\pi^{-1}X)$.
\end{proof}

\begin{proof}[Proof of Proposition~\ref{prop:quantum-kernel-bound}] For $X=(x_1,\ldots,x_k)$, expansion of \eqref{eq:comparison-quantum-interaction} gives
\begin{align}
 &\bW_\lambda^\ren(\mathbf x\sqcup X)
 -\bW_\lambda^\ren(\mathbf x)
 \nn\\
 &=
 \lambda^2\sum_{a=1}^k\sum_{j=1}^{|\mathbf x|}w(x_a-X_j)
 +\frac{\lambda^2}{2}\sum_{a,b=1}^kw(x_a-x_b)
 -k\lambda\big(\widehat w(0)\lambda\rho_0+\alpha_\lambda\big)
 \geq-kb_\lambda.
 \label{eq:particle-insertion-bound}
\end{align}
Here we used $\sum_n\widehat w(n)e_n(x)\overline{e_n(y)}=w(x-y)$ and $w\geq0$.

Write $\Pi_n$ for the orthogonal projection onto $\bigotimes_{\mathrm{sym}}^n\gH$, and let $U_{n,m}^{(k)}$ exchange the first $k$ coordinates of the $n$- and $m$-particle factors. On the full $n$-particle space, write
$$
 H_{\lambda,n}
 =
 \lambda\sum_{j=1}^nh_j+\bW_{\lambda,n}^\ren,
$$
where $\bW_{\lambda,n}^\ren$ is the restriction of $\bW_\lambda^\ren$ to that sector. For the operators used below, the partial trace satisfies
\begin{align*}
 \tr\big[U_{n,m}^{(k)}(A_n\otimes B_m)\big]
 &=
 \int A_n(X,Z;Y,Z)B_m(Y,T;X,T)
 \dd X\,\dd Y\,\dd Z\,\dd T
 \\
 &=
 \tr\big[
 (\tr_{k+1,\ldots,n}A_n)
 (\tr_{k+1,\ldots,m}B_m)
 \big].
\end{align*}
In the expansion of $\Pi_n\otimes \Pi_m$ as an average over permutations, each $U_{n,m}^{(k)}U_{\sigma\oplus\tau}$ permutes the $n+m$ coordinates. We apply Lemma~\ref{lem:permutation-Jensen} to $\bW_{\lambda,n}^\ren+\bW_{\lambda,m}^\ren$. Together with \eqref{eq:rdm-convention}, this yields
\begin{align}
 \|\Gamma_\lambda^{(k)}\|_{\gS^2}^2
 &=
 Z_\lambda^{-2}\sum_{n,m\geq k}\binom nk\binom mk
 \tr\big[
 U_{n,m}^{(k)}(\Pi_n\otimes \Pi_m)
 e^{-H_{\lambda,n}\otimes\1-\1\otimes H_{\lambda,m}}
 \big]
 \nn\\
 &\leq
 Z_\lambda^{-2}\sum_{n,m\geq k}\binom nk\binom mk
 \tr\Big[
 U_{n,m}^{(k)}(\Pi_n\otimes \Pi_m)
 (e^{-\bW_{\lambda,n}^\ren}\otimes
 e^{-\bW_{\lambda,m}^\ren})
 e^{-\lambda\sum_{j=1}^{n+m}h_j}
 \Big]
 \nn\\
 &=
 \Big(\frac{Z_{\rm free}}{Z_\lambda}\Big)^2
 \iint
 (e^{-\bW_\lambda^\ren}\Gamma_{\rm free})^{(k)}(X;Y)
 (e^{-\bW_\lambda^\ren}\Gamma_{\rm free})^{(k)}(Y;X)
 \dd X\,\dd Y.
 \label{eq:two-copy-kernel-comparison}
\end{align}

By \eqref{eq:free-coherent-representation} and \eqref{eq:abstract-coherent-kernel},
\begin{equation}
 k!\lambda^k
 (e^{-\bW_\lambda^\ren}\Gamma_{\rm free})^{(k)}(X;Y)
 =
 \int u^{\otimes k}(X)\overline{u^{\otimes k}(Y)}
 \int_{\mathfrak C}
 e^{-\bW_\lambda^\ren(\mathbf x\sqcup X)}
 \dd\rho_{\lambda,u}(\mathbf x)\dd\mu_\lambda(u).
 \label{eq:coherent-particle-insertion}
\end{equation}
At a fixed configuration, integration with respect to the Gaussian field gives
$$
 \int e^{-\|u\|^2/\lambda}\dd\mu_\lambda(u)
 =
 \prod_n(1+c_\lambda(n)/\lambda)^{-1}
 =Z_{\rm free}^{-1},
 \qquad
 (\mathsf C_\lambda^{-1}+\lambda^{-1})^{-1}
 =\lambda e^{-\lambda h}.
$$
Hence, 
\begin{equation}
 \int e^{-\|u\|^2/\lambda}
 \prod_{a=1}^{n+k}u(s_a)
 \prod_{b=1}^{n+k}\overline{u(t_b)}\dd\mu_\lambda(u)
 =
 \frac{\lambda^{n+k}}{Z_{\rm free}}
 \sum_{\sigma\in\mathfrak S_{n+k}}
 \prod_{a=1}^{n+k}e^{-\lambda h}(s_a,t_{\sigma(a)})
 \geq0.
 \label{eq:Gaussian-positive-kernel}
\end{equation}
The Gaussian field has a smooth version for $\lambda>0$: $\tr(h^a\mathsf C_\lambda)<\infty$ for every $a\geq0$. Thus the pointwise values and Gaussian moments in these formulas are well defined.

Applying \eqref{eq:particle-insertion-bound} after \eqref{eq:Gaussian-positive-kernel}, we obtain
\begin{equation}\label{eq:inserted-kernel-domination}
 0\leq
 \frac{k!\lambda^k}{z_{\mathrm q}}
 (e^{-\bW_\lambda^\ren}\Gamma_{\rm free})^{(k)}(X;Y)
 \leq
 e^{kb_\lambda}\gamma_{f_{\mathrm q}}^{(k)}[u_\lambda](X;Y).
\end{equation}
The same Gaussian formula shows that the last kernel is real-valued and nonnegative. Its rank-one integral representation gives its Hermitian symmetry. Substituting\kern0pt{} \eqref{eq:inserted-kernel-domination} into \eqref{eq:two-copy-kernel-comparison} proves the result.

The lower bound in \eqref{eq:interaction-lower-bounds} gives, at fixed parameters,
$$
 0\leq(\Pi_ne^{-H_{\lambda,n}})(X;Y)
 \leq
 e^{\alpha_\lambda^2/(4c_0)+\beta_\lambda}
 (\Pi_ne^{-\lambda\sum_jh_j})(X;Y).
$$
Moreover,
$$
 \tr_{\gF}e^{-\lambda\dG(h)+(\lambda/2)\cN}
 =
 \prod_n\big(1-e^{-\lambda(h(n)-1/2)}\big)^{-1}<\infty.
$$
Thus all particle-number-weighted traces in \eqref{eq:two-copy-kernel-comparison} are finite. For \eqref{eq:coherent-particle-insertion}, the sum of the absolute values of the integrands is bounded by
$$
 e^{\alpha_\lambda^2/(4c_0)+\beta_\lambda}
 \int |u^{\otimes k}(X)|\,|u^{\otimes k}(Y)|\dd\mu_\lambda(u)<\infty.
$$
These bounds justify the partial traces, sums, and changes in the order of integration for each fixed choice of parameters.
\end{proof}

We now complete the proof of Theorem~\ref{thm:abstract-comparison}.

\begin{proof}[Proof of Theorem~\ref{thm:abstract-comparison}] The relative free-energy estimate \eqref{eq:main-FE} was proved in Proposition~\ref{prop:estimate-E-lambda-diamond}.

Suppose now that $w\geq0$ and $\mathfrak a_\lambda^2\mathfrak r_\lambda\ll1$. We prove \eqref{eq:main-HS}. For $0<\lambda\leq1$, using $ k!\lambda^k\widetilde\Gamma_\lambda^{(k)} = \gamma_{f_\lambda}^{(k)}[u_\lambda] $ and $ \gamma_{\nu_0}^{(k)} = \gamma_{f_{\mathrm c}}^{(k)}[u_\lambda+v_\lambda], $ we have
\begin{align}
 k!\lambda^k\Gamma_\lambda^{(k)}-\gamma_{\nu_0}^{(k)}
 &=
 k!\lambda^k
 \big(\Gamma_\lambda^{(k)}-\widetilde\Gamma_\lambda^{(k)}\big)\nn\\
 &+
 \big(\gamma_{f_\lambda}^{(k)}[u_\lambda+v_\lambda]
 -
 \gamma_{f_{\mathrm c}}^{(k)}[u_\lambda+v_\lambda]\big)
 +
 \big(\gamma_{f_\lambda}^{(k)}[u_\lambda]
 -
 \gamma_{f_\lambda}^{(k)}[u_\lambda+v_\lambda]\big).
 \label{eq:proof-HS-three-comparisons}
\end{align}
\noindent\textbf{Step 1: Quantum reduced density matrices.} We estimate the first term in \eqref{eq:proof-HS-three-comparisons}. By Proposition~\ref{prop:quantum-kernel-bound} and Proposition~\ref{prop:moment-propagation},
\begin{equation}
 k!\lambda^k\|\Gamma_\lambda^{(k)}\|_{\gS^2}
 \leq
 e^{kb_\lambda}
 \frac{Z_{\rm free}z_{\mathrm q}}{Z_\lambda}
 \|\gamma_{f_{\mathrm q}}^{(k)}[u_\lambda]\|_{\gS^2}\lesssim_{d,c_0,k,L}
 e^{kb_\lambda}
 \Big[
 \|\gamma_{\nu_0}^{(k)}\|_{\gS^2}
 +\mathfrak a_\lambda^k
 (\mathfrak a_\lambda^2\mathfrak r_\lambda)^L
 \Big],\label{eq:bound-Gamma-gamma}
\end{equation}
which proves \eqref{eq:main-moment}.

The phase invariance of $\nu_\lambda$ and the identity $e^{\ii\theta\cN}\xi(u/\sqrt\lambda) =\xi(e^{\ii\theta}u/\sqrt\lambda)$ give $[\widetilde\Gamma_\lambda,\cN]=0$; we also have $[\Gamma_\lambda,\cN]=0$. We apply \cite[Lemma~11.4, (11.14)]{LNR21}, together with Pinsker's inequality $\|\Gamma_\lambda-\widetilde\Gamma_\lambda\|_{\gS^1} \leq\big[2\cH(\widetilde\Gamma_\lambda,\Gamma_\lambda)\big]^{1/2}$, to obtain
$$
 \big\|k!\lambda^k
 (\Gamma_\lambda^{(k)}-\widetilde\Gamma_\lambda^{(k)})
 \big\|_{\gS^2}^2
 \lesssim_k
 \cH(\widetilde\Gamma_\lambda,\Gamma_\lambda)^{1/2}
 \sum_{\ell=k}^{2k}\lambda^{2k-\ell}
 \Big[
 \lambda^\ell\|\Gamma_\lambda^{(\ell)}\|_{\gS^2}
 +\frac1{\ell!}\|\gamma_{f_\lambda}^{(\ell)}[u_\lambda]\|_{\gS^2}
 \Big].
$$
Propositions~\ref{prop:estimate-E-lambda-diamond} and~\ref{prop:quantum-entropy-interface} give $\cH(\widetilde\Gamma_\lambda,\Gamma_\lambda)^{1/2}\lesssim_{d,c_0}\mathfrak a_\lambda\mathfrak r_\lambda^{1/2}$. Combining this bound with \eqref{eq:bound-Gamma-gamma} and Proposition~\ref{prop:moment-propagation}, we obtain
\begin{align}
 \big\|k!\lambda^k
 (\Gamma_\lambda^{(k)}-\widetilde\Gamma_\lambda^{(k)})
 \big\|_{\gS^2}^2
 &\lesssim_{d,c_0,k,L}
 \mathfrak a_\lambda\mathfrak r_\lambda^{1/2}e^{2kb_\lambda}
 \sum_{\ell=k}^{2k}\lambda^{2k-\ell}
 \Big[
 \|\gamma_{\nu_0}^{(\ell)}\|_{\gS^2}
 +\mathfrak a_\lambda^\ell
 (\mathfrak a_\lambda^2\mathfrak r_\lambda)^L
 \Big]\nn\\
 &\lesssim_{d,c_0,k,L}
 \mathfrak a_\lambda\mathfrak r_\lambda^{1/2}e^{2kb_\lambda}
 \Big[1+
 \|\gamma_{\nu_0}^{(2k)}\|_{\gS^2}
 +\mathfrak a_\lambda^{2k}
 (\mathfrak a_\lambda^2\mathfrak r_\lambda)^L
 \Big],\label{eq:quantum-final-HS}
\end{align}
where the last line uses $\big\|\gamma_{\nu_0}^{(\ell)}\big\|_{\gS^2}^2\leq\big\|\gamma_{\nu_0}^{(2k)}\big\|_{\gS^2}^{\ell/k}$ for $1\leq\ell\leq 2k$.

\noindent\textbf{Step 2: Classical weights.} For the second term in \eqref{eq:proof-HS-three-comparisons}, Lemma~\ref{lem:classical-entropy-HS} gives
$$
 \big\|\gamma_{f_\lambda}^{(k)}[u_\lambda+v_\lambda]
 -\gamma_{f_{\mathrm c}}^{(k)}[u_\lambda+v_\lambda]\big\|_{\gS^2}^2
 \leq
 2\cH_{\cl}(f_{\lambda}\dd\widehat{\mathbb P}_\lambda,f_{\mathrm c}\dd\widehat{\mathbb P}_\lambda)
 \big\|\gamma_{f_{\lambda}}^{(2k)}[u_\lambda+v_\lambda]+\gamma_{f_{\mathrm c}}^{(2k)}[u_\lambda+v_\lambda]\big\|_{\gS^2}.
$$
Proposition~\ref{prop:estimate-E-lambda-diamond} gives $\cH_{\cl}(f_{\lambda}\dd\widehat{\mathbb P}_\lambda,f_{\mathrm c}\dd\widehat{\mathbb P}_\lambda)\lesssim_{d,c_0}\mathfrak a_\lambda^2\mathfrak r_\lambda$. Using Proposition~\ref{prop:moment-propagation} again, we have
\begin{equation}
 \big\|\gamma_{f_\lambda}^{(k)}[u_\lambda+v_\lambda]
 -\gamma_{f_{\mathrm c}}^{(k)}[u_\lambda+v_\lambda]\big\|_{\gS^2}^2
 \lesssim_{d,c_0,k,L}
 \mathfrak a_\lambda^2\mathfrak r_\lambda
 \Big[1+
 \|\gamma_{\nu_0}^{(2k)}\|_{\gS^2}
 +\mathfrak a_\lambda^{2k}
 (\mathfrak a_\lambda^2\mathfrak r_\lambda)^L
 \Big].
 \label{eq:classical-final-HS}
\end{equation}
\noindent\textbf{Step 3: Change of field.} For the third term in \eqref{eq:proof-HS-three-comparisons}, Lemma~\ref{lem:weighted-Gaussian-addition} yields
$$
 \big\|\gamma_{f_\lambda}^{(k)}[u_\lambda+v_\lambda]
 -\gamma_{f_\lambda}^{(k)}[u_\lambda]\big\|_{\gS^2}
 \leq
 \sum_{j=1}^k\binom kj^2j!\,
 \|\gamma_{f_\lambda}^{(k-j)}[u_\lambda]\|_{\gS^2}
 \|\mathsf C_0-\mathsf C_\lambda\|_{\gS^2}^{j}.
$$
Lemma~\ref{lem:covariance-estimates} gives $ \|\mathsf C_0-\mathsf C_\lambda\|_{\gS^2}^2 \lesssim_d\lambda^2N_{\rm free}, $ which, together with Proposition~\ref{prop:moment-propagation}, gives
\begin{equation}
 \big\|\gamma_{f_\lambda}^{(k)}[u_\lambda+v_\lambda]
 -\gamma_{f_\lambda}^{(k)}[u_\lambda]\big\|_{\gS^2}^2
 \lesssim_{d,c_0,k,L}
 \lambda^2N_{\rm free}
 \Big[
 1+\|\gamma_{\nu_0}^{(2k)}\|_{\gS^2}
 +\mathfrak a_\lambda^{2k}
 (\mathfrak a_\lambda^2\mathfrak r_\lambda)^L
 \Big].\label{eq:field-change-HS}
\end{equation}
Combining\kern0pt{} \eqref{eq:quantum-final-HS}, \eqref{eq:classical-final-HS}, and \eqref{eq:field-change-HS} with \eqref{eq:proof-HS-three-comparisons} proves \eqref{eq:main-HS}.
\end{proof}

\bigskip

\section{\texorpdfstring{Derivation of the local $\Phi^4_d$ measures}{Derivation of the local Phi4d measures}}
\label{sec:derivation-Phi4d}

We prove Corollaries~\ref{cor:Phi42-local} and \ref{cor:Phi43-local-HS} by combining the quantum-to-Hartree limits with the classical approximations. Throughout this section, $\gH=L^2(\bT^d;\CC)$ and $h=\1-\Delta$. For a probability measure $\rho$ on $\gH^{-1}$, we interpret $\gamma_\rho^{(k)}$ by Definition~\ref{def:weighted-correlations} with $s=1$, the identity field on $(\gH^{-1},\mathcal B(\gH^{-1}),\rho)$, and scalar weight $1$.

In dimension two, Wick ordering is taken with respect to $\mu_0$. The local interaction, measure, and partition function are
\begin{align*}
 \mathcal V[u]
 =\frac12\int_{\bT^2}:|u(x)|^4:\dd x,\qquad
 \dd\nu_{\Phi^4_2}(u)
 =z_{\Phi^4_2}^{-1}e^{-\mathcal V[u]}\dd\mu_0(u),
 \qquad
 z_{\Phi^4_2}=\int e^{-\mathcal V[u]}\dd\mu_0(u).
\end{align*}
By \cite[Lemma~4.6]{FKSS23}, in our normalization,\footnote{Under the change of variables $\phi(y)=\sqrt2\,u(2\pi y)$, the covariance becomes $(2\pi^2-\Delta_y/2)^{-1}$ on the unit torus, and the interaction range becomes $\varepsilon/(2\pi)$. The Hartree and local energies in \cite[(3.1), (2.2)]{FKSS23} correspond to $\pi^{-2}\cW_{0,\varepsilon}$ and $\pi^{-2}\mathcal V$, respectively. Their Lemma~4.6, H\"older's inequality, and
$$
 |x^{\pi^2}-y^{\pi^2}|
 \leq\pi^2|x-y|(x^{\pi^2-1}+y^{\pi^2-1}),
 \qquad x,y\geq0,
$$
give \eqref{eq:Phi42-classical-weight-convergence}.}
\begin{equation}\label{eq:Phi42-classical-weight-convergence}
 e^{-\cW_{0,\varepsilon}}
 \longrightarrow e^{-\mathcal V}
 \quad\text{in }L^q(\mu_0),
 \qquad 1\leq q<\infty,
 \qquad \varepsilon\to0^+.
\end{equation}

\begin{proof}[Proof of Corollary~\ref{cor:Phi42-local}]
By \eqref{eq:Phi42-classical-weight-convergence},
\begin{align*}
 |z_{0,\varepsilon}-z_{\Phi^4_2}|
 &\leq
 \|e^{-\cW_{0,\varepsilon}}-e^{-\mathcal V}\|_{L^1(\mu_0)}
 \longrightarrow0,
 \qquad z_{\Phi^4_2}>0,\\
 \Big\|
 \frac{e^{-\cW_{0,\varepsilon}}}{z_{0,\varepsilon}}
 -\frac{e^{-\mathcal V}}{z_{\Phi^4_2}}
 \Big\|_{L^2(\mu_0)}
 &\leq
 \frac{\|e^{-\cW_{0,\varepsilon}}-e^{-\mathcal V}\|_{L^2(\mu_0)}}
 {z_{0,\varepsilon}}
 +
 \frac{|z_{0,\varepsilon}-z_{\Phi^4_2}|}
 {z_{0,\varepsilon}z_{\Phi^4_2}}
 \|e^{-\mathcal V}\|_{L^2(\mu_0)}
 \longrightarrow0.
\end{align*}
Lemma~\ref{lem:Gaussian-weighted-correlations}, applied on $(\gH^{-1},\mathcal B(\gH^{-1}),\mu_0)$ with the identity Gaussian field and $q=2$, gives the Hilbert--Schmidt realizations of the correlations. Applying \eqref{eq:Gaussian-weighted-HS} to the difference of the normalized densities yields
$$\stepcounter{equation}
 \begin{gathered}
  \big\|\gamma_{\nu_\varepsilon}^{(k)}
  -\gamma_{\nu_{\Phi^4_2}}^{(k)}\big\|_{\gS^2}
  \leq
  (4k)^k\|h^{-1}\|_{\gS^2}^k
  \Big\|
  \frac{e^{-\cW_{0,\varepsilon}}}{z_{0,\varepsilon}}
  -\frac{e^{-\mathcal V}}{z_{\Phi^4_2}}
  \Big\|_{L^2(\mu_0)}
  \longrightarrow0.\quad
 \end{gathered}
$$
For $\varepsilon=\lambda^\eta$, $0<\eta<1/2$, Theorem~\ref{thm:Phi42} and these classical limits give
\begin{align*}
 \Big|\log\frac{Z_{\lambda,\lambda^\eta}}{Z_{\rm free}}
 -\log z_{\Phi^4_2}\Big|
 \leq
 \Big|\log\frac{Z_{\lambda,\lambda^\eta}}{Z_{\rm free}}
 -\log z_{0,\lambda^\eta}\Big|
 +
 |\log z_{0,\lambda^\eta}-\log z_{\Phi^4_2}|
 \longrightarrow0,
\end{align*}
and
\begin{align*}
 \big\|k!\lambda^k\Gamma_{\lambda,\lambda^\eta}^{(k)}
 -\gamma_{\nu_{\Phi^4_2}}^{(k)}\big\|_{\gS^2}
 \leq
 \big\|k!\lambda^k\Gamma_{\lambda,\lambda^\eta}^{(k)}
 -\gamma_{\nu_{\lambda^\eta}}^{(k)}\big\|_{\gS^2}
 +
 \big\|\gamma_{\nu_{\lambda^\eta}}^{(k)}
 -\gamma_{\nu_{\Phi^4_2}}^{(k)}\big\|_{\gS^2}
 \longrightarrow0.
\end{align*}
\end{proof}

\medskip

\begingroup\color{red}
We use the bounds \eqref{eq:Phi43-local-weighted-HS} in place of the preceding density comparison.

\begin{proof}[Proof of Corollary~\ref{cor:Phi43-local-HS}]
Fix $k\geq1$ and take $\delta=1/(16k)$. Proposition~\ref{prop:classical-Hartree-correlations} gives the unique positive Hilbert--Schmidt realizations of $\gamma_{\nu_\varepsilon}^{(k)}$ and $(h^{\delta/2})^{\otimes k}\gamma_{\nu_\varepsilon}^{(k)} (h^{\delta/2})^{\otimes k}$ for $0\leq\varepsilon\leq1$, where $\varepsilon=0$ refers to $\nu_{\Phi^4_3}$. For finite Fourier test tensors, \eqref{eq:intrinsic-correlation-form} gives
\begin{align*}
 \gamma_{\nu_\varepsilon}^{(k)}
 =
 (h^{-\delta/2})^{\otimes k}
 \big[(h^{\delta/2})^{\otimes k}
 \gamma_{\nu_\varepsilon}^{(k)}
 (h^{\delta/2})^{\otimes k}\big]
 (h^{-\delta/2})^{\otimes k}.
\end{align*}
Both sides are bounded, so the identity holds on $\bigotimes_{\mathrm{sym}}^k\gH$. In the Fourier basis,
\begin{align*}
 \big\|(\1^{\otimes k}-P_K^{\otimes k})
 (h^{-\delta/2})^{\otimes k}\big\|
 =
 \sup_{\substack{n_1,\ldots,n_k\in\Z^3\\
 \max_j h(n_j)>K}}
 \prod_{j=1}^k h(n_j)^{-\delta/2}
 \leq K^{-\delta/2},\qquad
 \big\|(h^{-\delta/2})^{\otimes k}\big\|\leq1.
\end{align*}
Thus the preceding factorization and \eqref{eq:Phi43-local-weighted-HS} yield
\begin{align}
 \big\|\gamma_{\nu_\varepsilon}^{(k)}
 -P_K^{\otimes k}\gamma_{\nu_\varepsilon}^{(k)}
 P_K^{\otimes k}\big\|_{\gS^2}
 &\leq
 \big\|(\1^{\otimes k}-P_K^{\otimes k})
 \gamma_{\nu_\varepsilon}^{(k)}\big\|_{\gS^2}
 +
 \big\|P_K^{\otimes k}\gamma_{\nu_\varepsilon}^{(k)}
 (\1^{\otimes k}-P_K^{\otimes k})\big\|_{\gS^2}\nn\\
 &\leq
 2K^{-\delta/2}
 \big\|(h^{\delta/2})^{\otimes k}
 \gamma_{\nu_\varepsilon}^{(k)}
 (h^{\delta/2})^{\otimes k}\big\|_{\gS^2}
 \leq C_{k,v,m}K^{-\delta/2}.
 \label{eq:Phi43-local-uniform-tail}
\end{align}
For fixed $K$, \eqref{eq:correlation-compression} and the distributional convergence of correlation functions in \cite[Theorem~2.6]{NZZ25} give
\begin{align*}
 \big\|P_K^{\otimes k}
 \big(\gamma_{\nu_\varepsilon}^{(k)}
 -\gamma_{\nu_{\Phi^4_3}}^{(k)}\big)
 P_K^{\otimes k}\big\|_{\gS^2}^2
 =
 \sum_{\Phi_i,\Phi_j\in\bigotimes_{\mathrm{sym}}^k(P_K\gH)}
 \big|\langle\Phi_i,
 \big(\gamma_{\nu_\varepsilon}^{(k)}
 -\gamma_{\nu_{\Phi^4_3}}^{(k)}\big)
 \Phi_j\rangle\big|^2
 \longrightarrow0.
\end{align*}
where the sum runs over the orthonormal symmetric Fourier basis of $\bigotimes_{\mathrm{sym}}^k(P_K\gH)$. Applying \eqref{eq:Phi43-local-uniform-tail} both to $\nu_\varepsilon$ and to $\nu_{\Phi^4_3}$, we obtain
\begin{align*}
 \big\|\gamma_{\nu_\varepsilon}^{(k)}
 -\gamma_{\nu_{\Phi^4_3}}^{(k)}\big\|_{\gS^2}
 &\leq
 \big\|\gamma_{\nu_\varepsilon}^{(k)}
 -P_K^{\otimes k}\gamma_{\nu_\varepsilon}^{(k)}
 P_K^{\otimes k}\big\|_{\gS^2}
 +
 \big\|P_K^{\otimes k}
 \big(\gamma_{\nu_\varepsilon}^{(k)}
 -\gamma_{\nu_{\Phi^4_3}}^{(k)}\big)
 P_K^{\otimes k}\big\|_{\gS^2}
 +
 \big\|\gamma_{\nu_{\Phi^4_3}}^{(k)}
 -P_K^{\otimes k}\gamma_{\nu_{\Phi^4_3}}^{(k)}
 P_K^{\otimes k}\big\|_{\gS^2}\\
 &\leq2C_{k,v,m}K^{-\delta/2}
 +
 \big\|P_K^{\otimes k}
 \big(\gamma_{\nu_\varepsilon}^{(k)}
 -\gamma_{\nu_{\Phi^4_3}}^{(k)}\big)
 P_K^{\otimes k}\big\|_{\gS^2}.
\end{align*}
Taking $\eps\downarrow 0$ first, and then taking $K\uparrow \infty$ proves \eqref{eq:Phi43-local-classical-HS}.
\end{proof}
\endgroup

\appendix

\bigskip

\section{Mass renormalization and covariance estimates}

\begin{lemma}[Mass renormalization] \label{lem:Gaussian-centered-quadratic-moments} Let $d\in\{2,3\}$, $\lambda\geq0$, and $P_K=\mathds1_{\{h\leq K\}}$. For every bounded operator $A$ on $\gH$, the limit
$$
 \cM_\lambda^A[u]
 =
 \lim_{K\to\infty}
 \Big(
 \langle P_Ku,AP_Ku\rangle
 -\tr_{\gH}(P_KAP_K\mathsf C_\lambda)
 \Big)
$$
exists in $L^p(\mu_\lambda)$ for every $1\leq p<\infty$. In particular, $\cM_{0,K}^A\to\cM_0^A$ in $L^p(\mu_0)$ for the cutoffs in \eqref{eq:classical-mass-cutoff}.

The map $A\mapsto\cM_\lambda^A$ is complex linear, $\int\cM_\lambda^A\dd\mu_\lambda=0$, and $\cM_\lambda^{A^*}=\overline{\cM_\lambda^A}$. Moreover,
\begin{equation}
 \big\|\cM_\lambda^A\big\|_{L^2(\mu_\lambda)}
 =
 \big\|\mathsf C_\lambda^{1/2}A
 \mathsf C_\lambda^{1/2}\big\|_{\gS^2},
 \qquad
 \big\|\cM_\lambda^A\big\|_{L^p(\mu_\lambda)}
 \lesssim
 p\big\|\mathsf C_\lambda^{1/2}A
 \mathsf C_\lambda^{1/2}\big\|_{\gS^2},
 \qquad 1\leq p<\infty.
 \label{eq:Gaussian-mass-limit-Lp}
\end{equation}
Complex linearity and these estimates uniquely specify the masses through the rule
$$
 \cM_\lambda^{|\phi\rangle\langle\psi|}[u]
 =
 \langle u,\phi\rangle\langle\psi,u\rangle
 -
 \langle\psi,\mathsf C_\lambda\phi\rangle,
 \qquad \phi,\psi\in\gH^1,
$$
where the pairings are understood by duality. For $\lambda>0$,
$$
 \cM_\lambda^A[u]
 =
 \langle u,Au\rangle-\tr_{\gH}(A\mathsf C_\lambda)
 \qquad\text{for }\mu_\lambda\text{-almost every }u.
$$

If $u,v$ are jointly Gaussian fields on a fixed probability space $(\gX,\mathcal A,\mathbb P)$, with laws $\mu_\lambda,\mu_{\lambda'}$, respectively, where $\lambda,\lambda'\geq0$, then, for bounded $A,B$,
\begin{equation}
 \big\|\cM_\lambda^A[u]-\cM_{\lambda'}^B[v]
 \big\|_{L^p(\mathbb P)}
 \lesssim
 p\big\|\cM_\lambda^A[u]-\cM_{\lambda'}^B[v]
 \big\|_{L^2(\mathbb P)},
 \qquad 1\leq p<\infty.
 \label{eq:Gaussian-joint-quadratic-moment}
\end{equation}
\end{lemma}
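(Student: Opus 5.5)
The proof reduces everything to the finite-rank truncations $\mathcal M_{\lambda,K}^A[u]:=\langle P_Ku,AP_Ku\rangle-\tr_{\gH}(P_KAP_K\mathsf C_\lambda)$ and two Gaussian facts: an exact variance formula and hypercontractivity in the second Wiener chaos.

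\textbf{Step 1 (variance).} Since $\mu_\lambda$ is centered, complex and circular, with $\mathsf C_\lambda$ diagonal in the basis $(e_n)$, Wick's formula gives, for every finite-rank $B$ with range in $\gH^1$,
\[
\int\Big|\langle u,Bu\rangle-\tr(B\mathsf C_\lambda)\Big|^2\dd\mu_\lambda
=\tr\big(\mathsf C_\lambda B^*\mathsf C_\lambda B\big)
=\big\|\mathsf C_\lambda^{1/2}B\mathsf C_\lambda^{1/2}\big\|_{\gS^2}^2 .
\]
This is the only Gaussian pairing that survives: the circularity relations \eqref{eq:classical-Gaussian-covariances} kill all $\langle\phi,u\rangle\langle\psi,u\rangle$ contractions.

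\textbf{Step 2 (existence of the limit).} Apply Step 1 to $B=P_LAP_L-P_KAP_K$ for $L\ge K$. Then $\mathsf C_\lambda^{1/2}(P_LAP_L-P_KAP_K)\mathsf C_\lambda^{1/2}\to0$ in $\gS^2$, because $\mathsf C_\lambda\le\mathsf C_0=h^{-1}\in\gS^2$ for $d\le3$. Indeed, $\mathsf C_\lambda^{1/2}A\mathsf C_\lambda^{1/2}\in\gS^2$ with norm at most $\|A\|\|\mathsf C_\lambda\|_{\gS^2}$, and compressions by the strongly convergent projections $P_K$ converge in $\gS^2$ (the same argument as in Lemma~\ref{lem:correlation-realization-approximation}). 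Hence the truncations are Cauchy in $L^2(\mu_\lambda)$, and the isometry in \eqref{eq:Gaussian-mass-limit-Lp} passes to the limit.

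To upgrade to $L^p$, note that each truncation lies in the (inhomogeneous) Wiener chaos of order $\le2$ of the underlying real Gaussian process. Nelson's hypercontractivity gives $\|F\|_{L^p}\le(p-1)\|F\|_{L^2}$ for such $F$, $p\ge2$. Applied to the differences, this yields the Cauchy property in every $L^p$ and the bound $\|\mathcal M_\lambda^A\|_{L^p}\lesssim p\|\mathsf C_\lambda^{1/2}A\mathsf C_\lambda^{1/2}\|_{\gS^2}$. For the cutoffs \eqref{eq:classical-mass-cutoff} at $\lambda=0$, $P_K$ commutes with $\mathsf C_0$, so the centering constants coincide.

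\textbf{Step 3 (algebraic properties and uniqueness).} Linearity, $\int\mathcal M_\lambda^A\dd\mu_\lambda=0$, and $\mathcal M_\lambda^{A^*}=\overline{\mathcal M_\lambda^A}$ hold at each cutoff and pass to the limit. The rank-one rule holds at each cutoff with $P_K\phi,P_K\psi$, and $\langle u,P_K\phi\rangle\to\langle u,\phi\rangle$ in $L^p$ by the Gaussian moment bounds, since $\phi\in\gH^1$.

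Uniqueness follows by approximating $A$ by finite linear combinations $A_j$ of matrix units $|e_m\rangle\langle e_n|$ in the norm $\|\mathsf C_\lambda^{1/2}(\cdot)\mathsf C_\lambda^{1/2}\|_{\gS^2}$, as in the proof of Lemma~\ref{lem:barycenter-identity}, and then using \eqref{eq:Gaussian-mass-limit-Lp}.

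For $\lambda>0$, $\mathsf C_\lambda\in\gS^1$, so $u\in\gH$ almost surely. Then $\langle P_Ku,AP_Ku\rangle\to\langle u,Au\rangle$ pointwise and $\tr(P_KAP_K\mathsf C_\lambda)\to\tr(A\mathsf C_\lambda)$. The $L^p$ limit and the pointwise limit therefore agree almost surely, along a subsequence.

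\textbf{Step 4 (joint estimate \eqref{eq:Gaussian-joint-quadratic-moment}).} Realize $u,v$ through one jointly Gaussian family. The truncations $\mathcal M_{\lambda,K}^A[u]-\mathcal M_{\lambda',K}^B[v]$ are then polynomials of degree $\le2$ in a single Gaussian Hilbert space on $(\gX,\mathcal A,\mathbb P)$. Hypercontractivity again gives the $L^p$–$L^2$ comparison with constant $\lesssim p$, and letting $K\to\infty$ via Step 2 finishes the proof.

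\textbf{Main obstacle.} The delicate point is Step 4. Hypercontractivity requires $u$ and $v$ to live in a common Gaussian Hilbert space, which is exactly what joint Gaussianity provides. One must also check that the $L^p$ limits defining $\mathcal M^A_\lambda[u]$ are the same random variables when computed on $(\gX,\mathbb P)$ rather than on the law space. This holds because the limits are determined by finite-dimensional marginals.
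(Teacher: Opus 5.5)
Your proposal is correct and has the same structure as the paper's proof. Both work with finite-rank truncations, use the exact variance identity $\|\mathsf C_\lambda^{1/2}B\mathsf C_\lambda^{1/2}\|_{\gS^2}^2$, show convergence of $P_KAP_K\to A$ in the norm $\|\mathsf C_\lambda^{1/2}(\cdot)\mathsf C_\lambda^{1/2}\|_{\gS^2}$, and rely on an $L^p$--$L^2$ comparison for Gaussian quadratic forms, which is reused for the joint estimate.

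The real difference is how that comparison is obtained. You invoke Nelson's hypercontractivity on the chaos of order at most two, which gives the constant $p-1$. The paper proves it by hand instead. It diagonalizes a centered real quadratic form in a finite Gaussian vector as $\sum_j a_j(X_j^2-1)$. Using $-x-\log(1-x)\le x^2$, it shows $\mathbb E\exp\big(\pm\tfrac14\sum_j a_j(X_j^2-1)\big)\le e^{1/8}$ when $\sum_j a_j^2=1$. It then converts this into $L^p$ bounds via $x^p\le(4p/\mathrm e)^p e^{x/4}$. Your route is shorter and extends verbatim to higher chaoses. The paper's route is self-contained and gives exponential integrability directly.

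There are two smaller differences. First, the paper defines the mass on finite-rank operators generated by $\gH^1$ vectors through the rank-one rule and then extends; you define it as the cutoff limit and verify the rank-one rule afterwards. Second, for $\lambda>0$ you identify the limit by almost-sure pointwise convergence along a subsequence. The paper instead bounds the $L^2$ distance using $\int\|u\|^4\dd\mu_\lambda<\infty$ and dominated convergence. Your argument here is slightly simpler.

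Two small points to state explicitly. Your compression step in Step 2 uses that $P_K$ commutes with $\mathsf C_\lambda$, so that $\mathsf C_\lambda^{1/2}P_KAP_K\mathsf C_\lambda^{1/2}=P_K(\mathsf C_\lambda^{1/2}A\mathsf C_\lambda^{1/2})P_K$. In Step 1, $B$ should be generated by $\gH^1$ vectors on both sides, not merely have range in $\gH^1$, so that $\langle u,Bu\rangle$ is defined on $\gH^{-1}$.
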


\begin{proof}
For finite-rank operators generated by vectors in $\gH^1$, use the rank-one formula and complex linearity. Applying \eqref{eq:Gaussian-field-correlations} at orders one and two gives
$$
 \int\cM_\lambda^A\dd\mu_\lambda
 =
 \tr_{\gH}(A\mathsf C_\lambda)
 -\tr_{\gH}(A\mathsf C_\lambda)=0,
$$
and
$$
 \int\cM_\lambda^A\overline{\cM_\lambda^B}\dd\mu_\lambda
 =
 2\tr_{\gH^{\otimes2}}
 \big[(A\otimes B^*)\Pi_2\mathsf C_\lambda^{\otimes2}\big]
 -
 \tr_{\gH}(A\mathsf C_\lambda)
 \tr_{\gH}(B^*\mathsf C_\lambda)
 =
 \tr_{\gH}(A\mathsf C_\lambda B^*\mathsf C_\lambda).
$$
Taking $B=A$ gives
$$
 \|\cM_\lambda^A\|_{L^2(\mu_\lambda)}^2
 =
 \|\mathsf C_\lambda^{1/2}A
 \mathsf C_\lambda^{1/2}\|_{\gS^2}^2,
$$
which also makes the definition independent of the rank-one decomposition.

For a finite Gaussian vector of zero mean on $(\gX,\mathcal A,\mathbb P)$, diagonalization represents each centered real quadratic form as $\sum_j a_j(X_j^2-1)$ in law, where the $X_j$ are independent standard real Gaussian variables and $a_j\in\R$. For $\sum_j a_j^2=1$, Gaussian integration gives
$$
 \log\int
 \exp\Big(\pm\frac14\sum_j a_j(X_j^2-1)\Big)\dd\mathbb P
 =
 \frac12\sum_j
 \Big[\mp\frac{a_j}{2}
 -\log\Big(1\mp\frac{a_j}{2}\Big)\Big]
 \leq\frac18\sum_j a_j^2=\frac18,
$$
where we used $-x-\log(1-x)\leq x^2$ for $|x|\leq1/2$. Using $x^p\leq(4p/\mathrm e)^p e^{x/4}$, $x\geq0$, we obtain
\begin{align*}
 \Big\|\sum_j a_j(X_j^2-1)\Big\|_{L^p(\mathbb P)}
 &\leq
 \frac{4p}{\mathrm e}
 \Big(
 \int\exp\Big(\frac14\Big|\sum_j a_j(X_j^2-1)\Big|\Big)
 \dd\mathbb P
 \Big)^{1/p}\\
 &\leq
 \frac{4p}{\mathrm e}(2e^{1/8})^{1/p}
 \lesssim
 p\Big\|\sum_j a_j(X_j^2-1)\Big\|_{L^2(\mathbb P)},
\end{align*}
since the squared $L^2$ norm is $2\sum_j a_j^2=2$. Homogeneity gives this estimate for all real coefficients. Applying it to the real and imaginary parts of the finite-rank mass yields
\begin{align*}
 \|\cM_\lambda^A\|_{L^p(\mu_\lambda)}
 &\leq
 \|\Re\cM_\lambda^A\|_{L^p(\mu_\lambda)}
 +\|\Im\cM_\lambda^A\|_{L^p(\mu_\lambda)}
 \lesssim
 p\big(
 \|\Re\cM_\lambda^A\|_{L^2(\mu_\lambda)}
 +\|\Im\cM_\lambda^A\|_{L^2(\mu_\lambda)}
 \big)\\
 &\lesssim
 p\|\cM_\lambda^A\|_{L^2(\mu_\lambda)}
 =
 p\|\mathsf C_\lambda^{1/2}A
 \mathsf C_\lambda^{1/2}\|_{\gS^2}.
\end{align*}

For bounded $A$, H\"older's inequality for Schatten norms gives
$$
 \sum_{m,n\in\Z^d}
 \frac{|\langle e_m,Ae_n\rangle|^2}{h(m)h(n)}
 =
 \|h^{-1/2}Ah^{-1/2}\|_{\gS^2}^2
 \leq
 \|A\|^2\tr_{\gH}h^{-2}<\infty.
$$
Since $P_K$ and $\mathsf C_\lambda$ are diagonal in the Fourier basis and $0\leq\mathsf C_\lambda\leq h^{-1}$,
$$
 \big\|\mathsf C_\lambda^{1/2}
 (A-P_KAP_K)\mathsf C_\lambda^{1/2}\big\|_{\gS^2}^2
 \leq
 \sum_{\substack{m,n\in\Z^d\\ \max\{h(m),h(n)\}>K}}
 \frac{|\langle e_m,Ae_n\rangle|^2}{h(m)h(n)}
 \longrightarrow0.
$$
Consequently, the finite-rank estimate gives
\begin{align*}
 &\big\|\cM_\lambda^{P_KAP_K}
 -\cM_\lambda^{P_JAP_J}\big\|_{L^p(\mu_\lambda)}
 \lesssim
 p\big\|\mathsf C_\lambda^{1/2}
 (P_KAP_K-P_JAP_J)\mathsf C_\lambda^{1/2}\big\|_{\gS^2}\\
 &\quad\leq
 p\big\|\mathsf C_\lambda^{1/2}
 (A-P_KAP_K)\mathsf C_\lambda^{1/2}\big\|_{\gS^2}
 +p\big\|\mathsf C_\lambda^{1/2}
 (A-P_JAP_J)\mathsf C_\lambda^{1/2}\big\|_{\gS^2}
 \longrightarrow0.
\end{align*}
The sequence converges in every finite $L^p(\mu_\lambda)$. These limits agree in $L^1(\mu_\lambda)$ and extend the finite-rank definition to $\cM_\lambda^A$. The rank-one formula gives
$$
 \cM_\lambda^{P_KAP_K}[u]
 =
 \langle P_Ku,AP_Ku\rangle
 -\tr_{\gH}(P_KAP_K\mathsf C_\lambda).
$$
For $\lambda=0$, this equals $\cM_{0,K}^A[u]$ in \eqref{eq:classical-mass-cutoff}. Moreover,
\begin{align*}
 \big\|\cM_\lambda^A
 -\cM_\lambda^{P_KAP_K}\big\|_{L^p(\mu_\lambda)}
 &\lesssim
 p\big\|\mathsf C_\lambda^{1/2}
 (A-P_KAP_K)\mathsf C_\lambda^{1/2}\big\|_{\gS^2}
 \longrightarrow0,\\
 \|\cM_\lambda^A\|_{L^2(\mu_\lambda)}
 &=
 \lim_{K\to\infty}
 \|\mathsf C_\lambda^{1/2}P_KAP_K
 \mathsf C_\lambda^{1/2}\|_{\gS^2}
 =
 \|\mathsf C_\lambda^{1/2}A
 \mathsf C_\lambda^{1/2}\|_{\gS^2}.
\end{align*}
The finite-rank $L^p$ bound gives \eqref{eq:Gaussian-mass-limit-Lp} by the same convergence. Zero mean, complex linearity and the adjoint relation pass to these $L^p$ limits. The displayed approximation estimate also proves uniqueness from the rank-one rule.

For $\lambda>0$, Gaussian pairing gives
$$
 \int\|u\|^4\dd\mu_\lambda(u)
 =
 (\tr_{\gH}\mathsf C_\lambda)^2
 +\tr_{\gH}\mathsf C_\lambda^2<\infty.
$$
Hence
\begin{align*}
 &\Big\|
 \cM_\lambda^{P_KAP_K}
 -\big(\langle u,Au\rangle
 -\tr_{\gH}(A\mathsf C_\lambda)\big)
 \Big\|_{L^2(\mu_\lambda)}\\
 &\quad\leq
 2\|A\|
 \Big(
 \int\|u\|^2\|(1-P_K)u\|^2\dd\mu_\lambda(u)
 \Big)^{1/2}
 +\|A\|\tr_{\gH}((1-P_K)\mathsf C_\lambda)
 \longrightarrow0,
\end{align*}
by dominated convergence and $\tr_{\gH}\mathsf C_\lambda<\infty$. This identifies the limit with the stated centered quadratic form.

Finally, for finite-rank $A,B$ generated by $\gH^1$, the real and imaginary parts of $\cM_\lambda^A[u]-\cM_{\lambda'}^B[v]$ are centered real quadratic forms in a finite jointly Gaussian vector. The quadratic-form estimate above gives \eqref{eq:Gaussian-joint-quadratic-moment} for these operators. For bounded $A,B$, the marginal laws and \eqref{eq:Gaussian-mass-limit-Lp} give
\begin{align*}
 &\Big\|
 \cM_\lambda^A[u]-\cM_{\lambda'}^B[v]
 -\cM_\lambda^{P_KAP_K}[u]
 +\cM_{\lambda'}^{P_KBP_K}[v]
 \Big\|_{L^p(\mathbb P)}
 \leq
 \|\cM_\lambda^{A-P_KAP_K}\|_{L^p(\mu_\lambda)}
 +\|\cM_{\lambda'}^{B-P_KBP_K}\|_{L^p(\mu_{\lambda'})}\\
 &\quad\lesssim
 p\big\|\mathsf C_\lambda^{1/2}
 (A-P_KAP_K)\mathsf C_\lambda^{1/2}\big\|_{\gS^2}
 +
 p\big\|\mathsf C_{\lambda'}^{1/2}
 (B-P_KBP_K)\mathsf C_{\lambda'}^{1/2}\big\|_{\gS^2}
 \longrightarrow0.
\end{align*}
Applying this convergence in $L^p(\mathbb P)$ and $L^2(\mathbb P)$ to the finite-rank inequality proves \eqref{eq:Gaussian-joint-quadratic-moment}.
\end{proof}

\begin{lemma}\label{lem:covariance-estimates} For $0<\lambda\leq1$, $d\in\{2,3\}$, and $c_0(n)=h(n)^{-1}$,
\begin{equation}\label{eq:shifted-covariance-defect}
 \sup_{\ell\in\Z^d}
 \sum_{n\in\Z^d}
 \big[c_0(n)c_0(n+\ell)-c_\lambda(n)c_\lambda(n+\ell)\big]
 \lesssim_d\lambda^2N_{\rm free}.
\end{equation}
Consequently,
\begin{equation}\label{eq:covariance-HS-bound}
 \|\mathsf C_0-\mathsf C_\lambda\|_{\gS^2}^2
 \lesssim_d\lambda^2N_{\rm free},
\end{equation}
and every bounded multiplication operator $A$ satisfies
\begin{equation}
 \|\mathsf C_0^{1/2}A\mathsf C_0^{1/2}\|_{\gS^2}^2
 -\|\mathsf C_\lambda^{1/2}A\mathsf C_\lambda^{1/2}\|_{\gS^2}^2
 \lesssim_d
 \lambda^2N_{\rm free}\|A\|_{L^\infty(\bT^d)}^2.
 \label{eq:covariance-multiplication-bound}
\end{equation}
\end{lemma}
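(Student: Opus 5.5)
The whole lemma follows from the scalar estimate \eqref{eq:shifted-covariance-defect}. I would prove that first and then read off the two consequences.

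\textbf{Pointwise bound.} Set $x=\lambda h(n)\geq\lambda$. Then $c_\lambda(n)=\lambda/(e^{x}-1)$ and $c_0(n)=\lambda/x$, with $0\leq c_\lambda\leq c_0$. The elementary inequality
\[
0\leq\frac1x-\frac1{e^x-1}\leq\min\Big\{\frac12,\frac1x\Big\}
\]
gives $0\leq c_0(n)-c_\lambda(n)\leq\lambda\min\{1,c_0(n)/\lambda\}\lesssim\lambda$. For the product I would write
\[
c_0(n)c_0(n+\ell)-c_\lambda(n)c_\lambda(n+\ell)=\big(c_0-c_\lambda\big)(n)\,c_0(n+\ell)+c_\lambda(n)\big(c_0-c_\lambda\big)(n+\ell).
\]
Bounding $c_0-c_\lambda\lesssim\lambda$ in each term, the sum over $n$ is at most $\lambda\sum_n c_0(n+\ell)+\lambda\sum_n c_\lambda(n)$. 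This is not summable as written, since $\sum_n c_0(n)$ diverges for $d\geq2$.

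\textbf{Main obstacle.} The difficulty is the tails $h(n)\gtrsim\lambda^{-1}$, where the crude bound $c_0-c_\lambda\lesssim\lambda$ is wasteful. There I would instead use $c_0-c_\lambda\leq c_0$ and keep a $c_0$ or $c_\lambda$ factor from the other term. Concretely, I would use
\[
(c_0-c_\lambda)(n)\lesssim\min\{\lambda,h(n)^{-1}\}\leq\lambda^{1/2}h(n)^{-1/2}.
\]
Then the first term is bounded, via Cauchy--Schwarz, by
\[
\lambda^{1/2}\sum_n h(n)^{-1/2}h(n+\ell)^{-1}\lesssim\lambda^{1/2}\Big(\sum_n h(n)^{-1}\min\{\lambda,h(n)^{-1}\}\Big)^{1/2}\Big(\sum_n h(n+\ell)^{-2}\Big)^{1/2}.
\]
A simpler route is to split $n$ into the regions $h(n)\leq\lambda^{-1}$ and $h(n)>\lambda^{-1}$. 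On the first region, $(c_0-c_\lambda)(n)\leq\lambda$ and $c_0(n+\ell)\leq1$, but the number of such $n$ is about $\lambda^{-d/2}$, giving $\lambda^{1-d/2}$. That is too big, so I would pair instead with $\sum_n c_0(n+\ell)(c_0-c_\lambda)(n)\leq\big(\sum_n c_0^2\big)^{1/2}\big(\sum_n(c_0-c_\lambda)^2\big)^{1/2}$. Here $\sum_n(c_0-c_\lambda)^2\lesssim\sum_n\min\{\lambda^2,h(n)^{-2}\}\asymp\lambda^{2-d/2}$. Comparing with $\lambda^2N_{\rm free}$, which is $\asymp\lambda\log(1/\lambda)$ for $d=2$ and $\lambda^{1/2}$ for $d=3$, by \cite[Lemma~B.1]{LNR21}: in $d=3$ the bound $(\lambda^{1/2})^{1/2}=\lambda^{1/4}$ is not enough. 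So the Cauchy--Schwarz route fails, and the decisive step is to use shift-uniformity more cleverly. I would bound $c_0(n+\ell)\leq c_0$ and sum $(c_0-c_\lambda)(n)$ against it by rearrangement: since both sequences are radially decreasing in a suitable sense, the supremum over $\ell$ is essentially attained at $\ell=0$. This yields
\[
\sum_n\min\{\lambda,h(n)^{-1}\}\,h(n)^{-1}\asymp\lambda\sum_{h\leq\lambda^{-1}}h^{-1}+\sum_{h>\lambda^{-1}}h^{-2}.
\]
This is $\asymp\lambda\log(1/\lambda)$ in $d=2$ and $\asymp\lambda^{1/2}$ in $d=3$, which matches $\lambda^2N_{\rm free}$. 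Making this rearrangement (Hardy--Littlewood type) step rigorous on the lattice, up to constants, is where I expect the real work. A dyadic decomposition in $h(n)$ and $h(n+\ell)$ should suffice.

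\textbf{Consequences.} The bound \eqref{eq:covariance-HS-bound} is the case $\ell=0$ together with $(c_0-c_\lambda)^2\leq c_0^2-c_\lambda^2$, which holds because $0\leq c_\lambda\leq c_0$. For \eqref{eq:covariance-multiplication-bound}, write $A$ with Fourier coefficients $\widehat A(\ell)$. Then
\[
\|\mathsf C^{1/2}A\mathsf C^{1/2}\|_{\gS^2}^2=(2\pi)^{-d}\sum_\ell|\widehat A(\ell)|^2\sum_n c(n)c(n+\ell).
\]
The difference between the $\mathsf C_0$ and $\mathsf C_\lambda$ expressions is therefore at most the supremum in \eqref{eq:shifted-covariance-defect} times $(2\pi)^{-d}\sum_\ell|\widehat A(\ell)|^2=\|A\|_{L^2}^2\lesssim\|A\|_{L^\infty}^2$, by Parseval.
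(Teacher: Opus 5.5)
Your pointwise bound $0\leq c_0-c_\lambda\leq\min\{\lambda,h(n)^{-1}\}$ is correct and agrees with the paper. So are the splitting of the product and both consequences: the case $\ell=0$ with $(c_0-c_\lambda)^2\leq c_0^2-c_\lambda^2$, and the Parseval reduction for multiplication operators. The gap is the step that carries all the content of \eqref{eq:shifted-covariance-defect}, namely the shift-uniform bound
\[
 \sup_{\ell\in\Z^d}\sum_{n\in\Z^d}\min\{\lambda,h(n)^{-1}\}\,h(n+\ell)^{-1}\lesssim_d\lambda^2N_{\rm free}.
\]
You only assert that the supremum is ``essentially attained at $\ell=0$'' and postpone the proof. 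There is no lattice rearrangement inequality you can simply quote, and the dyadic argument is not automatic. If you bound the number of $n$ with $|n|\sim2^j$, $|n+\ell|\sim2^i$ by $2^{d\min\{i,j\}}$ and sum over all pairs, you get $\lambda\log^2(1/\lambda)$ in $d=2$, one logarithm too many. To avoid this you must also use that, for $i<j-2$, the two shells meet only when $2^j\sim|\ell|$. Separately, the Cauchy--Schwarz display in your ``main obstacle'' paragraph is false in $d=3$. Its left side $\lambda^{1/2}\sum_nh(n)^{-1/2}h(n+\ell)^{-1}$ diverges, since the summand decays like $|n|^{-3}$, while its right side is finite.

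The paper obtains the uniformity from a Feynman parameter. It uses $\min\{\lambda,h(n)^{-1}\}\leq2(\lambda^{-1}+h(n))^{-1}$ and $\frac1{AB}=\int_0^1[tA+(1-t)B]^{-2}\dd t$. Completing the square, the denominator becomes $[1+t\lambda^{-1}+|n+(1-t)\ell|^2+t(1-t)|\ell|^2]^2$, so $\ell$ only moves the centre. The lattice bound $\sum_n(M+|n+y|^2)^{-2}\lesssim_dM^{d/2-2}$, uniform in $y\in\R^d$ and $M\geq1$, then leaves $\int_0^1(1+t\lambda^{-1})^{d/2-2}\dd t$. This is $\lesssim\lambda\log(2/\lambda)$ for $d=2$ and $\lesssim\lambda^{1/2}$ for $d=3$.

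Your rearrangement idea can also be closed in a few lines by passing to the continuum. For $x$ in the unit cube centred at $n$, one has $h(n)^{-1}\leq C_d(1+|x|^2)^{-1}$ and $h(n+\ell)^{-1}\leq C_d(1+|x+\ell|^2)^{-1}$. So the lattice sum is at most $C_d^2\int_{\R^d}F(x)G(x+\ell)\dd x$, with $F(x)=\min\{\lambda,(1+|x|^2)^{-1}\}$ and $G(x)=(1+|x|^2)^{-1}$. Both are symmetric decreasing, so the Hardy--Littlewood inequality on $\R^d$ bounds this by $\int FG$, which is comparable to the $\ell=0$ sum you computed. As written, however, the central estimate of the lemma is missing.
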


\begin{proof}
The scalar inequality $0\leq t^{-1}-(e^t-1)^{-1}\leq\min\{1,t^{-1}\}$ gives
$$
 0\leq c_0(n)-c_\lambda(n)
 \leq\min\{\lambda,h(n)^{-1}\}
 \leq\frac2{\lambda^{-1}+h(n)}.
$$
Hence
\begin{align*}
 &\sum_n\big[c_0(n)c_0(n+\ell)-c_\lambda(n)c_\lambda(n+\ell)\big]
 \leq
 4\sum_n\frac1{(\lambda^{-1}+h(n))h(n+\ell)}\\
 &=
 4\int_0^1
 \sum_n\frac{\dd t}{
 [1+t\lambda^{-1}+|n+(1-t)\ell|^2+t(1-t)|\ell|^2]^2}\\
 &\lesssim_d
 \int_0^1(1+t\lambda^{-1})^{d/2-2}\dd t
 \lesssim_d
 \begin{cases}
  \lambda\log(2/\lambda),&d=2,\\
  \lambda^{1/2},&d=3.
 \end{cases}
\end{align*}
For the sum in the third line, integration over unit cubes gives
$$
 \sum_n(M+|n+y|^2)^{-2}
 \lesssim_d
 \int_{\R^d}(M+|x|^2)^{-2}\dd x
 \lesssim_d M^{d/2-2},
 \qquad M\geq1,\quad y\in\R^d.
$$
By \cite[Lemma~B.1]{LNR21},
$$
 \lambda^2N_{\rm free}\asymp_d
 \begin{cases}
 \lambda\log(2/\lambda),&d=2,\\
 \lambda^{1/2},&d=3.
 \end{cases}
$$
which proves \eqref{eq:shifted-covariance-defect}. The pointwise bound $(c_0(n)-c_\lambda(n))^2\leq c_0(n)^2-c_\lambda(n)^2$ proves \eqref{eq:covariance-HS-bound}.

For a multiplication operator, Fourier expansion and Parseval give
\begin{align*}
 \|\mathsf C_0^{1/2}A\mathsf C_0^{1/2}\|_{\gS^2}^2
 -\|\mathsf C_\lambda^{1/2}A\mathsf C_\lambda^{1/2}\|_{\gS^2}^2
 &=
 \sum_\ell|\langle e_\ell,Ae_0\rangle|^2
 \sum_n\big[c_0(n)c_0(n+\ell)-c_\lambda(n)c_\lambda(n+\ell)\big]\\
 &\lesssim_d
 \lambda^2N_{\rm free}\|Ae_0\|^2
 \leq\lambda^2N_{\rm free}\|A\|_\infty^2.
\end{align*}
All sums are absolutely convergent, since $\mathsf C_0\in\gS^2$ and $A$ is bounded.
\end{proof}

\bigskip

\section{Berezin--Lieb inequalities for the relative entropy}

The following result is from \cite[Lemma~B.1]{Cub26}.

\medskip

\begin{lemma}[Berezin--Lieb inequalities for the relative entropy]
\label{lem:relative-entropy-Berezin-Lieb}
Let $\mathfrak K$ be a separable Hilbert space, let $M$ be a separable metric space, and let
$$
 M\ni x\longmapsto n_x\in S\mathfrak K
 :=
 \{u\in\mathfrak K:\norm u_{\mathfrak K}=1\}
$$
be continuous.  For every Borel probability measure $\mu$ on $M$, let
$$
 A_\mu
 :=
 \int_M
 \ket{n_x}\bra{n_x}\,\dd\mu(x)
 \in\gS^1(\mathfrak K).
$$
Then, for all Borel probability measures $\mu,\mu'$ on $M$,
$$\stepcounter{equation}
 \cH(A_\mu,A_{\mu'})
 \leq
 \cH_{\cl}(\mu,\mu').
$$
Assume in addition that $\zeta$ is a positive Borel measure on $M$ satisfying
\begin{equation}
 \int_M
 \ket{n_x}\bra{n_x}\,\dd\zeta(x)
 =
 \1_{\mathfrak K}
 \label{eq:relative-entropy-resolution-identity}
\end{equation}
in the weak operator sense.  For every state $A$ on $\mathfrak K$, define its Husimi measure, or lower symbol, by $ \dd m_A(x) := \bangle{n_x,A n_x}\,\dd\zeta(x). $ Then, for all states $A,B$ on $\mathfrak K$,
$$\stepcounter{equation}
 \cH_{\cl}(m_A,m_B)
 \leq
 \cH(A,B).
$$
Consequently, under \eqref{eq:relative-entropy-resolution-identity},
$$\stepcounter{equation}
 \cH_{\cl}(m_{A_\mu},m_{A_{\mu'}})
 \leq
 \cH(A_\mu,A_{\mu'})
 \leq
 \cH_{\cl}(\mu,\mu').
$$
\end{lemma}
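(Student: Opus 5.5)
We would derive both inequalities from the monotonicity (data-processing) property of relative entropy, applied to the two natural channels between classical measures on $M$ and states on $\mathfrak K$. The upper-symbol map $\mu\mapsto A_\mu$ sends probability measures to states. The lower-symbol map $A\mapsto m_A$ sends states to probability measures, by \eqref{eq:relative-entropy-resolution-identity} and $\|n_x\|=1$. Both maps are positive, trace preserving and affine. Each has a commutative algebra as either its domain or its target, so positivity upgrades to complete positivity. The third displayed chain then follows by applying the second inequality to $A=A_\mu$, $B=A_{\mu'}$ and combining it with the first.

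\textbf{First inequality.} We would reduce to discrete measures and avoid general operator-algebraic monotonicity. We may assume $\mu\ll\mu'$; otherwise the right-hand side is $+\infty$. For discrete measures $\mu=\sum_ip_i\delta_{x_i}$ and $\mu'=\sum_iq_i\delta_{x_i}$ on common points, set $P_i=|n_{x_i}\rangle\langle n_{x_i}|$. Joint convexity of the quantum relative entropy in the refined form
$$
 \cH\Big(\sum_ip_i\rho_i,\sum_iq_i\sigma_i\Big)\leq\sum_ip_i\cH(\rho_i,\sigma_i)+\sum_ip_i\log\frac{p_i}{q_i},
$$
applied with $\rho_i=\sigma_i=P_i$, gives $\cH(A_\mu,A_{\mu'})\leq\cH_{\cl}(\mu,\mu')$. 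This holds for finite sums, and extends to countable sums by truncation and lower semicontinuity.

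For general $\mu,\mu'$, we use separability of $M$ to choose a refining sequence of countable Borel partitions $\{B_i^N\}$ whose mesh tends to zero, with base points $x_i^N\in B_i^N$. We define $\mu_N=\sum_i\mu(B_i^N)\delta_{x_i^N}$, and define $\mu'_N$ in the same way. Coarse graining does not increase classical relative entropy, so $\cH_{\cl}(\mu_N,\mu'_N)\leq\cH_{\cl}(\mu,\mu')$. We then have
$$
 \|A_{\mu_N}-A_\mu\|_{\gS^1}\leq\int 2\|n_{x^N(x)}-n_x\|\dd\mu(x)\longrightarrow0
$$
by continuity of $x\mapsto n_x$ and dominated convergence, since the integrand is bounded by $4$; the same holds for $\mu'$. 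Lower semicontinuity of $(A,B)\mapsto\cH(A,B)$ in trace norm then yields the first inequality.

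\textbf{Second inequality.} Here the Husimi map $\Phi(A)=m_A$ has a commutative target. Its dual $\Phi^*(g)=\int g(x)\,|n_x\rangle\langle n_x|\dd\zeta(x)$ is unital by \eqref{eq:relative-entropy-resolution-identity}, and it is completely positive because its domain $L^\infty$ is commutative. Uhlmann's monotonicity theorem, stated for Schwarz maps between von Neumann algebras, then gives $\cH_{\cl}(m_A,m_B)\leq\cH(A,B)$.

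To keep the argument more self-contained, we would reduce to finite partitions as follows. Fix a finite Borel partition $\{B_j\}$ of $M$ and the POVM $E_j=\int_{B_j}|n_x\rangle\langle n_x|\dd\zeta$. Monotonicity of relative entropy under the measurement channel $A\mapsto(\tr AE_j)_j$ gives the bound for the partition-level distributions; this is Lindblad's finite-outcome result, which follows from joint convexity after a Naimark dilation. The Dobrushin formula $\cH_{\cl}=\sup$ over finite partitions then finishes the proof.

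We expect the main obstacle to be this infinite-dimensional measurement step. It needs the Naimark dilation of a countable POVM together with lower semicontinuity to pass to general states, or alternatively a direct citation of Uhlmann's theorem. We would first try the citation route and fall back on the finite-partition argument if needed.
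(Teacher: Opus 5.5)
The paper gives no argument for this lemma; it imports it from \cite[Lemma~B.1]{Cub26}. Your proposal is therefore an independent proof, and it is correct.

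Its main value is in the first half. The upper-symbol bound is the only part of the lemma the paper uses, namely $\cH(\widetilde\Gamma_\lambda,\Gamma_{\rm free})\leq\cH_{\cl}(\nu_\lambda,\mu_\lambda)$ in Proposition~\ref{prop:quantum-entropy-interface}, with $M=\gH$ and $n_u=\xi(u/\sqrt\lambda)$. Your discretization proves this bound from finite-dimensional joint convexity alone, since finitely supported measures give states on the span of finitely many $n_{x_i}$. The other inputs are classical coarse graining under the Borel map $x\mapsto x^N_{i(x)}$, the trace-norm convergence $\|A_{\mu_N}-A_\mu\|_{\gS^1}\to0$, and joint lower semicontinuity of $\cH$.

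Once you cite Uhlmann for the second half, though, the first half follows from the same theorem in one line. With $g=\dd\mu/\dd\mu'$, the states $A_\mu$ and $A_{\mu'}$ are the images of $g$ and $1$ under a map whose dual $X\mapsto\langle n_x,Xn_x\rangle$ is unital and completely positive from $B(\mathfrak K)$ into $L^\infty(\mu')$. So the discretization only pays off if you want to avoid the citation.

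When you apply Uhlmann to the Husimi map, record two small facts. First, $\zeta$ is automatically $\sigma$-finite: for an orthonormal basis $(e_j)$, the sets $\{x:|\langle e_j,n_x\rangle|^2>1/m\}$ have $\zeta$-measure at most $m$ and cover $M$. Second, $\|m_A\|_{L^1(\zeta)}\leq\|A\|_{\gS^1}$, so the dual map is normal. The relative entropy of the resulting normal states on $L^\infty(\zeta)$ is then $\cH_{\cl}(m_A,m_B)$.

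Your fallback also works and is simpler than you fear. Dobrushin's formula involves only finite partitions, hence only finite POVMs. Dilate such a POVM to a projective measurement $\{\Pi_j\}_{j=1}^J$, and write the pinching $\rho\mapsto\sum_j\Pi_j\rho\Pi_j$ as an average of $J$ unitary conjugations. Joint convexity, unitary invariance, and $\cH\geq0$ on the diagonal blocks then give the finite-outcome bound directly. No countable dilation or extra semicontinuity argument is needed.
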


\bigskip

\begingroup\color{red}
\section{Classical Hartree correlations}
\label{app:classical-Hartree-correlations}

We use $\gH=L^2(\bT^d;\CC)$, $h=\1-\Delta$, and $P_K=\1_{\{h\leq K\}}$ as in Subsection~\ref{subsec:classical-field-theory}. The measures $\nu_\varepsilon$ are those of Theorems~\ref{thm:Phi42} and \ref{thm:Phi43}. All correlation forms are interpreted as in Definition~\ref{def:weighted-correlations}, with the identity field on $(\gH^{-1},\mathcal B(\gH^{-1}),\rho)$ and scalar weight $1$, where $\rho$ is the measure appearing in $\gamma_\rho^{(k)}$.

\begin{proposition}
Under Assumption~\ref{ass:Phi42}, there exists $\varepsilon_0=\varepsilon_0(v)\in(0,1]$ such that
$$\stepcounter{equation}
 \sup_{0<\varepsilon\leq\varepsilon_0}
 \|\gamma_{\nu_\varepsilon}^{(k)}\|_{\gS^2}
 \leq C_{k,v},
 \qquad k\geq1.
$$
\end{proposition}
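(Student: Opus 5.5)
The plan is to reduce the bound to a uniform $L^2(\mu_0)$ estimate on the normalized densities and then apply Lemma~\ref{lem:Gaussian-weighted-correlations}. Write $\dd\nu_\varepsilon=F_\varepsilon\dd\mu_0$ with $F_\varepsilon=z_{0,\varepsilon}^{-1}e^{-\cW_{0,\varepsilon}}$. Apply Lemma~\ref{lem:Gaussian-weighted-correlations} on $(\gH^{-1},\mathcal B(\gH^{-1}),\mu_0)$ with the identity Gaussian field of covariance $\mathsf C_0=h^{-1}\in\gS^2$ (since $d=2$), with $q=2$ and $p=2$. This gives
$$
 \|\gamma_{\nu_\varepsilon}^{(k)}\|_{\gS^2}
 \leq(4k)^k\|h^{-1}\|_{\gS^2}^k\,
 \frac{\|e^{-\cW_{0,\varepsilon}}\|_{L^2(\mu_0)}}{z_{0,\varepsilon}}.
$$
It therefore suffices to show two things for $0<\varepsilon\leq\varepsilon_0$: an upper bound $\sup_\varepsilon\|e^{-\cW_{0,\varepsilon}}\|_{L^2(\mu_0)}<\infty$, and a lower bound $\inf_\varepsilon z_{0,\varepsilon}>0$.

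The most economical route uses the classical weight convergence \eqref{eq:Phi42-classical-weight-convergence}, which is quoted from \cite[Lemma~4.6]{FKSS23}. Since $e^{-\cW_{0,\varepsilon}}\to e^{-\mathcal V}$ in $L^2(\mu_0)$ and in $L^1(\mu_0)$, we get $\|e^{-\cW_{0,\varepsilon}}\|_{L^2(\mu_0)}\to\|e^{-\mathcal V}\|_{L^2(\mu_0)}<\infty$ and $z_{0,\varepsilon}\to z_{\Phi^4_2}>0$. Here $z_{\Phi^4_2}>0$ because $e^{-\mathcal V}>0$ almost surely, as $\mathcal V\in L^1(\mu_0)$. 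Hence there is $\varepsilon_0(v)$ such that, for $\varepsilon\leq\varepsilon_0$, both quantities lie within a factor $2$ of their limits. This gives the claimed bound with
$$
 C_{k,v}=2(4k)^k\|h^{-1}\|_{\gS^2}^k\,\|e^{-\mathcal V}\|_{L^2(\mu_0)}\,z_{\Phi^4_2}^{-1}\cdot 2 .
$$
The restriction to $\varepsilon\leq\varepsilon_0$ is exactly where this limiting argument is used.

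For the lower bound on $z_{0,\varepsilon}$ alone, a self-contained argument is also available and gives a bound for all $\varepsilon$. By Jensen's inequality, $z_{0,\varepsilon}\geq e^{-\int\cW_{0,\varepsilon}\dd\mu_0}$. Moreover, $\int\cW_{0,\varepsilon}\dd\mu_0=\tfrac{(2\pi)^2}{2}E_\varepsilon-\tfrac{(2\pi)^2}{2}E_\varepsilon=0$ by the choice of the scalar counterterm, so $z_{0,\varepsilon}\geq1$.

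The main obstacle is the uniform $L^2$ (equivalently $L^q$, $q>1$) bound on $e^{-\cW_{0,\varepsilon}}$. The trivial lower bound \eqref{eq:interaction-lower-bounds} gives only $\cW_{0,\varepsilon}\geq-\tau_\varepsilon^2/(4c_0)-\beta_{0,\varepsilon}$, which diverges like $\log^2(1/\varepsilon)$. If one wants to avoid citing \cite{FKSS23}, I would use a Nelson-type argument. Split $\cW_{0,\varepsilon}$ at a frequency cutoff $K$. The cut-off interaction is bounded below by $-C\log^2K$. The remainder is a fourth-order Wick polynomial whose $L^2(\mu_0)$ norm decays like $K^{-c}$ uniformly in $\varepsilon$, and hypercontractivity gives stretched-exponential tail bounds. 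Combining these yields $\mu_0(\cW_{0,\varepsilon}<-t)\leq e^{-e^{ct^{1/2}}}$ uniformly in $\varepsilon$, which integrates against $e^{2t}$. In the paper I would simply invoke \eqref{eq:Phi42-classical-weight-convergence} and record the resulting $\varepsilon_0$.
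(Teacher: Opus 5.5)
Your proposal is correct and follows the paper's proof: both apply \eqref{eq:Gaussian-weighted-HS} with $q=p=2$ to the normalized density $z_{0,\varepsilon}^{-1}e^{-\cW_{0,\varepsilon}}$. Both then use the $L^2(\mu_0)$ convergence of $e^{-\cW_{0,\varepsilon}}$ from \cite[Lemma~4.6]{FKSS23} to bound $\|e^{-\cW_{0,\varepsilon}}\|_{L^2(\mu_0)}$ above and $z_{0,\varepsilon}$ below for small $\varepsilon$. Your Jensen remark is a correct small bonus: since $\int\cW_{0,\varepsilon}\dd\mu_0=0$, you get $z_{0,\varepsilon}\geq1$ for every $\varepsilon$, so only the uniform $L^2$ upper bound actually needs the limit argument.
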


\begin{proof}
By \cite[Lemma~4.6]{FKSS23}, $z_{0,\varepsilon}\to z_{\Phi^4_2}>0$ and $e^{-\cW_{0,\varepsilon}}$ is bounded in $L^2(\mu_0)$ for sufficiently small $\varepsilon$. Thus \eqref{eq:Gaussian-weighted-HS} with $q=2$ gives, for some $\varepsilon_0=\varepsilon_0(v)\in(0,1]$,
\begin{align*}
 \sup_{0<\varepsilon\leq\varepsilon_0}
 \|\gamma_{\nu_\varepsilon}^{(k)}\|_{\gS^2}
 \leq
 (4k)^k\|h^{-1}\|_{\gS^2}^k
 \sup_{0<\varepsilon\leq\varepsilon_0}
 \frac{\|e^{-\cW_{0,\varepsilon}}\|_{L^2(\mu_0)}}
 {z_{0,\varepsilon}}
 \leq C_{k,v}.
\end{align*}
\end{proof}

\begin{proposition}\label{prop:classical-Hartree-correlations}
For every $k\geq1$ and $0\leq\delta\leq1/(16k)$, the correlation forms have unique positive Hilbert--Schmidt realizations satisfying
\begin{equation}\label{eq:Phi43-local-weighted-HS}
 \sup_{0\leq\varepsilon\leq1}
 \big\|(h^{\delta/2})^{\otimes k}
 \gamma_{\nu_\varepsilon}^{(k)}
 (h^{\delta/2})^{\otimes k}\big\|_{\gS^2}
 \leq C_{k,v,m}.
\end{equation}
\end{proposition}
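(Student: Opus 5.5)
We use the stationary coupling $\pi_\varepsilon$ of \cite[Lemma~4.17]{NZZ25}, with marginals $\nu_\varepsilon$ and $\mu_0$ (for $\varepsilon=0$, $\nu_{\Phi^4_3}$ obtained as the weak limit), together with the uniform remainder bounds $\sup_{0\leq\varepsilon\leq1}\int\|u-z\|_{\gH^{s}}^p\dd\pi_\varepsilon\leq C_{p,v,m}$. These hold for some small $s>0$ (we take $s=1/8$, which the paracontrolled remainder estimates in the proof of \cite[Theorem~4.1]{NZZ25} should allow), and also with $s=0$ as recalled in the remark after Corollary~\ref{cor:Phi43-local-HS}. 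Writing $u=z+r$ with $z\sim\mu_0$ and $r$ the remainder, we regard $u$ as a field on the coupling space with weight $1$ and expand
$$
 \langle\Phi,u^{\otimes k}\rangle=\sum_{j=0}^k\binom kj\langle\Phi,\Pi_k(z^{\otimes(k-j)}\otimes r^{\otimes j})\rangle .
$$
By Lemma~\ref{lem:correlation-realization-approximation}, it suffices to bound $\|(h^{\delta/2})^{\otimes k}\gamma^{(k)}[P_Ku](h^{\delta/2})^{\otimes k}\|_{\gS^2}$ uniformly in $K$. After compressing by $P_K$ everything is finite-dimensional, so $\gamma^{(k)}[P_Ku]$ is an honest positive operator. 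Positivity of the limit realization then follows from positivity of the compressions, and uniqueness follows from the density of test tensors.

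The key structural step is a Cauchy--Schwarz decomposition at the operator level. Since $|\sum_j a_j|^2\leq(k+1)\sum_j|a_j|^2$, we have
$$
 0\leq\gamma^{(k)}[u]\leq(k+1)\sum_{j=0}^k\binom kj^2\,\mathbb E\big[|\Psi_j\rangle\langle\Psi_j|\big],\qquad \Psi_j=\Pi_k(z^{\otimes(k-j)}\otimes r^{\otimes j}).
$$
For positive operators, $0\leq A\leq B$ implies $\|A\|_{\gS^2}\leq\|B\|_{\gS^2}$, and conjugation by $(h^{\delta/2})^{\otimes k}$ preserves order. It therefore suffices to bound each $\mathbb E|H\Psi_j\rangle\langle H\Psi_j|$ in $\gS^2$, where $H=(h^{\delta/2})^{\otimes k}$. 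The term $j=k$ is easy: it is trace class, with trace norm at most $\mathbb E\|h^{\delta/2}r\|^{2k}$, which is finite because $\delta\leq s$. For $j=0$ we use \eqref{eq:gamma-1-k-identity}, which gives $k!(h^{\delta/2}\mathsf C_0h^{\delta/2})^{\otimes k}$. This is Hilbert--Schmidt as soon as $\sum_n h(n)^{2\delta-2}<\infty$ in $d=3$, i.e.\ $\delta<1/4$.

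The mixed terms are the main obstacle, since $z$ and $r$ are dependent. The plan is to bound the HS norm of $\mathbb E|a\otimes b\rangle\langle a\otimes b|$ with $a=(h^{\delta/2}z)^{\otimes(k-j)}$ and $b=(h^{\delta/2}r)^{\otimes j}$. We write its squared HS norm as the double expectation $\mathbb E\,\mathbb E'|\langle a,a'\rangle|^2|\langle b,b'\rangle|^2$ over two independent copies, as in the proof of Lemma~\ref{lem:classical-entropy-HS}. Then $|\langle b,b'\rangle|\leq\|h^{\delta/2}r\|^j\|h^{\delta/2}r'\|^j$, and Hölder gives
$$
 \big(\mathbb E\,\mathbb E'|\langle a,a'\rangle|^{4}\big)^{1/2}\big(\mathbb E\|h^{\delta/2}r\|^{8j}\big)^{1/2}.
$$
The Gaussian factor is a fourth moment of $\langle h^{\delta/2}z,h^{\delta/2}z'\rangle^{k-j}$ for independent copies. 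By hypercontractivity (Gaussian chaos of bounded degree), it is controlled by the second moment, which is $\|(h^{\delta/2}\mathsf C_0h^{\delta/2})^{\otimes(k-j)}\|^2_{\gS^2}$ up to constants and hence finite for $\delta<1/4$. The remaining moment is uniformly bounded by the NZZ estimates. The condition $\delta\leq1/(16k)$ leaves room if the remainder regularity has to be shared: in the worst case we interpolate $\|h^{\delta/2}r\|$ between $\|r\|_{\gH}$ and a slightly positive Sobolev norm of $r$ (or a negative-regularity bound on $u$ with high moments). That interpolation is the step to check carefully against the precise remainder bounds in \cite{NZZ25}.

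Finally, the $\varepsilon=0$ case follows by lower semicontinuity. The compressed operators $(h^{\delta/2}P_K)^{\otimes k}\gamma_{\nu_\varepsilon}^{(k)}(P_Kh^{\delta/2})^{\otimes k}$ converge entrywise by \cite[Theorem~2.6]{NZZ25}, so their HS norms are bounded by the uniform constant. Taking the supremum over $K$ via Lemma~\ref{lem:correlation-realization-approximation} gives \eqref{eq:Phi43-local-weighted-HS} at $\varepsilon=0$. Alternatively, one could apply the same coupling argument directly to the limiting stationary coupling.
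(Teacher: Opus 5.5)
Your overall architecture (the stationary coupling $\pi_\varepsilon$, the split $u=z+r$, Gaussian integration for the pure-noise part, Lemma~\ref{lem:correlation-realization-approximation} for the realization, and passing to the limit at $\varepsilon=0$) is the right one. However, the proof rests on an input you do not establish: uniform bounds on all moments $\int\|u-z\|_{\gH^{1/8}}^p\dd\pi_\varepsilon$. The estimates the paper extracts from \cite{NZZ25} are weaker. They give all moments of $\|u-z\|_{\gH}$, but only a \emph{second} moment of a positive-regularity norm. The reason is that the smoothing of $\psi_l$ (in $\gH^1$) and $\psi_h$ (in $B^{1-2\kappa}_{4,\infty}$) is controlled only after time integration, and stationarity converts $\mathbb E\int_0^1\|\cdot\|^2\dd t$ into $\int\|u-z\|_{\gH^{1/8}}^2\dd\pi_\varepsilon\leq C_{v,m}$.

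The restriction $\delta\leq1/(16k)$ is exactly the price of this. For $16k\delta\leq1$, interpolation gives $\|u-z\|_{\gH^\delta}^{2k}\leq\|u-z\|_{\gH^{1/8}}\|u-z\|_{\gH}^{2k-1}$, and Cauchy--Schwarz then gives a uniform bound on the $2k$-th moment of $\|u-z\|_{\gH^\delta}$. You defer this interpolation as ``the step to check carefully'', and with your bookkeeping it does not close. Your mixed-term bound uses $(\mathbb E\|h^{\delta/2}r\|^{8j})^{1/2}$ with $j$ up to $k-1$. From the available inputs, $\mathbb E\|r\|_{\gH^\delta}^p$ is controlled only when $8\delta p<2$. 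At $\delta=1/(16k)$ and $p=8(k-1)$ this reads $8\delta p=4(k-1)/k$, which is at least $2$ for every $k\geq2$.

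The missing idea is that you never need to confront the dependence between $z$ and $r$. Expand at the scalar level in the two-copy representation. With $(u,z),(u',z')$ independent samples of $\pi_\varepsilon$,
\[
 \big\|(P_Kh^{\delta/2})^{\otimes k}\gamma_{\nu_\varepsilon}^{(k)}(h^{\delta/2}P_K)^{\otimes k}\big\|_{\gS^2}^{1/k}
 =\big\|\langle P_Kh^{\delta/2}u,P_Kh^{\delta/2}u'\rangle\big\|_{L^{2k}(\pi_\varepsilon\otimes\pi_\varepsilon)}.
\]
Write $u=z+r$ and $u'=z'+r'$, and use the triangle inequality in $L^{2k}$. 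Each cross term pairs fields from \emph{different} copies, such as $z$ with $r'$, and these are independent. Conditionally on $(u',z')$, the pairing $\langle P_Kh^{\delta/2}z,P_Kh^{\delta/2}r'\rangle$ is a complex Gaussian with variance at most $\|r'\|_{\gH^\delta}^2$, since $h^{\delta-1}\leq1$. Its $2k$-th moment is therefore at most $k!\int\|u-z\|_{\gH^\delta}^{2k}\dd\pi_\varepsilon$. The $\langle r,r'\rangle$ term follows from Cauchy--Schwarz and independence, and the $\langle z,z'\rangle$ term from \eqref{eq:gamma-1-k-identity} and $\|h^{\delta-1}\|_{\gS^2}<\infty$. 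Only the $2k$-th moment of $\|u-z\|_{\gH^\delta}$ enters, which is exactly what interpolation supplies.

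Your operator Cauchy--Schwarz route can also be repaired. In the H\"older step, independence of the copies gives $(\mathbb E\mathbb E'\|r\|^{4j}\|r'\|^{4j})^{1/2}=\mathbb E\|r\|_{\gH^\delta}^{4j}$, and $p=4(k-1)$ satisfies $8\delta p=2(k-1)/k<2$. Even so, you must derive the moment bound from the actual $L^2(\gH^{1/8})$ and $L^p(\gH)$ inputs rather than assume it. Your treatment of $\varepsilon=0$, by passing to the limit in the compressed finite-dimensional operators, is fine.
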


\begin{proof}
Let $(\Psi^\varepsilon,Z)$ be the jointly stationary process of \cite[Lemma~4.17]{NZZ25}, and set $\pi_\varepsilon=\Law(\Psi^\varepsilon(1),Z(1))$. Its marginals are $\nu_\varepsilon,\mu_0$. We write $\mathbb E$ for expectation under the process law and use $\Psi^\varepsilon-Z=\psi_l+\psi_h+Y$ from \cite[(3.34), (4.50)]{NZZ25}. By \cite[p.~53, proof of Theorem~4.1]{NZZ25} and the estimates for $Y$ in \cite[Lemma~3.3, Proposition~3.4, Tables~1--2, and (3.64)]{NZZ25}, for $p\geq2$,
\begin{align*}
 \int\|u-z\|_{\gH}^p\dd\pi_\varepsilon
 &=\mathbb E\|\psi_l(1)+\psi_h(1)+Y(1)\|_{\gH}^p\\
 &\lesssim_p
 \mathbb E\big(
 \|\psi_l(1)\|_{\gH}^p+\|\psi_h(1)\|_{\gH}^p
 +\|Y(1)\|_{\mathcal C^{1/4}}^p\big)
 \leq C_{p,v,m}.
\end{align*}
For sufficiently small $\kappa>0$, we have the embeddings $B_{4,\infty}^{1-2\kappa},\mathcal C^{1/4} \hookrightarrow\gH^{1/8}$. Joint stationarity, monotone convergence, and \cite[(4.52) and the subsequent estimate for $\psi_h$, p.~53]{NZZ25} therefore yield
\begin{align*}
 \int\|u-z\|_{\gH^{1/8}}^2\dd\pi_\varepsilon
 &=\lim_{N\to\infty}
 \mathbb E\|P_N(\Psi^\varepsilon(1)-Z(1))\|_{\gH^{1/8}}^2
 =\lim_{N\to\infty}\mathbb E\int_0^1
 \|P_N(\Psi^\varepsilon(t)-Z(t))\|_{\gH^{1/8}}^2\dd t\\
 &\lesssim\mathbb E\int_0^1
 \big(\|\psi_l(t)\|_{\gH^1}^2+
 \|\psi_h(t)\|_{B_{4,\infty}^{1-2\kappa}}^2+
 \|Y(t)\|_{\mathcal C^{1/4}}^2\big)\dd t
 \leq C_{v,m}.
\end{align*}
Fix $k\geq1$ and $0\leq\delta\leq1/(16k)$. Sobolev interpolation and Cauchy--Schwarz give
\begin{align}
 \int\|u-z\|_{\gH^\delta}^{2k}\dd\pi_\varepsilon
 &\leq
 \int\|u-z\|_{\gH^{1/8}}\|u-z\|_{\gH}^{2k-1}
 \dd\pi_\varepsilon\nn\\
 &\leq
 \left(\int\|u-z\|_{\gH^{1/8}}^2\dd\pi_\varepsilon\right)^{1/2}
 \left(\int\|u-z\|_{\gH}^{4k-2}\dd\pi_\varepsilon\right)^{1/2}
 \leq C_{k,v,m}.
 \label{eq:Phi43-local-regular-remainder}
\end{align}
For $\phi\in\gH^{1+\delta}$, Gaussian integration therefore gives
\begin{align*}
 \int|\langle\phi,h^{\delta/2}u\rangle|^{2k}\dd\nu_\varepsilon(u)
 \lesssim_k
 \langle\phi,h^{\delta-1}\phi\rangle^k+
 \|\phi\|_{\gH}^{2k}
 \int\|u-z\|_{\gH^\delta}^{2k}\dd\pi_\varepsilon
 \leq C_{k,v,m}\|\phi\|_{\gH}^{2k}.
\end{align*}
This verifies \eqref{eq:correlation-integrability} for $h^{\delta/2}u$ with $s=1+\delta$.

Take two independent samples $(u,z),(u',z')$ of $\pi_\varepsilon$, and write $\mathbb E=\int\dd(\pi_\varepsilon\otimes\pi_\varepsilon)$ in the following estimates. Since $z$ is independent of $(u',z')$ and $h^{\delta-1}\leq1$,
\begin{align*}
 {}\mathbb E|\langle P_Kh^{\delta/2}z,
 P_Kh^{\delta/2}(u'-z')\rangle|^{2k}
 =
 k!\mathbb E\langle P_Kh^{\delta/2}(u'-z'),
 h^{\delta-1}P_Kh^{\delta/2}(u'-z')\rangle^k
 \leq k!\int\|u-z\|_{\gH^\delta}^{2k}\dd\pi_\varepsilon.
\end{align*}
The other mixed term has the same bound. For the purely Gaussian term we use \eqref{eq:gamma-1-k-identity}, and for the remainder term we use Cauchy--Schwarz and independence. The rank-one identity and the $L^{2k}$ triangle inequality give
\begin{align*}
 \big\|(P_Kh^{\delta/2})^{\otimes k}
 \gamma_{\nu_\varepsilon}^{(k)}
 (h^{\delta/2}P_K)^{\otimes k}\big\|_{\gS^2}^{1/k}
 &=
 \left(\mathbb E|\langle P_Kh^{\delta/2}u,
 P_Kh^{\delta/2}u'\rangle|^{2k}
 \right)^{1/(2k)}\\
 &\lesssim_k
 \|h^{\delta-1}\|_{\gS^2}
 +
 \left(\int\|u-z\|_{\gH^\delta}^{2k}\dd\pi_\varepsilon
 \right)^{1/(2k)}
 +
 \left(\int\|u-z\|_{\gH^\delta}^{2k}\dd\pi_\varepsilon
 \right)^{1/k}\\
 &\leq C_{k,v,m},
\end{align*}
where \eqref{eq:Phi43-local-regular-remainder} was used.

The finite-dimensional operators here are the correlation integrals of $P_Kh^{\delta/2}u$ against $\nu_\varepsilon$. Lemma~\ref{lem:correlation-realization-approximation} gives their unique positive Hilbert--Schmidt realization and
$$
 \sup_{0<\varepsilon\leq1}
 \big\|(h^{\delta/2})^{\otimes k}
 \gamma_{\nu_\varepsilon}^{(k)}
 (h^{\delta/2})^{\otimes k}\big\|_{\gS^2}
 \leq C_{k,v,m}.
$$

By \cite[Theorem~2.6]{NZZ25}, $\nu_\varepsilon\rightharpoonup\nu_{\Phi^4_3}$ on $\gH^{-1}$, and the product measures converge weakly as well. For $\phi\in\gH^{1+\delta}$ and fixed $K$, lower semicontinuity gives
\begin{align*}
 {}\int|\langle\phi,h^{\delta/2}u\rangle|^{2k}
 \dd\nu_{\Phi^4_3}(u)
 \leq
 \liminf_{\varepsilon\to0}
 \int|\langle\phi,h^{\delta/2}u\rangle|^{2k}\dd\nu_\varepsilon(u)
 \leq C_{k,v,m}\|\phi\|_{\gH}^{2k},
\end{align*}
which verifies \eqref{eq:correlation-integrability} under $\nu_{\Phi^4_3}$ with $s=1+\delta$. Moreover,
\begin{align*}
 \big\|(P_Kh^{\delta/2})^{\otimes k}
 \gamma_{\nu_{\Phi^4_3}}^{(k)}
 (h^{\delta/2}P_K)^{\otimes k}\big\|_{\gS^2}^{2}
 &=
 \iint|\langle P_Kh^{\delta/2}u,
 P_Kh^{\delta/2}u'\rangle|^{2k}
 \dd\nu_{\Phi^4_3}(u)\dd\nu_{\Phi^4_3}(u')\\
 &\quad\leq
 \liminf_{\varepsilon\to0}
 \big\|(P_Kh^{\delta/2})^{\otimes k}
 \gamma_{\nu_\varepsilon}^{(k)}
 (h^{\delta/2}P_K)^{\otimes k}\big\|_{\gS^2}^{2}
 \leq C_{k,v,m}.
\end{align*}
Lemma~\ref{lem:correlation-realization-approximation} gives the same Hilbert--Schmidt bound at $\varepsilon=0$, proving \eqref{eq:Phi43-local-weighted-HS}.
\end{proof}

\endgroup


\end{document}